\newtheorem{theorem}{Theorem}
\newtheorem{remark}{Remark}
\newtheorem{lemma}{Lemma}
\newtheorem{definition}{Definition}
\newtheorem{corollary}{Corollary}
\begin{document}

\title{The Capacity Region of Multiway Relay Channels Over Finite Fields with Full Data Exchange}
\author{Lawrence Ong, Sarah J. Johnson, and Christopher M. Kellett
}

\maketitle

\begin{abstract}
The multi-way relay channel is a multicast network where $L$ users exchange data through a relay. In this paper, the capacity region of a class of multi-way relay channels is derived, where the channel inputs and outputs take values over finite fields. The cut-set upper bound to the capacity region is derived and is shown to be achievable by our proposed functional-decode-forward coding strategy. More specifically, for the general case where the users can transmit at possibly different rates, functional-decode-forward, combined with rate splitting and joint source-channel decoding, is proved to achieve the capacity region; while for the case where all users transmit at a common rate, rate splitting and joint source-channel decoding are not required to achieve the capacity. That the capacity-achieving coding strategies do not utilize the users' received signals in the users' encoding functions implies that feedback does not increase the capacity region of this class of multi-way relay channels.
\end{abstract}


\section{Introduction}
We consider the multi-way relay channel (MWRC), where $L$ users ($L \geq 2$) exchange data via a relay. Each user is to send its data to all other users. We further consider the case where there is no direct link among the users. So, information exchange among the users can only be done through the relay. Common applications of this model include conference calls in the cellular network where mobile users communicate among themselves through a base station, and satellite communications (see Fig.~\ref{fig:mwrc-example}).

The MWRC is an extension of the two-way relay channel (TWRC) where two users exchange data via a relay (e.g., see \cite{knopp06,rankovwittneben06,rankovwittneben07}). As the TWRC embeds a relay channel, coding strategies designed for the relay channel were modified and attempted on the TWRC. These include:
\begin{itemize}
\item \emph{Complete-decode-forward}\footnote{This strategy is commonly referred to as decode-forward or decode-and-forward. We refer to this strategy as complete-decode-forward to differentiate it from our proposed functional-decode-forward} (CDF): The relay completely decodes the users' messages, and broadcasts them back to the users (see \cite{rankovwittneben06,knopp06,rankovwittneben07}).
\item \emph{Compress-forward}: The relay quantizes its received signals, re-encodes and broadcasts them to the users (see \cite{rankovwittneben06,schnurroechtering07}).
\item \emph{Amplify-forward}: The relay simply scales and forwards what it receives (see \cite{rankovwittneben06,knopp06,rankovwittneben07}). When applied to the Gaussian TWRC, this strategy is also known as \emph{analog network coding}~\cite{kattigollakota07}.
\item Combinations of the above strategies (see \cite{schnurrstanczak08,liutao09}).
\item A combination of \emph{partial-decode-forward} and compress-forward (see \cite{gunduztuncel08}).
\end{itemize}

CDF, compress-forward, and amplify-forward coding strategies for the TWRC have been extended to the Gaussian MWRC by  G\"und\"uz \emph{et al.}~\cite{gunduzyener09}. However, none of these strategies achieve the capacity region of the MWRC in general.

\begin{figure}[t]
\centering
\includegraphics[width=5cm]{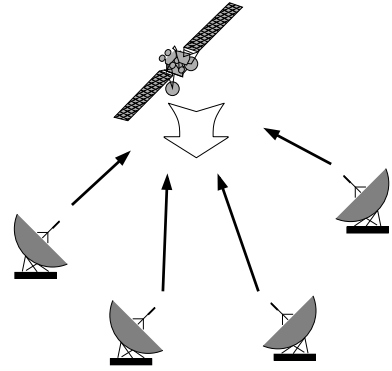}
\caption{An application of the MWRC, where stations exchange information via a satellite}
\label{fig:mwrc-example}
\end{figure}

\subsection{Functional-Decode-Forward}

Recently, \emph{functional-decode-forward} (FDF) has been proposed for the TWRC, where the relay decodes a function of the two users' messages and broadcasts the function back to the users~\cite{knopp07,narayananwilson07,namchung08,namchunglee09,wilsonnarayananpfisersprintson10}. Obviously, the function must be defined such that each user can decode the message of the other user from the function and its own message.
FDF was shown to achieve: (i) the capacity region of the binary TWRC~\cite{knopp07}, where the channels are binary symmetric, and (ii) within $\frac{1}{2}$ bit of the capacity region of the Gaussian TWRC~\cite{namchunglee09}. Linear codes are used in FDF for the binary channel, and lattice codes~\cite{erezzamir04} are used in FDF for the Gaussian channel. FDF for the Gaussian TWRC was extended to the multi-pair Gaussian TWRC (where multiple source-destination pairs exchange data via one relay) by G\"und\"uz \emph{et al.}~\cite{gunduzyener09}.

In the TWRC and the multi-pair TWRC, FDF was designed for pair-wise data exchange. We later proposed FDF for the MWRC (a non-trivial extension of FDF for the TWRC) where multiple users exchange data via a relay at a \emph{common rate}, and showed that FDF achieves the common-rate capacity of the binary MWRC~\cite{ongjohnsonkellett10cl}.
Applying insights from the binary MWRC has allowed us to obtain the common-rate capacity of the Gaussian MWRC with three or more users where all nodes transmit at the same power~\cite{ong10amwrc}.

In this paper, we extend our proposed FDF for the common-rate binary MWRC~\cite{ongjohnsonkellett10cl} to the \emph{general-rate} MWRC over a finite field where the channel inputs and outputs take values over a finite field and where the users can possibly transmit at different rates. Furthermore, unlike \cite{gunduzyener09,ongjohnsonkellett10cl}, we consider the more general \emph{unrestricted} MWRC where each user's encoding function at any time can depend on its own message and its previously received signals.  Note that the binary MWRC is a special case of the MWRC over a finite field.

On the \emph{uplink} (the channel from the users to the relay), we use functional decoding combined with rate splitting. Similar to \cite{ongjohnsonkellett10cl}, linear codes are used here. The main idea behind this generalization (from the binary channel to the finite field channel) relies on the fact that optimal (capacity-achieving) linear codes can be constructed for channels over finite fields. Using linear codes on the uplink, the relay is able to decode a function of the users' codewords, which is also a codeword from the linear code. On the \emph{downlink} (the channel from the relay to the users), the relay needs to send different messages to different users, and so the coding technique for broadcast channels with receiver side information developed by Tuncel \cite{tuncel06} is used, which utilizes joint source-channel decoding. We show that the combination of FDF, rate splitting, and joint source-channel decoding achieves the capacity region of the MWRC over a finite field\footnote{Note that rate splitting and joint source-channel decoding were not required  for the common-rate case in \cite{ongjohnsonkellett10cl}.}.

We shall see later that using the capacity-achieving FDF, the users' transmitted signals only depend on their respective messages and do not depend on their received signals. This means utilizing \emph{feedback} at the users does not increase the capacity region of the MWRC over a finite field.

This, to the best of our knowledge, is the first example of an MWRC where the capacity region is found for all noise distributions/levels. The optimal coding strategy for the MWRC over a finite field proposed in this paper gives insights into optimal processing/coding strategies for other classes of MWRCs.
This work suggests that for the general MWRC, functional decoding should be performed at the relay, and joint source-channel decoding at the users.

On the uplink of MWRCs, the relay receives interfering signals from all the users (see \eqref{eq:def-uplink}). Such networks, where some node(s) receives a function (which can be noisy) of more than one other node's transmission, are usually referred to as \emph{networks with interference}.  Using our proposed FDF, up to two users are allowed to transmit at any time, and the relay attempts to decode a function of the users' messages. Rather than avoiding interference, this coding strategy embraces it and can thus be viewed as a form of \emph{interference alignment}~\cite{viveckjafar08}.

\begin{remark}
Note that linear codes are also used in other types of networks, including the multicast (one source sending data to multiple destinations) network with interference \cite{nazergastpar07,nazergastpar08eu,namchung09arxiv}, the multiple-access channel where the destination is to decode a linear combination of the sources' messages \cite{nazergastpar07,nazergastpar08eu}, and the multi-source multicast network with no interference \cite{danagowaikarpalanki06}. Linear codes have been shown to be optimal (capacity-achieving) in these networks when the channels are themselves linear. Note that the MWRC is not a special case of these networks as it has multiple sources and multiple destinations, and it incorporates interference in its network model. Furthermore, the coding strategy for the uplink developed in this paper is different from existing strategies.
\end{remark}

\subsection{Other Related Work}

A channel model similar to the finite field channel considered in this paper is the deterministic (noiseless) channel. In the deterministic model, the channel output is the arithmetic summation of the \emph{bit-shifted} channel inputs, and there is no noise. The deterministic model has been used to construct coding strategies and to gain insights for more general channels. This approach has been applied to the multiple-access channel~\cite{avestimehrdiggavitse07}, the broadcast channel~\cite{avestimehrdiggavitse07}, the interference channel~\cite{breslertse08,jafarvishwanath10}, the deterministic TWRC~\cite{avestimehrsezgintse08}, and the deterministic multi-pair TWRC~\cite{avestimehrkhajehnejad09itw}. For the deterministic TWRC and the deterministic multi-pair TWRC, it has been shown that linear coding achieves the capacities, an observation similar to that in this paper for the finite field MWRC.

The MWRC we consider herein, where each user is to decode the messages from all other
users, can be seen as a generalization of the TWRC.  Different extensions of the TWRC include:
\begin{itemize}
\item The multi-pair TWRC where multiple source-destination pairs exchange messages via  one relay \cite{avestimehrkhajehnejad09itw,hogowdasun09itw}. Here, each destination only decodes the message from one source. 
\item The multi-pair TWRC where multiple users exchange messages with a base station via a relay \cite{kimsmida10}. Here, each user sends its message to the base station, and the base station sends different messages to each user.
\item The TWRC with additional private messages from the users to the relay~\cite{hogowdasun09icc,hosun10icc}.
\item The MWRC where the users are separated into different groups and all users in each group exchange messages among themselves\cite{gunduzyener09}.
\end{itemize}

The MWRC has also been studied from the point of view of source coding, where multiple users exchange possibly correlated data via a relay. In the source coding setting, the channel from the users to the relay and that from the relay to the users are assumed to be noiseless. The problem formulation is how many bits the users need to encode their respective messages to be sent to the relay; and after the relay receives these encoded messages, how many bits the relay needs to transmit to the users in order for each user to recover the messages of all other users. The three-user lossless case (where each user perfectly reconstructs the other two users' messages) was studied by Wyner \emph{et al.}~\cite{wynerwolf02}, the two-user lossless case and lossy case (where each user reconstructs the other user's message with a prescribed distortion) was studied by Su and El Gamal~\cite{sugamal10}, and the two-user lossy case with common reconstructions  (where each user must also be able to determine the lossy reconstructed message of the other user) was studied by Timo~\emph{et al.}~\cite{roysubmitted}.

\subsection{Organization}

The rest of the paper is organized as follows. In Sec.~\ref{section:channel-model}, we describe the MWRC over a finite field, define the notation used in this paper, and quote a few lemmas that will be used in the later sections. We derive upper bounds to the capacity region and the common-rate capacity of the MWRC over a finite field in Sec.~\ref{section:upper-bound}. We then construct linear codes over finite fields in Sec.~\ref{sec:fields}, which facilitate functional decoding at the relay. We derive the capacity region of the finite field MWRC in Sec.~\ref{section:general-rate}. In Sec.~\ref{section:case-study}, we use the two-user binary MWRC as an example to analyze why neither CDF nor FDF with separate source-channel decoding achieves the capacity region of the MWRC in general. Sec.~\ref{section:conclusion} concludes the paper.

\section{Channel Model}\label{section:channel-model}

\begin{figure}[t]
\centering
\resizebox{8.5cm}{!}{
\begin{picture}(0,0)%
\includegraphics{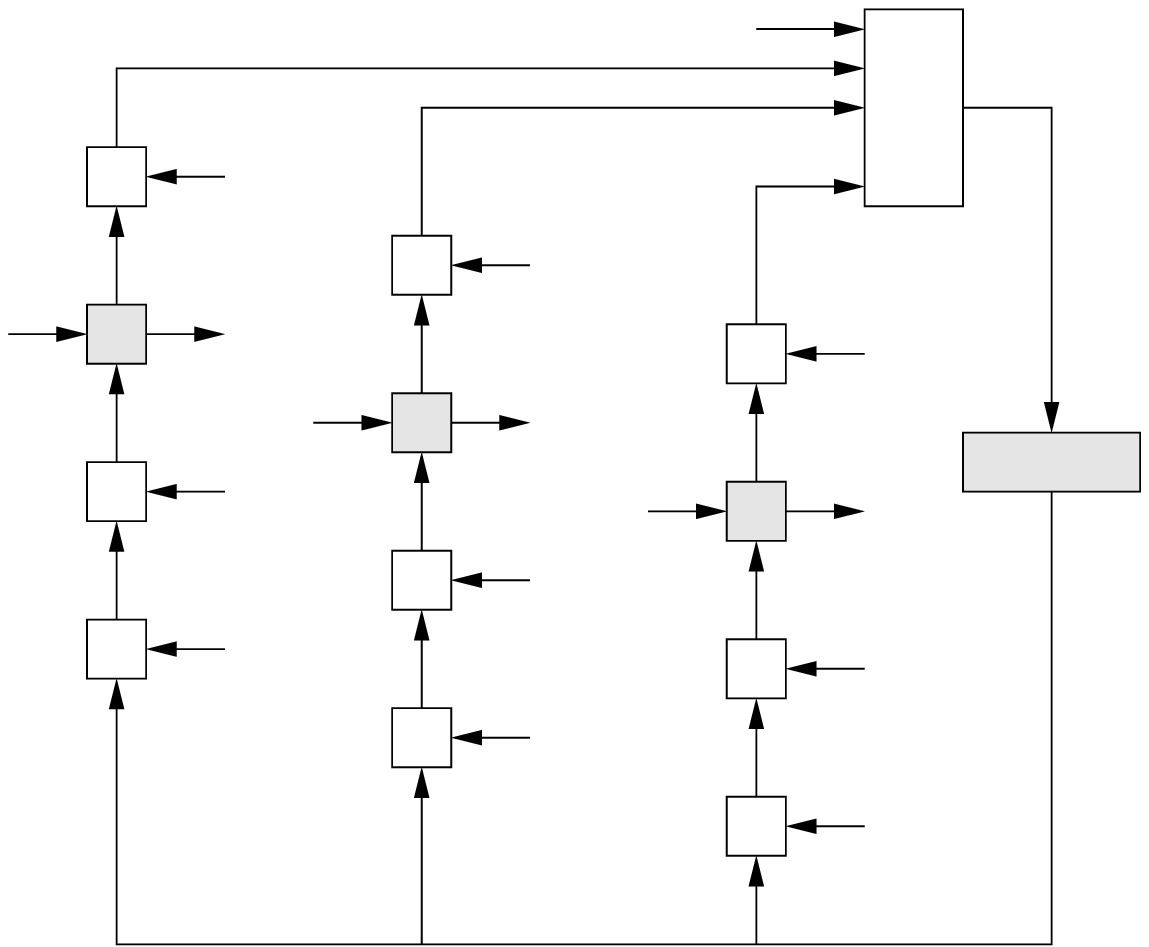}%
\end{picture}%
\setlength{\unitlength}{4144sp}%
\begingroup\makeatletter\ifx\SetFigFont\undefined%
\gdef\SetFigFont#1#2#3#4#5{%
  \fontsize{#1}{#2pt}%
  \fontfamily{#3}\fontseries{#4}\fontshape{#5}%
  \selectfont}%
\fi\endgroup%
\begin{picture}(5472,4311)(301,-3268)
\put(991,-1951){\makebox(0,0)[lb]{\smash{{\SetFigFont{12}{14.4}{\familydefault}{\mddefault}{\updefault}{\color[rgb]{0,0,0}$\odot$}%
}}}}
\put(1621,-1951){\makebox(0,0)[lb]{\smash{{\SetFigFont{12}{14.4}{\familydefault}{\mddefault}{\updefault}{\color[rgb]{0,0,0}$h_{0,1}$}%
}}}}
\put(5041,-1096){\makebox(0,0)[lb]{\smash{{\SetFigFont{12}{14.4}{\familydefault}{\mddefault}{\updefault}{\color[rgb]{0,0,0}$0$  (relay)}%
}}}}
\put(991,-1231){\makebox(0,0)[lb]{\smash{{\SetFigFont{12}{14.4}{\familydefault}{\mddefault}{\updefault}{\color[rgb]{0,0,0}$\oplus$}%
}}}}
\put(1621,-1231){\makebox(0,0)[lb]{\smash{{\SetFigFont{12}{14.4}{\familydefault}{\mddefault}{\updefault}{\color[rgb]{0,0,0}$N_1$}%
}}}}
\put(2386,-1636){\makebox(0,0)[lb]{\smash{{\SetFigFont{12}{14.4}{\familydefault}{\mddefault}{\updefault}{\color[rgb]{0,0,0}$\oplus$}%
}}}}
\put(3016,-1636){\makebox(0,0)[lb]{\smash{{\SetFigFont{12}{14.4}{\familydefault}{\mddefault}{\updefault}{\color[rgb]{0,0,0}$N_2$}%
}}}}
\put(3916,-2761){\makebox(0,0)[lb]{\smash{{\SetFigFont{12}{14.4}{\familydefault}{\mddefault}{\updefault}{\color[rgb]{0,0,0}$\odot$}%
}}}}
\put(4546,-2761){\makebox(0,0)[lb]{\smash{{\SetFigFont{12}{14.4}{\familydefault}{\mddefault}{\updefault}{\color[rgb]{0,0,0}$h_{0,L}$}%
}}}}
\put(316,-511){\makebox(0,0)[lb]{\smash{{\SetFigFont{12}{14.4}{\familydefault}{\mddefault}{\updefault}{\color[rgb]{0,0,0}$W_1$}%
}}}}
\put(1711,-916){\makebox(0,0)[lb]{\smash{{\SetFigFont{12}{14.4}{\familydefault}{\mddefault}{\updefault}{\color[rgb]{0,0,0}$W_2$}%
}}}}
\put(2431,-916){\makebox(0,0)[lb]{\smash{{\SetFigFont{12}{14.4}{\familydefault}{\mddefault}{\updefault}{\color[rgb]{0,0,0}$2$}%
}}}}
\put(1036,-511){\makebox(0,0)[lb]{\smash{{\SetFigFont{12}{14.4}{\familydefault}{\mddefault}{\updefault}{\color[rgb]{0,0,0}$1$}%
}}}}
\put(3241,-1321){\makebox(0,0)[lb]{\smash{{\SetFigFont{12}{14.4}{\familydefault}{\mddefault}{\updefault}{\color[rgb]{0,0,0}$W_L$}%
}}}}
\put(3961,-1321){\makebox(0,0)[lb]{\smash{{\SetFigFont{12}{14.4}{\familydefault}{\mddefault}{\updefault}{\color[rgb]{0,0,0}$L$}%
}}}}
\put(3916,-2041){\makebox(0,0)[lb]{\smash{{\SetFigFont{12}{14.4}{\familydefault}{\mddefault}{\updefault}{\color[rgb]{0,0,0}$\oplus$}%
}}}}
\put(4546,-2041){\makebox(0,0)[lb]{\smash{{\SetFigFont{12}{14.4}{\familydefault}{\mddefault}{\updefault}{\color[rgb]{0,0,0}$N_L$}%
}}}}
\put(2386,-2356){\makebox(0,0)[lb]{\smash{{\SetFigFont{12}{14.4}{\familydefault}{\mddefault}{\updefault}{\color[rgb]{0,0,0}$\odot$}%
}}}}
\put(3016,-2356){\makebox(0,0)[lb]{\smash{{\SetFigFont{12}{14.4}{\familydefault}{\mddefault}{\updefault}{\color[rgb]{0,0,0}$h_{0,2}$}%
}}}}
\put(1171,-916){\makebox(0,0)[lb]{\smash{{\SetFigFont{12}{14.4}{\familydefault}{\mddefault}{\updefault}{\color[rgb]{0,0,0}$Y_1$}%
}}}}
\put(2521,-1321){\makebox(0,0)[lb]{\smash{{\SetFigFont{12}{14.4}{\familydefault}{\mddefault}{\updefault}{\color[rgb]{0,0,0}$Y_2$}%
}}}}
\put(4051,-1726){\makebox(0,0)[lb]{\smash{{\SetFigFont{12}{14.4}{\familydefault}{\mddefault}{\updefault}{\color[rgb]{0,0,0}$Y_L$}%
}}}}
\put(4141,254){\rotatebox{90.0}{\makebox(0,0)[lb]{\smash{{\SetFigFont{12}{14.4}{\familydefault}{\mddefault}{\updefault}{\color[rgb]{0,0,0}$\dotsm$}%
}}}}}
\put(3781,884){\makebox(0,0)[lb]{\smash{{\SetFigFont{12}{14.4}{\familydefault}{\mddefault}{\updefault}{\color[rgb]{0,0,0}$N_0$}%
}}}}
\put(4636,569){\makebox(0,0)[lb]{\smash{{\SetFigFont{12}{14.4}{\familydefault}{\mddefault}{\updefault}{\color[rgb]{0,0,0}$\bigoplus$}%
}}}}
\put(1171,-196){\makebox(0,0)[lb]{\smash{{\SetFigFont{12}{14.4}{\familydefault}{\mddefault}{\updefault}{\color[rgb]{0,0,0}$X_1$}%
}}}}
\put(2521,-601){\makebox(0,0)[lb]{\smash{{\SetFigFont{12}{14.4}{\familydefault}{\mddefault}{\updefault}{\color[rgb]{0,0,0}$X_2$}%
}}}}
\put(4051,-1006){\makebox(0,0)[lb]{\smash{{\SetFigFont{12}{14.4}{\familydefault}{\mddefault}{\updefault}{\color[rgb]{0,0,0}$X_L$}%
}}}}
\put(5086,-196){\makebox(0,0)[lb]{\smash{{\SetFigFont{12}{14.4}{\familydefault}{\mddefault}{\updefault}{\color[rgb]{0,0,0}$Y_0$}%
}}}}
\put(5086,-2401){\makebox(0,0)[lb]{\smash{{\SetFigFont{12}{14.4}{\familydefault}{\mddefault}{\updefault}{\color[rgb]{0,0,0}$X_0$}%
}}}}
\put(3286,-1996){\makebox(0,0)[lb]{\smash{{\SetFigFont{12}{14.4}{\familydefault}{\mddefault}{\updefault}{\color[rgb]{0,0,0}$\dotsm$}%
}}}}
\put(991,209){\makebox(0,0)[lb]{\smash{{\SetFigFont{12}{14.4}{\familydefault}{\mddefault}{\updefault}{\color[rgb]{0,0,0}$\odot$}%
}}}}
\put(2386,-196){\makebox(0,0)[lb]{\smash{{\SetFigFont{12}{14.4}{\familydefault}{\mddefault}{\updefault}{\color[rgb]{0,0,0}$\odot$}%
}}}}
\put(3916,-601){\makebox(0,0)[lb]{\smash{{\SetFigFont{12}{14.4}{\familydefault}{\mddefault}{\updefault}{\color[rgb]{0,0,0}$\odot$}%
}}}}
\put(1621,209){\makebox(0,0)[lb]{\smash{{\SetFigFont{12}{14.4}{\familydefault}{\mddefault}{\updefault}{\color[rgb]{0,0,0}$h_{1,0}$}%
}}}}
\put(3016,-196){\makebox(0,0)[lb]{\smash{{\SetFigFont{12}{14.4}{\familydefault}{\mddefault}{\updefault}{\color[rgb]{0,0,0}$h_{2,0}$}%
}}}}
\put(4546,-601){\makebox(0,0)[lb]{\smash{{\SetFigFont{12}{14.4}{\familydefault}{\mddefault}{\updefault}{\color[rgb]{0,0,0}$h_{L,0}$}%
}}}}
\put(1621,-511){\makebox(0,0)[lb]{\smash{{\SetFigFont{12}{14.4}{\familydefault}{\mddefault}{\updefault}{\color[rgb]{0,0,0}$\hat{\Omega}_{-1}$}%
}}}}
\put(3016,-916){\makebox(0,0)[lb]{\smash{{\SetFigFont{12}{14.4}{\familydefault}{\mddefault}{\updefault}{\color[rgb]{0,0,0}$\hat{\Omega}_{-2}$}%
}}}}
\put(4546,-1321){\makebox(0,0)[lb]{\smash{{\SetFigFont{12}{14.4}{\familydefault}{\mddefault}{\updefault}{\color[rgb]{0,0,0}$\hat{\Omega}_{-L}$}%
}}}}
\end{picture}%
}
\caption{The $L$-user MWRC over a finite field $\mathcal{F}$ with associated addition $\oplus$ and multiplication $\odot$, where $\hat{\Omega}_{-i} \triangleq(\hat{W}_{i,1}, \dotsc, \hat{W}_{i,i-1}, \hat{W}_{i,i+1}, \dotsc, \hat{W}_{i,L})$ is user $i$'s estimate of all other users' messages}
\label{fig:mwrc}
\end{figure}

Fig.~\ref{fig:mwrc} depicts the $L$-user MWRC considered in this paper, where there is no direct user-to-user link. Nodes 1, 2, $\dotsc$, $L$ are the users, and node $0$ the relay. By definition, $L \geq 2$, and each user is to decode the messages from all other users, i.e., the users perform \emph{full data exchange}.  We denote by $X_i$ node $i$'s input to the channel, $Y_i$ the channel output received by node $i$, and $W_i$ node $i$'s message. We assume that the messages are independent. We consider a  full-duplex and causal relay, meaning that the relay can transmit and receive at the same time, and that the transmit signal of the relay at any time can only depend on its past received signals.

\begin{definition}\label{definition:channel}
We define the $L$-user MWRC over a finite field $\mathcal{F}$ (with associated addition $\oplus$, multiplication $\odot$, and the additive identity $\mathfrak{0} \in \mathcal{F}$) as follows:
\begin{itemize}
\item The uplink channel is the \emph{weighted} sum of all users' channel inputs and the relay's receiver noise:
\begin{subequations}
\begin{align}
Y_0 &= \left(\bigoplus_{i=1}^L (h_{i,0} \odot X_i) \right) \oplus N_0\\
& \triangleq (h_{1,0} \odot X_1) \oplus (h_{2,0} \odot X_2) \oplus \dotsm \oplus (h_{L,0} \odot X_L) \nonumber\\
&\quad  \oplus N_0, \label{eq:def-uplink}
\end{align}
\end{subequations}
where $X_i, N_0, Y_0 \in \mathcal{F}$, $h_{i,0} \in \mathcal{F}\setminus \{\mathfrak{0}\}$, $\forall i$, and $N_0$ is the receiver noise and is an independent and identically distributed (i.i.d.) random variable for each channel use. The parameters $h_{i,0}$, $\forall i$, are fixed and are known to all the nodes \emph{a priori}. Recall that $\mathcal{F}$ is a field if and only if $|\mathcal{F}|=\ell^z$ for some prime number $\ell$ and some positive integer $z$. 
\item The downlink consists of independent channels from the relay to the users:
\begin{equation}
Y_i = (h_{0,i} \odot X_0) \oplus N_i, \quad \forall i \in \{1,2,\dotsc,L\},
\end{equation}
where $X_0, N_i, Y_i \in \mathcal{F}$, $h_{0,i} \in \mathcal{F} \setminus \{\mathfrak{0}\}$, $\forall i$, and $N_i$ is the receiver noise at node $i$ and is an i.i.d. random variable for each channel use and for each user $i$. Each $h_{0,i}$ is fixed for all channel uses and is known to node $i$ \emph{a priori}. 
\end{itemize}
\end{definition}

\begin{remark}
The MWRC over a finite field is defined to resemble the wireless additive white Gaussian noise channel where the channel output is the sum of attenuated (usually as a result of path loss, which is inversely proportional to the node distances) channel inputs and noise. However, addition and multiplication over a field do not bear the same practical implication as those over real numbers.
\end{remark}

Let $X_i[t]$ and $Y_i[t]$ denote the transmitted signal and the received signal of user $i$ respectively on the $t$-th channel use.
We consider the following block code of $n$ simultaneous uplink and downlink channel uses, meaning that the relay and all users transmit $X_i[t]$ respectively and simultaneously, for $t \in \{1,2,\dotsc,n\}$.
\begin{definition}
A $(2^{nR_1},2^{nR_2}, \dotsc, 2^{nR_L}, n)$ code for the MWRC consists of
\begin{enumerate}
\item $L$ messages, one for each user: $W_i \in \mathcal{W}_i = \{1,\dotsc,2^{nR_i}\}$, for $i \in \{1,2,\dotsc, L\}$. We denote by $\Omega \triangleq (W_1,W_2,\dotsc,W_L)$ the \emph{message tuple}. 
\item $L$ sets of user encoding functions, one set for each user: $f_{i,t}: \mathcal{W}_i \times \mathcal{F}^{t-1} \rightarrow \mathcal{F}$, such that $X_i[t] = f_{i,t}(W_i,Y_i[1],Y_i[2],\dotsc,Y_i[t-1])$, for  $i \in \{1,2,\dotsc,L\}$, $t \in \{1,2,\dotsc,n\}$. This means that the transmit signal of a user at any time can depend on its message and its previously received signals.
\item A set of relay encoding functions: $f_{0,t}: \mathcal{F}^{t-1} \rightarrow \mathcal{F}$, for $t \in \{1,2,\dotsc,n\}$, such that  $X_0[t] = f_{0,t}(Y_0[1],Y_0[2], \dots, Y_0[t-1])$. This means the transmit signal of the relay at any time can only depend on its previously received signals.
\item $L$ user decoding functions, one for each user: $g_i: \mathcal{F}^n \times \mathcal{W}_i \rightarrow \mathcal{W}_1 \times \dotsm \times \mathcal{W}_{i-1} \times \mathcal{W}_{i+1} \times \dotsm \times \mathcal{W}_L$, such that $\hat{\Omega}_{-i} \triangleq(\hat{W}_{i,1}, \dotsc, \hat{W}_{i,i-1}, \hat{W}_{i,i+1}, \dotsc, \hat{W}_{i,L}) = g_i(\boldsymbol{Y}_i,W_i)$, for $i \in \{1,2,\dotsc,L\}$, where $\hat{W}_{i,j}$ is node $i$'s estimate of $W_j$, and $\boldsymbol{Y}_i = (Y_i[1], Y_i[2],\dotsc,Y_i[n])$. This means each user decodes the messages sent by all other users based on its $n$ received signals and the knowledge of its own message.
\end{enumerate}
\end{definition}

Note that the source message $W_i$, which is an $nR_i$-bit message, is sent from user $i$ to all other nodes (through the relay) in $n$ channel uses, giving a rate of $\frac{nR_i}{n}=R_i$ bits/channel use. We say that user $i$ transmits at the rate $R_i$ bits/channel use.

In this paper, bold letters are used to denote collections of variables across time, e.g., $\boldsymbol{X} = (X[1], X[2],$ $\dotsc, X[k])$, for some integer $k > 1$. The length of the vector will be explicitly mentioned when it is not clear from the context. For a random variable $X$, we use the corresponding lower case $x$ to denote its realization. 

\begin{definition}
Assuming that the message tuple $\Omega \triangleq (W_1,W_2,\dotsc,W_L)$ is uniformly distributed over the product set $\mathfrak{W} \triangleq \mathcal{W}_1 \times \mathcal{W}_2 \times \dotsm \times \mathcal{W}_L$, the \emph{average error probability} for the $(2^{nR_1},2^{nR_2},\dotsc,2^{nR_L},n)$ code is defined as
\begin{subequations}
\begin{align}
P_\text{e} &= \Pr \Big\{\hat{W}_{i,j} \neq W_j, \text{ for some } j \in \{1,2,\dotsc,L\} \nonumber\\
&\quad\quad\quad \text{ and some } i \in\{1,2,\dotsc,L\} \setminus j\Big\} \\
&= \frac{1}{2^{n\sum_{j=1}^L R_j}} \sum_{\omega \in \mathfrak{W}} \Pr \left\{ \bigcup_{i=1}^L \Big\{ \hat{\Omega}_{-i} \neq \omega_{-i}\Big\} \Bigg\vert \Omega = \omega \right\},
\end{align}
\end{subequations}
where $\omega_{-i} = (w_1,\dotsc,w_{i-1},w_{i+1},\dotsc,w_L)$ is defined as $\omega$ without the $i$-th entry.
\end{definition}

\begin{definition}\label{definition:rate}
A \emph{rate tuple} $(R_1,R_2, \dotsc, R_L)$ is said to be \emph{achievable} if, for any $\epsilon > 0$, there is at least one $(2^{nR_1},2^{nR_2},\dotsc,2^{nR_L},n)$ code such that $P_\text{e} < \epsilon$.
\end{definition}

We say that a node can \emph{reliably} decode a message if and only if the average probability that the node wrongly decodes the message can be made arbitrarily small. Hence, the rate tuple $(R_1,R_2, \dotsc, R_L)$ is achievable if each user can reliably decode the messages from all other users.

\begin{definition}\label{definition:capacity}
The \emph{capacity} region $\mathcal{C}$ is defined as the closure of all achievable rate tuples.
\end{definition}

In this paper, we also consider the common-rate case (a special case) where all users transmit at $R = R_i$, $\forall i \in \{1,2,\dotsc,L\}$. We say that the common rate $R$ is achievable if the rate tuple $(R,R,\dotsc,R)$ is achievable. The \emph{common-rate} capacity can be similarly defined:
\begin{definition}
We define the \emph{common-rate capacity} (also known as the symmetrical capacity~\cite{gunduzyener09}) as
\begin{equation}
C \triangleq \sup \{R: (R,R,\dotsc,R) \text{ is achievable}\}.
\end{equation}
\end{definition}

The common rate is useful in systems where all users have the same amount of information to send, or in \emph{fair} systems where every user is to be given the same guaranteed uplink \emph{bandwidth}, i.e., each user can send data up to a certain rate, at which all other users are able to decode.

To simplify equations in this paper, we define
\begin{align}
R_\text{min} &= \min\limits_{j\in \{1,2,\dotsc,L\}} R_j\\
R_i^{\text{c}} &= \left(\sum\limits_{j=1}^LR_j \right) - R_i\\
R_\text{min}^{\text{c}} &= \left( \sum\limits_{j=1}^L R_j \right) - R_\text{min}.
\end{align}

For a random variable $X \in \mathcal{X}$, $H(X) = - \sum_{x \in \mathcal{X}} p(x)\log_2 p(x)$ is the entropy of $X$. We denote the uniform distribution of $X$ by $p^{\text{u}}(x)$.

\subsection{Existing Results}

In this section, we quote existing results that will be used in the later sections in this paper.

First, for a finite field $\mathcal{F}$ with associated operations of addition $\oplus$, multiplication $\odot$, and the additive identity $\mathfrak{0} \in \mathcal{F}$, we have the following lemma due to Jelinek~\cite[Lemma 9.3]{jelinek68}:
\begin{lemma}\label{lemma:jelinek}
Consider a finite field $\mathcal{F}$. We have the following
\begin{enumerate}
\item the equation $a \oplus x = b$ (where $x$ is the unknown) has a unique solution in
$\mathcal{F}$,
\item for each $a \in \mathcal{F}$, the set $\{a \oplus x:x \in \mathcal{F}\}$ is equal to $\mathcal{F}$.
\item the equation $c \odot y = d$ (where $y$ is the unknown) has a unique solution in $\mathcal{F}$ provided $c \neq \mathfrak{0}$.
\item for each $c \in \mathcal{F} \setminus \{\mathfrak{0}\}$, the set $\{ c \odot y: y \in \mathcal{F}\}$ is equal to $\mathcal{F}$.
\end{enumerate}
\end{lemma}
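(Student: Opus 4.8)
The plan is to obtain all four claims from the group structure underlying a finite field: $(\mathcal{F},\oplus)$ is a finite abelian group with identity $\mathfrak{0}$, and $(\mathcal{F}\setminus\{\mathfrak{0}\},\odot)$ is a finite abelian group.

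First I would establish (1). Let $\ominus a$ denote the additive inverse of $a$, which exists since $(\mathcal{F},\oplus)$ is a group. Then $x = (\ominus a)\oplus b$ satisfies $a\oplus x = (a\oplus(\ominus a))\oplus b = \mathfrak{0}\oplus b = b$, so a solution exists; and if $a\oplus x_1 = a\oplus x_2$, adding $\ominus a$ on the left gives $x_1 = x_2$, so the solution is unique. Claim (2) is then immediate: the map $\phi_a\colon x\mapsto a\oplus x$ is a well-defined self-map of $\mathcal{F}$ by closure of $\oplus$, and it is injective by the uniqueness in (1); an injective self-map of a finite set is a bijection, hence surjective, which is exactly the assertion $\{a\oplus x : x\in\mathcal{F}\} = \mathcal{F}$. (Equivalently, one notes directly that any $b\in\mathcal{F}$ equals $a\oplus((\ominus a)\oplus b)$.)

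Next I would establish (3). Since $c\neq\mathfrak{0}$, it has a multiplicative inverse $c^{-1}$ in the group $(\mathcal{F}\setminus\{\mathfrak{0}\},\odot)$. Then $y = c^{-1}\odot d$ gives $c\odot y = (c\odot c^{-1})\odot d = d$, so a solution exists; and if $c\odot y_1 = c\odot y_2$, multiplying on the left by $c^{-1}$ yields $y_1 = y_2$, so it is unique. Here one uses $c\odot\mathfrak{0}=\mathfrak{0}$ to see that the formula and the cancellation also cover the case $d=\mathfrak{0}$, $y=\mathfrak{0}$. Claim (4) then follows as in (2): the map $\psi_c\colon y\mapsto c\odot y$ is injective on the finite set $\mathcal{F}$ by (3), hence a bijection, hence surjective onto $\mathcal{F}$.

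The argument is entirely routine. The only point that deserves a moment's care is in (3): that $c\odot y=\mathfrak{0}$ with $c\neq\mathfrak{0}$ forces $y=\mathfrak{0}$ (the absence of zero divisors, which is part of the field axioms), and that a multiplicative inverse is guaranteed only for nonzero $c$, which is precisely the hypothesis $c\neq\mathfrak{0}$.
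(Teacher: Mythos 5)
Your proof is correct: existence and uniqueness follow from the inverse elements in the additive group $(\mathcal{F},\oplus)$ and the multiplicative group $(\mathcal{F}\setminus\{\mathfrak{0}\},\odot)$, and the set equalities in (2) and (4) follow because an injective self-map of a finite set is a bijection. The paper itself gives no proof of this lemma---it simply cites Jelinek---and your argument is the standard one that would appear there, so there is nothing to reconcile.
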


In this paper, we prove achievability and capacity results based on the properties of the set of jointly $\delta$-typical sequences, which is defined as follows:
\begin{definition}\label{def:typical}
The jointly $\delta$-typical set $\mathcal{A}^n_{[XY]\delta}$ with respect to a distribution $p(x,y)$ on $\mathcal{X}\times\mathcal{Y}$ is the set of sequences $(\boldsymbol{x},\boldsymbol{y})= ( (x_1,y_1),(x_2,y_2),\dotsc,(x_n,y_n) ) \in \mathcal{X}^n \times \mathcal{Y}^n$ such that
\begin{align}
\left\vert -\frac{1}{n} \log_2 p(\boldsymbol{x}) - H(X) \right\vert &< \delta\\
\left\vert -\frac{1}{n} \log_2 p(\boldsymbol{y}) - H(Y) \right\vert &< \delta\\
\left\vert -\frac{1}{n} \log_2 p(\boldsymbol{x},\boldsymbol{y}) - H(X,Y) \right\vert &< \delta,
\end{align}
where $p(\boldsymbol{x},\boldsymbol{y}) = \prod_{i=1}^np(x_i,y_i)$.
The sequences in $\mathcal{A}^n_{[XY]\delta}$ are called jointly $\delta$-typical sequences.
\end{definition}

The jointly $\delta$-typical set has the following properties (taken from \cite[pages 196--197]{coverthomas06}):
\begin{lemma}\label{lemma:jaep}
Let 
\begin{equation}
(\boldsymbol{X},\boldsymbol{Y})=((X_1,Y_1),(X_2,Y_2),\dotsc,(X_n,Y_n)),
\end{equation}
where $(X_i,Y_i)$ are i.i.d. drawn according to $p(x,y)$.
The following holds for sufficiently large $n$:
\begin{equation}
\Pr\left\{ (\boldsymbol{X},\boldsymbol{Y}) \in \mathcal{A}^n_{[XY]\delta} \right\} > 1 - \delta.
\end{equation}
\end{lemma}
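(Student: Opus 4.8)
The plan is to obtain each of the three defining inequalities of $\mathcal{A}^n_{[XY]\delta}$ in Definition~\ref{def:typical} from the weak law of large numbers, and then to combine them with a union bound. Because the pairs $(X_i,Y_i)$ are i.i.d.\ according to $p(x,y)$, the product forms $p(\boldsymbol{X})=\prod_{i=1}^n p(X_i)$, $p(\boldsymbol{Y})=\prod_{i=1}^n p(Y_i)$, and $p(\boldsymbol{X},\boldsymbol{Y})=\prod_{i=1}^n p(X_i,Y_i)$ let us rewrite each normalized log-probability as a sample mean of i.i.d.\ terms, e.g.
\begin{equation}
-\frac{1}{n}\log_2 p(\boldsymbol{X}) = \frac{1}{n}\sum_{i=1}^n \bigl(-\log_2 p(X_i)\bigr),
\end{equation}
and analogously for $-\tfrac1n\log_2 p(\boldsymbol{Y})$ and $-\tfrac1n\log_2 p(\boldsymbol{X},\boldsymbol{Y})$.

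Next I would identify the means of these i.i.d.\ terms: $\mathbb{E}\bigl[-\log_2 p(X_1)\bigr]=H(X)$, $\mathbb{E}\bigl[-\log_2 p(Y_1)\bigr]=H(Y)$, and $\mathbb{E}\bigl[-\log_2 p(X_1,Y_1)\bigr]=H(X,Y)$, all finite since the alphabets are finite. The weak law of large numbers then gives convergence in probability of each sample mean to the corresponding entropy; that is, for every $\delta>0$,
\begin{equation}
\Pr\!\left\{ \left| -\tfrac1n\log_2 p(\boldsymbol{X}) - H(X) \right| \geq \delta \right\} \longrightarrow 0 \quad \text{as } n\to\infty,
\end{equation}
and likewise for the $Y$ and the $(X,Y)$ statements.

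To conclude, fix $\delta>0$. By the three convergences above, there is an $n_0$ such that for all $n\ge n_0$ each of the three "bad" events — the negation of the corresponding inequality in Definition~\ref{def:typical} — has probability less than $\delta/3$. A union bound then bounds the probability that at least one defining inequality fails by less than $\delta$, so the complementary event that all three hold simultaneously, namely $(\boldsymbol{X},\boldsymbol{Y})\in\mathcal{A}^n_{[XY]\delta}$, has probability greater than $1-\delta$, which is exactly the claim.

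I do not expect a real obstacle: the argument is the standard joint-AEP proof (cf.\ \cite[pp.~196--197]{coverthomas06}). The only points needing a little care are splitting the tolerance as $\delta/3$ so that the union bound closes at $1-\delta$, and observing that finiteness of the alphabets makes the entropies — hence the means appearing in the law of large numbers — finite.
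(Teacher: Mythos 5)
Your proof is correct and follows exactly the standard joint-AEP argument (weak law of large numbers applied to the three sample-mean log-probabilities, plus a union bound with tolerance split as $\delta/3$), which is the same approach as the proof in Cover and Thomas that the paper simply cites without reproving. No gaps; the only care points you flag (finite alphabets ensuring finite means, and the $\delta/3$ split) are handled appropriately.
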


\begin{lemma}\label{lemma:jaep-2}
Let $(\tilde{\boldsymbol{X}},\tilde{\boldsymbol{Y}})=((\tilde{X}_1,\tilde{Y}_1),\dotsc,(\tilde{X}_n,\tilde{Y}_n))$ where $(\tilde{X}_i,\tilde{Y}_i)$ are i.i.d. drawn according to $p(x)p(y)$ (where $p(x)$ and $p(y)$ are the marginal probability distribution functions of $p(x,y)$). Then,
\begin{equation}
\Pr \left\{ (\tilde{\boldsymbol{X}},\tilde{\boldsymbol{Y}}) \in \mathcal{A}^n_{[XY]\delta} \right\} \leq 2^{-n(I(X;Y)-3\delta)}.
\end{equation}
\end{lemma}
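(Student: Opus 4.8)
The plan is the standard joint-AEP counting argument, which here is entirely routine. First I would write, using that the $(\tilde X_i,\tilde Y_i)$ are i.i.d.\ $\sim p(x)p(y)$,
\[
\Pr\left\{(\tilde{\boldsymbol{X}},\tilde{\boldsymbol{Y}}) \in \mathcal{A}^n_{[XY]\delta}\right\} = \sum_{(\boldsymbol{x},\boldsymbol{y}) \in \mathcal{A}^n_{[XY]\delta}} p(\boldsymbol{x})\,p(\boldsymbol{y}),
\]
and then bound the three quantities on the right separately. From the defining inequalities in Definition~\ref{def:typical}, every $(\boldsymbol{x},\boldsymbol{y}) \in \mathcal{A}^n_{[XY]\delta}$ satisfies $p(\boldsymbol{x}) \leq 2^{-n(H(X)-\delta)}$ and $p(\boldsymbol{y}) \leq 2^{-n(H(Y)-\delta)}$.

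Second, I would bound the cardinality of the typical set by $\left|\mathcal{A}^n_{[XY]\delta}\right| \leq 2^{n(H(X,Y)+\delta)}$. This follows from
\[
1 \;\geq\; \Pr\left\{(\boldsymbol{X},\boldsymbol{Y}) \in \mathcal{A}^n_{[XY]\delta}\right\} = \sum_{(\boldsymbol{x},\boldsymbol{y})\in\mathcal{A}^n_{[XY]\delta}} p(\boldsymbol{x},\boldsymbol{y}) \;\geq\; \left|\mathcal{A}^n_{[XY]\delta}\right|\, 2^{-n(H(X,Y)+\delta)},
\]
where the last inequality again uses Definition~\ref{def:typical}, namely $p(\boldsymbol{x},\boldsymbol{y}) \geq 2^{-n(H(X,Y)+\delta)}$ on the typical set. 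Note this step only needs the trivial fact that a probability is at most one, not Lemma~\ref{lemma:jaep}.

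Finally, combining the three bounds gives
\[
\Pr\left\{(\tilde{\boldsymbol{X}},\tilde{\boldsymbol{Y}}) \in \mathcal{A}^n_{[XY]\delta}\right\} \leq 2^{n(H(X,Y)+\delta)}\cdot 2^{-n(H(X)-\delta)}\cdot 2^{-n(H(Y)-\delta)} = 2^{-n(H(X)+H(Y)-H(X,Y)-3\delta)},
\]
which equals $2^{-n(I(X;Y)-3\delta)}$ since $I(X;Y) = H(X)+H(Y)-H(X,Y)$. There is no genuine obstacle here; the only point requiring care is the bookkeeping of the three $\delta$-slacks so that they combine to exactly $3\delta$, and observing that the claimed bound in fact holds for every $n$, not merely for $n$ sufficiently large.
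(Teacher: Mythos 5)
Your proof is correct and is exactly the standard joint-AEP argument: the paper does not prove this lemma itself but quotes it from Cover and Thomas, and your derivation (summing $p(\boldsymbol{x})p(\boldsymbol{y})$ over the typical set, bounding each marginal probability by $2^{-n(H(X)-\delta)}$ and $2^{-n(H(Y)-\delta)}$ via Definition~\ref{def:typical}, and bounding $\left\vert\mathcal{A}^n_{[XY]\delta}\right\vert$ by $2^{n(H(X,Y)+\delta)}$ from the trivial bound $\Pr \leq 1$) reproduces precisely that textbook proof. Your side remark that the bound holds for every $n$, with no ``sufficiently large $n$'' qualifier, is also accurate and consistent with the cited statement.
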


Next, we have the following theorem due to Tuncel~\cite{tuncel06} for the broadcast channel with receiver side information.
\begin{theorem}\label{theorem:joint-source-channel-coding}
Consider a broadcast channel $p(y_1,y_2,\dotsc,y_L|x_0)$ where node 0 is the source and nodes 1, 2, $\dotsc$, $L$ are receivers. Node 0 is to send a message $\boldsymbol{U}= (U^{(1)},U^{(2)}\dotsc,U^{(n_s)})$ to all the receivers, and each receiver $i$ has side information $\boldsymbol{S}_i= (S_i^{(1)},S_i^{(2)},\dotsc,S_i^{(n_s)})$ \emph{a priori}. Each $(U^{(v)},S_1^{(v)},S_2^{(v)},\dotsc,S_L^{(v)})$ is i.i.d. according to $p(u,s_1,s_2,\dotsc,s_L)$, for all $v \in \{1,2,\dotsc,n_s\}$. The source transmits $\boldsymbol{X}_0(\boldsymbol{U})$ as a function of $\boldsymbol{U}$ in $n$ channel uses. Each receiver $i$ can reliably decode $\boldsymbol{U}$, from its $n$ received channel outputs $\boldsymbol{Y}_i$ and its side information $\boldsymbol{S}_i$, if $n_s$ and $n$ are sufficiently large and if
\begin{equation}
H(U|S_i) <  \frac{n}{n_s} I(X_0;Y_i),\quad \forall i \in \{1,2,\dotsc,L\}, \label{eq:tuncel}
\end{equation}
for some $p(x_0)$.
\end{theorem}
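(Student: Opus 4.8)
The plan is to establish this as a joint source--channel coding result in the style of Slepian--Wolf-over-broadcast coding: random binning at the source encoder together with joint source--channel typicality decoding at each receiver. Fix a single $p(x_0)$ for which all $L$ inequalities in \eqref{eq:tuncel} hold, and small constants $\epsilon,\delta>0$. I would generate a channel codebook of $2^{n_s R_b}$ codewords $\boldsymbol{x}_0(m)$, each i.i.d.\ according to $\prod_{t=1}^n p(x_0)$, where $R_b \triangleq \max_i H(U|S_i) + \epsilon$, and independently assign every block $\boldsymbol{u}\in\mathcal{U}^{n_s}$ to one of the $2^{n_s R_b}$ bins uniformly at random. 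The encoder, on observing $\boldsymbol{u}$, transmits $\boldsymbol{x}_0(\mathrm{bin}(\boldsymbol{u}))$ when $\boldsymbol{u}$ is $\delta$-typical (and a default codeword otherwise). Receiver $i$, holding $\boldsymbol{y}_i$ and $\boldsymbol{s}_i$, searches for the unique $\hat{\boldsymbol{u}}$ with $(\hat{\boldsymbol{u}},\boldsymbol{s}_i)$ jointly $\delta$-typical w.r.t.\ $p(u,s_i)$ and $(\boldsymbol{x}_0(\mathrm{bin}(\hat{\boldsymbol{u}})),\boldsymbol{y}_i)$ jointly $\delta$-typical w.r.t.\ $p(x_0)p(y_i|x_0)$, declaring an error if no unique such $\hat{\boldsymbol{u}}$ exists. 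Observe that this scheme makes the transmitted signal a function of $\boldsymbol{U}$ alone, matching the hypothesis of the theorem.

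For the error analysis at receiver $i$, averaging over the random codebook and binning, I would bound the probabilities of: $(E_0)$ $\boldsymbol{u}$ or $\boldsymbol{s}_i$ atypical, vanishing by the AEP of Lemma~\ref{lemma:jaep}; $(E_1)$ the true $\boldsymbol{u}$ failing one of the two typicality tests, vanishing by the AEP applied to the source pair $(U,S_i)$ and to the transmitted channel input--output pair $(X_0,Y_i)$ respectively; $(E_2)$ some $\boldsymbol{u}'\neq\boldsymbol{u}$ in the \emph{same} bin being jointly typical with $\boldsymbol{s}_i$, whose expected count is at most $2^{n_s(H(U|S_i)+\delta')}\cdot 2^{-n_s R_b}$ and hence $\to 0$ precisely because $R_b>H(U|S_i)$; and $(E_3)$ some $\boldsymbol{u}'\neq\boldsymbol{u}$ in a \emph{different} bin being jointly typical with $\boldsymbol{s}_i$ while its (fresh, independent) codeword is jointly typical with $\boldsymbol{y}_i$, whose expected count is at most $2^{n_s(H(U|S_i)+\delta')}\cdot 2^{-n(I(X_0;Y_i)-3\delta)}$ by the counting bound on sequences jointly typical with a typical $\boldsymbol{s}_i$ together with Lemma~\ref{lemma:jaep-2}, and hence $\to 0$ exactly under the hypothesis $H(U|S_i)<\frac{n}{n_s}I(X_0;Y_i)$. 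A union bound over the $L$ receivers, with $\delta$ chosen small and $n_s$ (with the ratio $n/n_s$ held at its given value) chosen large, drives the ensemble-average error probability below any target, so a good deterministic code exists.

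The delicate points, as opposed to the routine typicality estimates, will be: (i) verifying the independence used in $(E_3)$, namely that when $\mathrm{bin}(\boldsymbol{u}')\neq\mathrm{bin}(\boldsymbol{u})$ the codeword $\boldsymbol{x}_0(\mathrm{bin}(\boldsymbol{u}'))$ is independent of $(\boldsymbol{u},\boldsymbol{s}_i,\boldsymbol{y}_i,\boldsymbol{u}')$, so that $\boldsymbol{y}_i$ acts as an independent $p(y_i)$-sequence and Lemma~\ref{lemma:jaep-2} applies; (ii) seeing that a single binning rate $R_b=\max_i H(U|S_i)+\epsilon$ simultaneously kills $(E_2)$ at every receiver and that no \emph{upper} bound on $R_b$ is needed --- because the joint decoder never recovers the bin index from the channel output alone, only the $\approx 2^{n_s H(U|S_i)}$ bins compatible with $\boldsymbol{s}_i$ enter the channel-side union bound, which is exactly why the channel constraint is $H(U|S_i)<\frac{n}{n_s}I(X_0;Y_i)$ rather than something involving $R_b$; and (iii) dealing with the integrality of $n$ relative to $n_s$ at a fixed ratio, which only perturbs lower-order terms.
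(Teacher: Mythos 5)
Your proposal is sound, and it is worth noting that the paper itself does not prove this statement at all---it is quoted verbatim from Tuncel's work \cite{tuncel06}, so the relevant comparison is with Tuncel's own achievability argument. Tuncel's scheme dispenses with explicit binning: every source block $\boldsymbol{u}\in\mathcal{U}^{n_s}$ is assigned its own independently drawn length-$n$ codeword $\boldsymbol{x}_0(\boldsymbol{u})\sim\prod_t p(x_0)$, and each receiver performs exactly your two-fold joint-typicality test (source side with $\boldsymbol{s}_i$, channel side with $\boldsymbol{y}_i$); the random codebook then performs the binning ``virtually,'' so your event $(E_2)$ does not exist and no rate parameter $R_b$ has to be chosen, while the union bound over the $\approx 2^{n_s H(U|S_i)}$ competitors compatible with $\boldsymbol{s}_i$, each hitting an independent codeword with probability $\leq 2^{-n(I(X_0;Y_i)-3\delta)}$ (Lemma~\ref{lemma:jaep-2}), gives precisely \eqref{eq:tuncel}. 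Your explicit-binning variant with $R_b=\max_i H(U|S_i)+\epsilon$ is a correct alternative: the same-bin event $(E_2)$ is killed by $R_b>H(U|S_i)$, the cross-bin event $(E_3)$ by the hypothesis, and your observation (ii)---that the bin index is never decoded from the channel alone, so no constraint of the form $R_b<\frac{n}{n_s}\min_i I(X_0;Y_i)$ arises---is exactly the insight that separates this joint scheme from suboptimal separate source-channel coding. What your route buys is an exponentially smaller codebook and a transparent statement of that insight; what Tuncel's route buys is the elimination of the extra parameter and of $(E_2)$. The only ingredient you use that is not among the paper's quoted lemmas is the standard conditional-typicality counting bound $\big|\{\boldsymbol{u}':(\boldsymbol{u}',\boldsymbol{s}_i)\ \delta\text{-typical}\}\big|\leq 2^{n_s(H(U|S_i)+\delta')}$, which is routine, and your independence check in $(E_3)$ (codebook drawn independently of the bin assignment, so $\boldsymbol{x}_0(\mathrm{bin}(\boldsymbol{u}'))$ is independent of $\boldsymbol{y}_i$ whenever $\mathrm{bin}(\boldsymbol{u}')\neq\mathrm{bin}(\boldsymbol{u})$) is stated correctly.
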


To show achievability in the Theorem~\ref{theorem:joint-source-channel-coding}, joint source-channel decoding is utilized in the sense that each receiver uses its side information in the channel decoding.

We will use the above result for the downlink of the MWRC in Sec.~\ref{section:general-rate}.  On the downlink, the relay transmits a function of the users' messages that it has decoded on the uplink. Each user $i$ decodes the function sent by the relay from its received symbols and its own message $W_ i$ as side information.

\section{Upper Bounds to The Capacity Region and The Common-Rate Capacity}\label{section:upper-bound}

In this section, we derive cut-set upper bounds to the capacity region and the common-rate capacity of the MWRC over a finite field. A cut-set upper bound to the capacity region of a network is the maximum rate that information can be transferred across a \emph{cut} separating two disjoint sets of nodes, assuming that all nodes on each side of the cut can fully cooperate~\cite[page 591]{coverthomas06}.

\begin{theorem}\label{theorem:general-upper-bound}
Consider the $L$-user MWRC over a finite field $\mathcal{F}$. If the rate tuple $(R_1,R_2,\dotsc,R_L)$ is achievable, then
\begin{align}
R_\text{min}^{\text{c}} &\leq \log_2|\mathcal{F}| - H(N_0) \label{eq:mwrc-ub-1}\\
R_i^{\text{c}} &\leq \log_2|\mathcal{F}| - H(N_i), \quad \forall i \in \{1,2,\dotsc,L\}. \label{eq:mwrc-ub-2}
\end{align}
\end{theorem}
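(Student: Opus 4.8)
The plan is to derive the two bounds as cut-set bounds, handling the uplink bound \eqref{eq:mwrc-ub-1} and the downlink bounds \eqref{eq:mwrc-ub-2} separately, since they correspond to two different families of cuts in the network.

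First I would establish \eqref{eq:mwrc-ub-2}. Fix an index $i$ and consider the cut that isolates user $i$ from everybody else, i.e.\ $\mathcal{S} = \{i\}$ on one side and $\mathcal{S}^{\text{c}} = \{0,1,\dotsc,L\}\setminus\{i\}$ on the other. Across this cut, the only channel carrying information \emph{into} $\mathcal{S}^{\text{c}}$ is irrelevant; the relevant cut is the one where user $i$'s messages must flow \emph{to} the other $L-1$ users. Actually the cleaner choice is $\mathcal{S} = \{0,1,\dotsc,L\}\setminus\{i\}$ sending to $\{i\}$: user $i$ must decode all of $W_1,\dotsc,W_{i-1},W_{i+1},\dotsc,W_L$, a total of $R_i^{\text{c}}$ bits/use, and the only link into user $i$ is the downlink channel $Y_i = (h_{0,i}\odot X_0)\oplus N_i$. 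By Fano's inequality, $nR_i^{\text{c}} = H(\Omega_{-i}) \le I(\Omega_{-i};\boldsymbol{Y}_i, W_i) + n\epsilon_n \le I(\Omega_{-i};\boldsymbol{Y}_i \mid W_i) + n\epsilon_n \le \sum_{t=1}^n I(X_0[t];Y_i[t]) + n\epsilon_n$ by the usual single-letterization (using that $X_0[t]$ is a function of the past relay outputs, which in turn depend only on the messages, and that the channel is memoryless). Then $I(X_0[t];Y_i[t]) = H(Y_i[t]) - H(Y_i[t]\mid X_0[t]) = H(Y_i[t]) - H(N_i) \le \log_2|\mathcal{F}| - H(N_i)$, where the last step uses that $Y_i[t]\in\mathcal{F}$ so its entropy is at most $\log_2|\mathcal{F}|$, and that conditioned on $X_0[t]$, Lemma~\ref{lemma:jelinek} (parts 1 and 2, plus the independence of $N_i$ from the inputs) gives $H(Y_i[t]\mid X_0[t]=x_0) = H(N_i)$. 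Dividing by $n$ and letting $\epsilon_n\to 0$ yields \eqref{eq:mwrc-ub-2}.

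Next I would establish \eqref{eq:mwrc-ub-1}, the uplink bound. Here the relevant cut separates the relay (together with whichever user achieves $R_{\text{min}}$) from the rest. Let $k = \arg\min_j R_j$, and take $\mathcal{S} = \{k,0\}$... but more directly: the bottleneck is the uplink into the relay. Consider the cut $\mathcal{S}^{\text{c}} = \{1,\dotsc,L\}$ sending to $\mathcal{S} = \{0\}$ — but this isn't quite what gives $R_{\text{min}}^{\text{c}}$. The right cut pairs the relay with user $k$ on one side: $\mathcal{S} = \{0, k\}$, $\mathcal{S}^{\text{c}} = \{1,\dotsc,L\}\setminus\{k\}$. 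The users in $\mathcal{S}^{\text{c}}$ must convey $W_j$ for all $j\ne k$ — that is $R_k^{\text{c}} = R_{\text{min}}^{\text{c}}$ bits/use — to the side $\mathcal{S}$, and the only link crossing the cut into $\mathcal{S}$ is the uplink $Y_0 = \bigl(\bigoplus_{i=1}^L h_{i,0}\odot X_i\bigr)\oplus N_0$ (the downlink channels go from $0$ into the other side, so they don't help $\mathcal{S}$). By Fano and single-letterization, $nR_{\text{min}}^{\text{c}} \le \sum_{t=1}^n I\bigl(X_{\mathcal{S}^{\text{c}}}[t]; Y_0[t] \mid X_k[t], X_0[t]\bigr) + n\epsilon_n$, and each term is $H\bigl(Y_0[t]\mid X_k[t],X_0[t]\bigr) - H\bigl(Y_0[t]\mid X_1[t],\dotsc,X_L[t],X_0[t]\bigr) = H\bigl(Y_0[t]\mid X_k[t],X_0[t]\bigr) - H(N_0) \le \log_2|\mathcal{F}| - H(N_0)$, again because $Y_0[t]\in\mathcal{F}$ and because once all $X_i[t]$ are fixed the only randomness in $Y_0[t]$ is $N_0$, using Lemma~\ref{lemma:jelinek} parts 1--4 to see that $\bigoplus_i h_{i,0}\odot X_i$ is a determined field element. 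Dividing by $n$ gives \eqref{eq:mwrc-ub-1}.

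The main obstacle, and the part deserving the most care, is the single-letterization step: I must justify that for the unrestricted (feedback-enabled) code, the chain of mutual-information inequalities collapses to $\sum_t I(\cdot;\cdot)$ evaluated at per-symbol channel inputs, despite the encoders $f_{i,t}$ depending on past \emph{received} signals. This is the standard cut-set argument (cf.\ \cite[Theorem 15.10.1]{coverthomas06}): the key facts are that the channel is memoryless and that, across any cut, the inputs on the sending side at time $t$ are functions of the messages on that side and of the \emph{past} channel outputs on that side, so conditioning appropriately and using the memoryless property lets each term be bounded by a single-letter mutual information maximized over input distributions. I would cite \cite[page 591]{coverthomas06} for the general cut-set principle and spell out only the two specific cuts above, noting that the entropy maximum $\log_2|\mathcal{F}|$ is what replaces the usual $\log_2$ of the alphabet size and that the conditional entropy of the channel output given all its inputs equals $H(N_0)$ or $H(N_i)$ exactly — no inequality — because of the group structure guaranteed by Lemma~\ref{lemma:jelinek}.
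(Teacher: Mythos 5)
Your proposal is correct and takes essentially the same route as the paper: both inequalities are cut-set bounds for the same two cuts (the relay grouped with a user for \eqref{eq:mwrc-ub-1}, user $i$ isolated for \eqref{eq:mwrc-ub-2}), with each per-letter term reduced to an output entropy of at most $\log_2|\mathcal{F}|$ minus the exact noise entropy via the field structure of Lemma~\ref{lemma:jelinek}; the paper's step of choosing independent uniform inputs to maximize the cut-set expressions is just another way of stating your direct bound $H(Y)\le\log_2|\mathcal{F}|$. The only cosmetic slip is your parenthetical that the relay's past received signals ``depend only on the messages'' (they also depend on $N_0$), but the Markov relation you actually need, $(\Omega,Y_i^{t-1})\rightarrow X_0[t]\rightarrow Y_i[t]$, holds regardless, so the argument stands.
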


\begin{proof}[Proof of Theorem~\ref{theorem:general-upper-bound}]
Consider a network of $m$ nodes, in which node $i$ sends information at the rate $R_{i,j}$ to node $j$. If the set of rates $\{R_{i,j}\}$  are achievable, there exists some joint probability distribution $p(x_1,x_2,\dotsc,x_m)$ such that the sum rate across a cut is constrained by~\cite[Theorem 15.10.1]{coverthomas06}
\begin{equation}
\sum_{i \in \mathcal{S}, j \in \mathcal{S}^{\text{c}}} R_{i,j} \leq I(X_\mathcal{S};Y_{\mathcal{S}^{\text{c}}} | X_{\mathcal{S}^{\text{c}}} ),
\end{equation}
for all $\mathcal{S} \subset \{1,2,\dotsc,m\}$. Here $X_{\mathcal{S}} = \{X_i: i \in \mathcal{S}\}$, and $\mathcal{S}^{\text{c}} = \{1,2,\dotsc,m\} \setminus \mathcal{S}$.

First, we consider the cut separating $\mathcal{S} = \{1,2,\dotsc,L\} \setminus \{i\}$ for some $i \in \{1,2,\dotsc,L\}$, and $\mathcal{S}^{\text{c}}= \{0,i\}$. The total information flow from $\mathcal{S}$ to $\mathcal{S}^{\text{c}}$ is $(W_1,W_2,\dotsc,W_{i-1},W_{i+1},\dotsc,W_L)$ with the sum rate of $\sum_{j=1, j \neq i}^L R_j = R_i^{\text{c}}$. We have the following rate constraint on $R_i^{\text{c}}$, for each $i \in \{1,2,\dotsc,L\}$:
\begin{subequations}
\begin{align}
R_i^{\text{c}} &\leq I(X_\mathcal{S}; Y_{\mathcal{S}^{\text{c}}} | X_{\mathcal{S}^{\text{c}}} )\\
&=  H(Y_{\mathcal{S}^{\text{c}}}|X_{\mathcal{S}^{\text{c}}}) - H(Y_{\mathcal{S}^{\text{c}}}|X_\mathcal{S},X_{\mathcal{S}^{\text{c}}} )\\
&=  H(Y_0,Y_i|X_0,X_i ) - H(Y_0,Y_i|X_{\{0,1\dotsc,L\}})  \label{eq:ub-1}\\
&= H\left( \left[ \bigoplus_{j\in\mathcal{S}} (h_{j,0} \odot X_j) \right] \oplus N_0,N_i \right) - H(N_0,N_i)\\
&= H\left( \left[ \bigoplus_{j\in\mathcal{S}} (h_{j,0} \odot X_j) \right] \oplus N_0\right) + H(N_i) - H(N_0) \nonumber\\
&\quad- H(N_i) \label{eq:indep-noise}\\
&= H\left( \left[ \bigoplus_{j\in\mathcal{S}} (h_{j,0} \odot X_j) \right] \oplus N_0\right) - H(N_0),\label{eq:binary-ub-1}
\end{align}
\end{subequations}
where \eqref{eq:indep-noise} is because $\left( [\bigoplus_{i\in\mathcal{S}} X_i ]\oplus N_0\right)$ and $N_i$ are statistically independent, so are $N_0$ and $N_i$.

Now, we consider the  cut separating $\mathcal{S} = \{0,1,2,\dotsc,L\} \setminus \{i\}$ for some $i \in \{1,2,\dotsc,L\}$, and $\mathcal{S}^{\text{c}}= \{i\}$. The total information flow from $\mathcal{S}$ to $\mathcal{S}^{\text{c}}$ is again $(W_1,W_2,\dotsc,W_{i-1},W_{i+1},\dotsc,W_L)$ with the sum rate of $R_i^{\text{c}}$. We have the following rate constraint on $R_i^{\text{c}}$, for each $i \in \{1,2,\dotsc,L\}$.
\begin{subequations}
\begin{align}
R_i^{\text{c}}
&\leq I(X_\mathcal{S}; Y_{\mathcal{S}^{\text{c}}} | X_{\mathcal{S}^{\text{c}}} )\\
&=  H(Y_{\mathcal{S}^{\text{c}}}|X_{\mathcal{S}^{\text{c}}}) - H(Y_{\mathcal{S}^{\text{c}}}|X_\mathcal{S},X_{\mathcal{S}^{\text{c}}} ) \\
&= H(Y_i|X_i ) - H(Y_i|X_{\{0,1\dotsc,L\}}) \label{eq:ub-2}\\
&= H((h_{0,1} \odot X_0) \oplus N_i) - H(X_0 \oplus N_i | X_{\{0,1\dotsc,L\}})\\
&= H((h_{0,1} \odot X_0) \oplus N_i) - H(N_i). \label{eq:binary-ub-2}
\end{align}
\end{subequations}

All achievable rate tuples must be bounded by the two constraints \eqref{eq:binary-ub-1} and \eqref{eq:binary-ub-2} for all $i$ and for some $p(x_0,x_1,\dotsc,x_L)$. Note that $H(N_i)$, $\forall i$, only depends on the respective noise distributions and does not depend on the choice of input distribution $p(x_0,x_1,\dotsc,x_L)$.

For any discrete random variable $X \in \mathcal{F}$, the maximum of $H(X)$ is $\log_2|\mathcal{F}|$ and is attained by the uniform distribution $p^{\text{u}}(x)$ \cite[Theorem 2.6.4]{coverthomas06}.
For a random variable $N \in \mathcal{F}$ and a constant $h \in \mathcal{F} \setminus \{\mathfrak{0}\}$, from Lemma~\ref{lemma:jelinek}, there is a bijective (one-to-one and onto) mapping from $X$ to $Y=[(h \odot X) \oplus N]$. So, if $p(x)$ is a uniform distribution, then for any $N=n$, $p(y|n)$ is a uniform distribution. Averaged over all $n$, $p(y) = \sum_{n \in \mathcal{F}} p(y|n)p(n)$ is also a uniform distribution. So, choosing the independent and uniform distribution $p(x_0,x_1,\dotsc,x_L)=p^{\text{u}}(x_0)p^{\text{u}}(x_1)\dotsm p^{\text{u}}(x_L)$ simultaneously maximizes \eqref{eq:binary-ub-1} and \eqref{eq:binary-ub-2} for all $i \in \{0,1,\dotsc,L\}$, giving
\begin{align}
R_i^{\text{c}} &\leq \log_2|\mathcal{F}| - H(N_0) \label{eq:mwrc-ub-4}\\
R_i^{\text{c}} &\leq \log_2|\mathcal{F}| - H(N_i), \label{eq:mwrc-ub-3}
\end{align}
for all $i \in \{1,2,\dotsc,L\}$. Eqn.~\eqref{eq:mwrc-ub-4} can be further simplified to $R_\text{min}^{\text{c}} \triangleq \max_{i \in \{1,2,\dotsc,L\}} R_i^{\text{c}} \leq \log_2|\mathcal{F}| - H(N_0)$. This gives Theorem~\ref{theorem:general-upper-bound}.
\end{proof}

For the common rate case, we have the following upper bound on the common-rate capacity:
\begin{corollary}\label{theorem:upper-bound}
Consider the $L$-user MWRC over a finite field $\mathcal{F}$. The common-rate capacity is upper-bounded by
\begin{equation}
C \leq \frac{1}{L-1}\left(\log_2 |\mathcal{F}| - \max\limits_{i \in \{0,1,\dotsc, L\}} H(N_i)\right).
\end{equation}
\end{corollary}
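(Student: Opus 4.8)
The plan is to obtain this as an immediate specialization of Theorem~\ref{theorem:general-upper-bound} to the common-rate case. First I would invoke the definition of an achievable common rate: if $R$ is an achievable common rate, then the rate tuple $(R,R,\dotsc,R)$ is achievable, so the cut-set bounds \eqref{eq:mwrc-ub-1} and \eqref{eq:mwrc-ub-2} hold with $R_j = R$ for every $j \in \{1,2,\dotsc,L\}$.

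Next I would evaluate the auxiliary quantities in this symmetric setting. Since all the $R_j$ coincide, $R_\text{min} = R$ and, for every $i$, $R_i^{\text{c}} = \left(\sum_{j=1}^L R_j\right) - R_i = LR - R = (L-1)R$; likewise $R_\text{min}^{\text{c}} = (L-1)R$. Substituting into \eqref{eq:mwrc-ub-1} gives $(L-1)R \leq \log_2|\mathcal{F}| - H(N_0)$, and substituting into \eqref{eq:mwrc-ub-2} gives $(L-1)R \leq \log_2|\mathcal{F}| - H(N_i)$ for each $i \in \{1,2,\dotsc,L\}$.

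Then I would combine these $L+1$ inequalities by keeping the most restrictive one, i.e. the one with the largest noise entropy, obtaining $(L-1)R \leq \log_2|\mathcal{F}| - \max_{i \in \{0,1,\dotsc,L\}} H(N_i)$. Dividing by $L-1$ (which is positive, since $L \geq 2$ by definition) yields $R \leq \frac{1}{L-1}\left(\log_2|\mathcal{F}| - \max_{i \in \{0,1,\dotsc,L\}} H(N_i)\right)$ for every achievable common rate $R$. Taking the supremum over all achievable common rates and recalling that $C$ is precisely this supremum gives the claimed bound.

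I do not expect any real obstacle: the statement is a direct corollary, and the only care needed is the bookkeeping that the maximum over the full list $H(N_0), H(N_1), \dotsc, H(N_L)$ is exactly what arises when one merges the relay-cut bound \eqref{eq:mwrc-ub-1} with the $L$ user-cut bounds \eqref{eq:mwrc-ub-2} in the symmetric case.
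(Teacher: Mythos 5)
Your proposal is correct and follows exactly the paper's own route: specialize Theorem~\ref{theorem:general-upper-bound} with $R_i = R$ so that $R_\text{min}^{\text{c}} = R_i^{\text{c}} = (L-1)R$, then merge the $L+1$ constraints into a single maximum and divide by $L-1$. The extra bookkeeping you mention (supremum over achievable common rates, positivity of $L-1$) is exactly the routine detail the paper leaves implicit.
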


\begin{proof}[Proof of Corollary~\ref{theorem:upper-bound}]
Under the constraint $R= R_i$, $\forall i\in \{1,2,\dotsc,L\}$, we have $R_\text{min}^{\text{c}} = R_i^{\text{c}} = (L-1)R$, $\forall i$. So, \eqref{eq:mwrc-ub-1} and \eqref{eq:mwrc-ub-2} in Theorem~\ref{theorem:general-upper-bound} simplify to $(L-1)R \leq \log_2 |\mathcal{F}| - H(N_i)$, for $i \in \{0,1,\dotsc,L\}$.
\end{proof}

\section{Fields and Linear Codes} \label{sec:fields}

Random linear codes will be employed by the users to transmit their respective source messages to the relay in the FDF coding strategy. Using random linear codes, for any two  messages the corresponding codewords are statistically independent, and the summation of these two codewords is also a codeword with the same structure and properties as the original codewords. With this, the relay will be able to decode the summation of two codewords to obtain the desired function of the source messages without needing to decode the individual messages.
In this section, we present a construction of \emph{random}
linear codes with elements from finite fields, and prove in Theorem~\ref{theorem:ff-linear-code} that these codes achieve the capacity region of the finite field adder channel.

Consider a message of the form $\boldsymbol{s} \in \mathcal{F}^k$,
and a linear code that maps $\boldsymbol{s}$ to a length-$n$ codeword $\boldsymbol{x} \in \mathcal{F}^n$:
\begin{subequations}
\begin{align}
\boldsymbol{x} &= ( \boldsymbol{s} \odot \mathbb{G} ) \oplus \boldsymbol{q}\label{eq:linear-codes-def-1}\\
&= \left( \boldsymbol{s} \odot \begin{bmatrix} \boldsymbol{g}_1 \\ \boldsymbol{g}_2 \\ \vdots \\ \boldsymbol{g}_k \end{bmatrix} \right) \oplus \boldsymbol{q},\label{eq:linear-codes-def-2}
\end{align}
\end{subequations}
where $\boldsymbol{x}$ is a row vector of length $n$, $\boldsymbol{s}$ is a row vector of length $k$, $\mathbb{G}$ is a fixed $k$-by-$n$ matrix, with each element independently and uniformly chosen over $\mathcal{F}$, $\boldsymbol{g}_i$, the $i$-th row in $\mathbb{G}$, is a row vector of length $n$, and $\boldsymbol{q}$ is a fixed row vector of length $n$, with each element independently and uniformly chosen over $\mathcal{F}$.

We will show that the codeletter of the above code is uniform i.i.d., and any two codewords are independent. We extend Gallager's results for
binary linear codes \cite[pages 206--207]{gallager68} to
finite field linear codes in the following two lemmas.

\begin{lemma}\label{lemma:linear-codes-1}
Consider the linear codes defined in \eqref{eq:linear-codes-def-1}. Over the ensemble of codes, the probability that a message $\boldsymbol{s}_1$ is mapped to a given codeword $\boldsymbol{x}_1$ is $p(\boldsymbol{x}_1)=|\mathcal{F}|^{-n}$.
\end{lemma}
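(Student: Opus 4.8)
The plan is to condition on the random generator matrix $\mathbb{G}$ and to exploit that the dither vector $\boldsymbol{q}$ in \eqref{eq:linear-codes-def-1} has independent, uniform entries over $\mathcal{F}$ and is drawn independently of $\mathbb{G}$. Fix the message $\boldsymbol{s}_1 \in \mathcal{F}^k$ and an arbitrary target codeword $\boldsymbol{x}_1 \in \mathcal{F}^n$. For any realization of $\mathbb{G}$, the product $\boldsymbol{s}_1 \odot \mathbb{G}$ is a fixed (deterministic) row vector $\boldsymbol{a} = (a_1,\dotsc,a_n) \in \mathcal{F}^n$, so, conditioned on that realization, the event $\{(\boldsymbol{s}_1 \odot \mathbb{G}) \oplus \boldsymbol{q} = \boldsymbol{x}_1\}$ coincides with the event that, for every coordinate $j$, the entry $q_j$ equals the solution of $a_j \oplus q_j = x_{1,j}$, where $x_{1,j}$ denotes the $j$-th entry of $\boldsymbol{x}_1$.

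Next I would invoke Lemma~\ref{lemma:jelinek}: by part~1, for each $j$ the equation $a_j \oplus q_j = x_{1,j}$ has a \emph{unique} solution $q_j^\star \in \mathcal{F}$; since the entries of $\boldsymbol{q}$ are independent and uniform over $\mathcal{F}$, we get $\Pr\{q_j = q_j^\star\} = |\mathcal{F}|^{-1}$ for each $j$, hence $\Pr\{(\boldsymbol{s}_1 \odot \mathbb{G}) \oplus \boldsymbol{q} = \boldsymbol{x}_1 \mid \mathbb{G}\} = |\mathcal{F}|^{-n}$. Crucially this value is the same for every realization of $\mathbb{G}$ (equivalently, part~2 of Lemma~\ref{lemma:jelinek} says that adding the constant vector $\boldsymbol{a}$ to the uniform $\boldsymbol{q}$ keeps it uniform over $\mathcal{F}^n$). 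By the law of total probability, averaging over the distribution of $\mathbb{G}$ then gives $p(\boldsymbol{x}_1) = \Pr\{(\boldsymbol{s}_1 \odot \mathbb{G}) \oplus \boldsymbol{q} = \boldsymbol{x}_1\} = |\mathcal{F}|^{-n}$, which is the claimed statement.

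This is simply the finite-field counterpart of Gallager's binary computation, with the random translate $\boldsymbol{q}$ being what makes the codeword distribution \emph{exactly} (not merely asymptotically) uniform. I do not anticipate a real obstacle: the only points requiring care are keeping the conditioning on $\mathbb{G}$ explicit so that $\boldsymbol{s}_1 \odot \mathbb{G}$ may be treated as a constant when the probability over $\boldsymbol{q}$ is evaluated, and noting that the independence of $\boldsymbol{q}$ and $\mathbb{G}$ is what licenses the averaging step. (In fact the uniform i.i.d.\ structure of $\mathbb{G}$ is not needed for this particular lemma — only the independence and uniformity of $\boldsymbol{q}$ — though it will be needed for the companion statement on pairwise independence of distinct codewords.)
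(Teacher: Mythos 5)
Your proof is correct and is essentially the paper's argument in conditional form: the paper counts, for each of the $|\mathcal{F}|^{nk}$ choices of $\mathbb{G}$, the unique $\boldsymbol{q}$ (via Lemma~\ref{lemma:jelinek}) mapping $\boldsymbol{s}_1$ to $\boldsymbol{x}_1$ among the $|\mathcal{F}|^{n(k+1)}$ equally likely pairs $(\mathbb{G},\boldsymbol{q})$, which is exactly your conditioning on $\mathbb{G}$ followed by averaging. Your side observation that only the uniformity and independence of $\boldsymbol{q}$ (not of $\mathbb{G}$) is needed here is accurate but does not change the substance.
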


\begin{proof}[Proof of Lemma~\ref{lemma:linear-codes-1}]
There are $|\mathcal{F}|^{n(k+1)}$ ways of selecting $\mathbb{G}$ and $\boldsymbol{q}$. As the elements are arbitrarily chosen, each $(\mathbb{G},\boldsymbol{q})$ has a probability of $|\mathcal{F}|^{-n(k+1)}$ of being selected. Following from Lemma~\ref{lemma:jelinek}, for any $\mathbb{G}$, there is only one $\boldsymbol{q}$ that results in the given $\boldsymbol{x}_1$. So, there are only $|\mathcal{F}|^{nk}$ different $(\mathbb{G},\boldsymbol{q})$'s that map $\boldsymbol{s}_1$ to $\boldsymbol{x}_1$. Hence, $p(\boldsymbol{x}_1) = |\mathcal{F}|^{nk} |\mathcal{F}|^{-n(k+1)} = |\mathcal{F}|^{-n}$.
\end{proof}

\begin{lemma}\label{lemma:linear-codes-2}
Consider the linear codes defined in \eqref{eq:linear-codes-def-1}. Let $\boldsymbol{s}_1$ and $\boldsymbol{s}_2$ be two different messages. The corresponding codewords, i.e.,
\begin{align}
\boldsymbol{x}_1 &= (\boldsymbol{s}_1 \odot \mathbb{G}) \oplus \boldsymbol{q}\\
\boldsymbol{x}_2 &= (\boldsymbol{s}_2 \odot \mathbb{G}) \oplus \boldsymbol{q},
\end{align}
are statistically independent.
\end{lemma}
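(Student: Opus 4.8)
The plan is to show that, over the uniform random choice of $(\mathbb{G},\boldsymbol{q})$, we have $p(\boldsymbol{x}_1,\boldsymbol{x}_2) = |\mathcal{F}|^{-2n}$ for every pair $(\boldsymbol{x}_1,\boldsymbol{x}_2) \in \mathcal{F}^n \times \mathcal{F}^n$. Since Lemma~\ref{lemma:linear-codes-1} already gives $p(\boldsymbol{x}_1) = p(\boldsymbol{x}_2) = |\mathcal{F}|^{-n}$, this yields $p(\boldsymbol{x}_1,\boldsymbol{x}_2) = p(\boldsymbol{x}_1)p(\boldsymbol{x}_2)$, which is exactly statistical independence. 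As in the proof of Lemma~\ref{lemma:linear-codes-1}, I would argue by counting: each of the $|\mathcal{F}|^{n(k+1)}$ choices of $(\mathbb{G},\boldsymbol{q})$ occurs with probability $|\mathcal{F}|^{-n(k+1)}$, so it suffices to show that exactly $|\mathcal{F}|^{n(k-1)}$ of them simultaneously map $\boldsymbol{s}_1$ to $\boldsymbol{x}_1$ and $\boldsymbol{s}_2$ to $\boldsymbol{x}_2$.

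The key step is a column-by-column decomposition. The $j$-th codeword entries $x_{1,j}$ and $x_{2,j}$ depend only on the $j$-th column $(g_{1,j},\dotsc,g_{k,j})$ of $\mathbb{G}$ and on the single letter $q_j$, and these are chosen independently and uniformly across $j$; hence the count factors as the $n$-th power of a per-column count. Fix a column index $j$. Because $\boldsymbol{s}_1 \neq \boldsymbol{s}_2$, there is a coordinate $m$ with $s_{1,m} \neq s_{2,m}$, hence $s_{1,m} \ominus s_{2,m} \neq \mathfrak{0}$. Forming the difference of the two scalar equations eliminates $q_j$ and leaves a single linear equation in $(g_{1,j},\dotsc,g_{k,j})$ whose coefficient on $g_{m,j}$ is the nonzero element $s_{1,m}\ominus s_{2,m}$; by Lemma~\ref{lemma:jelinek} (parts~1 and~3), for each of the $|\mathcal{F}|^{k-1}$ assignments of the remaining entries $g_{i,j}$, $i \neq m$, there is exactly one $g_{m,j}$ solving it. Given the full column, Lemma~\ref{lemma:jelinek} then gives exactly one $q_j$ solving the original equation for $x_{1,j}$. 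So the per-column count is $|\mathcal{F}|^{k-1}$, the total count is $|\mathcal{F}|^{n(k-1)}$, and $p(\boldsymbol{x}_1,\boldsymbol{x}_2) = |\mathcal{F}|^{n(k-1)}\,|\mathcal{F}|^{-n(k+1)} = |\mathcal{F}|^{-2n}$, as required.

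The main obstacle is purely in making the "subtract the two congruences to eliminate $\boldsymbol{q}$" manipulation rigorous over an arbitrary finite field rather than over $\mathbb{F}_2$: one must invoke the field axioms (existence of additive inverses) to form the difference, and Lemma~\ref{lemma:jelinek} to guarantee that the resulting equation has a \emph{unique} solution in the distinguished unknown $g_{m,j}$, which in turn requires the coefficient $s_{1,m}\ominus s_{2,m}$ to be nonzero — the single place where the hypothesis $\boldsymbol{s}_1 \neq \boldsymbol{s}_2$ enters. Everything else is bookkeeping parallel to Lemma~\ref{lemma:linear-codes-1} and to Gallager's binary argument.
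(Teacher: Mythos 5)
Your proposal is correct and follows essentially the same route as the paper: count the $(\mathbb{G},\boldsymbol{q})$ pairs producing the given $(\boldsymbol{x}_1,\boldsymbol{x}_2)$, eliminate the dither by forming the difference $\boldsymbol{x}_1 \oplus -\boldsymbol{x}_2 = (\boldsymbol{s}_1 \oplus -\boldsymbol{s}_2)\odot\mathbb{G}$, and use the nonzero coordinate of $\boldsymbol{s}_1 \oplus -\boldsymbol{s}_2$ together with Lemma~\ref{lemma:jelinek} to conclude that exactly $|\mathcal{F}|^{n(k-1)}$ pairs work, so that $p(\boldsymbol{x}_1,\boldsymbol{x}_2)=|\mathcal{F}|^{-2n}=p(\boldsymbol{x}_1)p(\boldsymbol{x}_2)$ by Lemma~\ref{lemma:linear-codes-1}. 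The only (cosmetic) difference is bookkeeping: you perform the count column-by-column via scalar equations, while the paper solves the vector equation for the distinguished row $\boldsymbol{g}_j$ of $\mathbb{G}$ in one step; both yield the same count.
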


\begin{proof}[Proof of Lemma~\ref{lemma:linear-codes-2}]
To show independence, we need to find the probabilities $p(\boldsymbol{x}_1)$ and $p(\boldsymbol{x}_2|\boldsymbol{x}_1)$, and show that $p(\boldsymbol{x}_1,\boldsymbol{x}_2) =p(\boldsymbol{x}_1)p(\boldsymbol{x}_2)$. Equivalently, we find the probabilities $p(\boldsymbol{x}_1 \oplus -\boldsymbol{x}_2)$ and $p(\boldsymbol{x}_1|\boldsymbol{x}_1 \oplus -\boldsymbol{x}_2)$, where $-\boldsymbol{x}_2$ is the \emph{additive inverse} of $\boldsymbol{x}_2$ in $\mathcal{F}$.
Let $\boldsymbol{s}_1$ and $\boldsymbol{s}_2$ differ in the $j$-th position (they may differ, additionally, in other positions). So, $\boldsymbol{x}_1 \oplus -\boldsymbol{x}_2 = (\boldsymbol{s}_1 \oplus -\boldsymbol{s}_2) \odot \mathbb{G}$. For any $(\boldsymbol{g}_1,\dotsc,\boldsymbol{g}_{j-1},\boldsymbol{g}_{j+1},\dotsc,\boldsymbol{g}_k)$, there is only one $\boldsymbol{g}_j$ that results in the given $(\boldsymbol{x}_1 \oplus -\boldsymbol{x}_2)$. Hence, there are only $|\mathcal{F}|^{n(k-1)}$ different $\mathbb{G}$'s that give $(\boldsymbol{x}_1 \oplus  - \boldsymbol{x}_2)$. In addition, for any chosen $\mathbb{G}$ that gives the required $(\boldsymbol{x}_1 \oplus -\boldsymbol{x}_2)$, there is only one $\boldsymbol{q}$ that results in the given $\boldsymbol{x}_1$. So, there are only $|\mathcal{F}|^{n(k-1)}$ unique $(\mathbb{G},\boldsymbol{q})$'s that give the desired $(\boldsymbol{x}_1 \oplus -\boldsymbol{x}_2, \boldsymbol{x}_1)$ or equivalently the desired $(\boldsymbol{x}_1,\boldsymbol{x}_2)$. Again each $(\mathbb{G},\boldsymbol{q})$ has a probability of $|\mathcal{F}|^{-n(k+1)}$ of being selected. So, the probability $p(\boldsymbol{x}_1,\boldsymbol{x}_2) = |\mathcal{F}|^{n(k-1)} |\mathcal{F}|^{-n(k+1)} = |\mathcal{F}|^{-2n} = p(\boldsymbol{x}_1)p(\boldsymbol{x}_2)$.
\end{proof}

\begin{remark}
The key in proving Lemma~\ref{lemma:linear-codes-2} is to find the probability of the summation of the first codeword and the additive inverse of the second codeword, rather than the summation of the two codewords (as in the binary case \cite[page 207]{gallager68}). Note that for the binary case, the additive inverse of a codeword is the codeword itself.
\end{remark}

\begin{remark}
Note that although the \emph{dither} vector $\boldsymbol{q}$ is not required for proving that two codewords are independent (Lemma~\ref{lemma:linear-codes-2}), it is required for proving that all codeletters for any codeword are independent and uniformly distributed (Lemma~\ref{lemma:linear-codes-1}). 
\end{remark}

\begin{theorem} \label{theorem:ff-linear-code}
Consider a point-to-point finite field adder channel
\begin{equation}\label{eq:ff-channel}
Y = X \oplus N,
\end{equation}
where $X \in \mathcal{F}$ is the channel input from the transmitter, $Y \in \mathcal{F}$ is the channel output received by the receiver, and $N \in \mathcal{F}$ is the channel noise and is an i.i.d. random variable for each channel use.
Using the linear code in \eqref{eq:linear-codes-def-1}, the source sends a message $\boldsymbol{S}$, which is uniformly distributed in $\mathcal{F}^k$, over $n$ uses of the channel, $\boldsymbol{X}(\boldsymbol{S})$. The receiver can decode the message $\boldsymbol{S}$ from the $n$ received signals $\boldsymbol{Y}$ with arbitrarily small error probability if $n$ is sufficiently large and if
\begin{equation}\label{eq:linear-capacity}
\frac{k\log_2|\mathcal{F}|}{n} < \log_2 |\mathcal{F}| - H(N).
\end{equation}
\end{theorem}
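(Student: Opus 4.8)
The plan is to run a standard random-coding argument with a joint-typicality decoder, using the random linear code ensemble of \eqref{eq:linear-codes-def-1} together with Lemmas~\ref{lemma:linear-codes-1} and \ref{lemma:linear-codes-2} to supply exactly the two ingredients a Shannon-style proof needs: that the transmitted codeword is uniform i.i.d., and that every competing codeword is independent of it. First I would fix the input distribution to be the uniform distribution $p^{\text{u}}(x)$ on $\mathcal{F}$ and let $p(y|x)$ be the transition law of \eqref{eq:ff-channel}, so that the relevant typical set is $\mathcal{A}^n_{[XY]\delta}$ with respect to $p^{\text{u}}(x)p(y|x)$. By Lemma~\ref{lemma:jelinek} (parts 1--2), a uniform $X$ makes $Y = X \oplus N$ uniform on $\mathcal{F}$ as well, hence $H(Y) = \log_2|\mathcal{F}|$ and $I(X;Y) = H(Y) - H(Y|X) = \log_2|\mathcal{F}| - H(N)$. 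By Lemma~\ref{lemma:linear-codes-1} the transmitted codeword $\boldsymbol{X}(\boldsymbol{S})$ is uniform over $\mathcal{F}^n$, hence uniform i.i.d.; since the channel is memoryless and $\boldsymbol{N}$ is i.i.d.\ and independent of $\boldsymbol{X}(\boldsymbol{S})$, the pair $(\boldsymbol{X}(\boldsymbol{S}),\boldsymbol{Y})$ is i.i.d.\ according to $p^{\text{u}}(x)p(y|x)$.

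The decoder I would use declares $\hat{\boldsymbol{s}}$ if it is the unique message with $(\boldsymbol{X}(\hat{\boldsymbol{s}}),\boldsymbol{Y}) \in \mathcal{A}^n_{[XY]\delta}$, and declares an error otherwise. The error analysis then splits into the usual two events. The event that the true codeword fails to be jointly typical with $\boldsymbol{Y}$ has probability less than $\delta$ for $n$ sufficiently large, by Lemma~\ref{lemma:jaep}. For each wrong message $\boldsymbol{s}' \neq \boldsymbol{S}$, Lemma~\ref{lemma:linear-codes-2} gives that $\boldsymbol{X}(\boldsymbol{s}')$ is statistically independent of $\boldsymbol{X}(\boldsymbol{S})$; because $\boldsymbol{X}(\boldsymbol{s}')$ is a deterministic function of $(\mathbb{G},\boldsymbol{q})$ it is also independent of the external noise $\boldsymbol{N}$, and therefore independent of $\boldsymbol{Y} = \boldsymbol{X}(\boldsymbol{S}) \oplus \boldsymbol{N}$, while by Lemma~\ref{lemma:linear-codes-1} its own marginal is uniform i.i.d. Hence $(\boldsymbol{X}(\boldsymbol{s}'),\boldsymbol{Y})$ is distributed as the product of the marginals of $p^{\text{u}}(x)p(y|x)$, so Lemma~\ref{lemma:jaep-2} applies and bounds $\Pr\{(\boldsymbol{X}(\boldsymbol{s}'),\boldsymbol{Y}) \in \mathcal{A}^n_{[XY]\delta}\} \leq 2^{-n(I(X;Y)-3\delta)}$. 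A union bound over the fewer than $|\mathcal{F}|^k = 2^{k\log_2|\mathcal{F}|}$ wrong messages then contributes at most $2^{-n(\log_2|\mathcal{F}| - H(N) - 3\delta - k\log_2|\mathcal{F}|/n)}$, which vanishes as $n \to \infty$ provided $k\log_2|\mathcal{F}|/n < \log_2|\mathcal{F}| - H(N) - 3\delta$; letting $\delta \to 0$ recovers exactly the rate condition \eqref{eq:linear-capacity}. Finally, since the ensemble-averaged error probability is small, there is at least one choice of $(\mathbb{G},\boldsymbol{q})$ with arbitrarily small average error probability, which is what achievability requires.

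The only step that is not entirely routine is making the wrong-codeword independence argument airtight: Lemma~\ref{lemma:linear-codes-2} gives only pairwise independence \emph{among codewords}, so one must additionally observe that each $\boldsymbol{X}(\boldsymbol{s}')$, being determined by $(\mathbb{G},\boldsymbol{q})$, is independent of the channel noise $\boldsymbol{N}$, in order to conclude $\boldsymbol{X}(\boldsymbol{s}')$ is independent of $\boldsymbol{Y}$ and thereby match the hypothesis of Lemma~\ref{lemma:jaep-2}. Everything else is the standard typicality counting, with the finite-field-specific content confined to the use of Lemma~\ref{lemma:jelinek} to see that the uniform input is both admissible and mutual-information-optimal, yielding $I(X;Y) = \log_2|\mathcal{F}| - H(N)$.
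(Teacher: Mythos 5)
Your proposal is correct and follows essentially the same route as the paper's proof: a joint-typicality decoder over the random linear ensemble, with Lemma~\ref{lemma:linear-codes-1} supplying the uniform i.i.d.\ marginal, Lemma~\ref{lemma:linear-codes-2} supplying independence of competing codewords, and Lemmas~\ref{lemma:jaep} and~\ref{lemma:jaep-2} plus a union bound over the $|\mathcal{F}|^k-1$ wrong messages yielding the rate condition \eqref{eq:linear-capacity}. Your explicit remark that the wrong codeword, being a function of $(\mathbb{G},\boldsymbol{q})$, is also independent of the noise $\boldsymbol{N}$ and hence of $\boldsymbol{Y}$ merely spells out a step the paper leaves implicit when it asserts $p(\boldsymbol{x}(\boldsymbol{b}),\boldsymbol{y})=\prod_t p^{\text{u}}(x[t])p(y[t])$.
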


\begin{proof}[Proof of Theorem~\ref{theorem:ff-linear-code}]
The source transmits $\boldsymbol{X}(\boldsymbol{S}) = \left(\boldsymbol{S} \odot \mathbb{G} \right) \oplus \boldsymbol{q}$, according to \eqref{eq:linear-codes-def-1}, over $n$ channel uses.
The receiver receives $\boldsymbol{Y}$ according to \eqref{eq:ff-channel}. It decodes $\hat{\boldsymbol{S}} = \boldsymbol{a}$ if there is one and only one codeword $\boldsymbol{X}(\boldsymbol{a})$ that is jointly $\delta$-typical with the received signals, i.e.,
\begin{itemize}
\item $\Big(\boldsymbol{X}(\boldsymbol{a}),\boldsymbol{Y} \Big) \in \mathcal{A}^n_{[XY]\delta}$,\; and
\item $\Big(\boldsymbol{X}(\boldsymbol{b}),\boldsymbol{Y}\Big) \notin \mathcal{A}^n_{[XY]\delta}$,\; $\forall \boldsymbol{b} \in \mathcal{F}^k \setminus \{\boldsymbol{a}\}$.
\end{itemize}

Without loss of generality, let $\boldsymbol{S}=\boldsymbol{a}$ be the message sent.  The probability that the receiver makes an error in decoding is
\begin{subequations}
\begin{align}
P_\text{error} &= \Pr \{ \hat{\boldsymbol{S}} \neq \boldsymbol{a}\}\\
&= \Pr\Big\{ \Big(\boldsymbol{X}(\boldsymbol{a}),\boldsymbol{Y} \Big) \notin \mathcal{A}^n_{[XY]\delta} \text{ or } \Big(\boldsymbol{X}(\boldsymbol{b}),\boldsymbol{Y}\Big) \in \mathcal{A}^n_{[XY]\delta} \nonumber\\
&\quad\quad\quad\text{ for some } \boldsymbol{b} \neq \boldsymbol{a} \Big\} \\
&\leq \Pr\left\{ \Big(\boldsymbol{X}(\boldsymbol{a}),\boldsymbol{Y} \Big) \notin \mathcal{A}^n_{[XY]\delta} \right\} \nonumber\\
&\quad + \sum_{\boldsymbol{b} \neq \boldsymbol{a}} \Pr \left\{ \Big(\boldsymbol{X}(\boldsymbol{b}),\boldsymbol{Y}\Big) \in \mathcal{A}^n_{[XY]\delta} \right\}.
  \end{align}
\end{subequations}

From Lemma~\ref{lemma:jaep}, we have
\begin{equation}
\Pr\left\{ \Big(\boldsymbol{X}(\boldsymbol{a}),\boldsymbol{Y} \Big) \notin \mathcal{A}^n_{[XY]\delta} \right\}  < \delta.
\end{equation}

For any $\boldsymbol{b} \neq \boldsymbol{a}$, from Lemma~\ref{lemma:linear-codes-1} we know that $p\left(\boldsymbol{x}(\boldsymbol{b})\right) = \prod_{t=1}^np^{\text{u}}\left(x[t]\right)$, and from Lemma~\ref{lemma:linear-codes-2} we know that $\boldsymbol{x}(\boldsymbol{a})$ and $\boldsymbol{x}(\boldsymbol{b})$ are independent, and hence $p\left(\boldsymbol{x}(\boldsymbol{b}),\boldsymbol{y}\right) = \prod_{t=1}^np^{\text{u}}(x[t])p(y[t])$. So, from Lemma~\ref{lemma:jaep-2}, we have
\begin{equation}
\Pr \left\{ \Big(\boldsymbol{X}(\boldsymbol{b}),\boldsymbol{Y}\Big) \in \mathcal{A}^n_{[XY]\delta} \right\} \leq 2^{-n(I^{\text{u}}(X;Y) - 3\delta)},
\end{equation}
where $I^{\text{u}}(X;Y)$ is evaluated with $p(x,y) = p^{\text{u}}(x)p(y|x)$. Note that $p(y|x) = p(n)$.

This gives
\begin{subequations}
\begin{align}
P_\text{error} &\leq \delta + (|\mathcal{F}|^k-1)  2^{-n(I^{\text{u}}(X;Y) - 3\delta)}\\
& < \delta + 2^{n\left(\frac{k\log_2|\mathcal{F}|}{n}-[I^{\text{u}}(X;Y)-3\delta]\right)}.
\end{align}
\end{subequations}
Choosing a sufficiently large $n$ and a sufficiently small $\delta >0$, if
\begin{subequations}
\begin{align}
\frac{k\log_2|\mathcal{F}|}{n} &< I^{\text{u}}(X;Y)-3\delta\\
&=\log_2|\mathcal{F}| - H(N) - 3\delta,
\end{align}
\end{subequations}
then $P_\text{error}$ can be made as small as desired.

So, if $n$ is sufficiently large and if $\frac{k\log_2|\mathcal{F}|}{n} < \log_2|\mathcal{F}| - H(N)$, then the receiver can decode $\boldsymbol{S}$ with an arbitrarily small error probability.
\end{proof}

\begin{remark}\label{remark:all-rates}
Consider a message $w \in \{1,2,\dotsc, 2^{nR}\}$, and choose an integer $k$ such that
\begin{equation}
2^{nR} \leq |\mathcal{F}|^k \Leftrightarrow R \leq \frac{k\log_2|\mathcal{F}|}{n}. \label{eq:linear-rate}
\end{equation}
We can define an injective (one-to-one) function that maps each $w \in \{1,2,\dotsc, 2^{nR}\}$ to a unique $\boldsymbol{s} \in \mathcal{F}^k$, and send $\boldsymbol{s}$ using the linear code \eqref{eq:linear-codes-def-1} over $n$ uses of the channel \eqref{eq:ff-channel}. For any $R$ that satisfies
\begin{equation}
R < \log_2 |\mathcal{F}| - H(N),\label{linear-capacity-rate}
\end{equation}
we can always find sufficiently large $k$ and $n$, such that
\begin{equation}
R < \frac{k\log_2|\mathcal{F}|}{n} < \log_2 |\mathcal{F}| - H(N),
\end{equation}
meaning that the receiver can reliably decode $\boldsymbol{s}$, and it can then reverse the mapping from $\boldsymbol{s}$ to get the correct $w$. This means the rates in \eqref{linear-capacity-rate} are achievable using linear codes.
From \cite[pages 189-191]{coverthomas06}, the channel \eqref{eq:ff-channel} is \emph{symmetrical} and its capacity is $I(X;Y)$ evaluated with the uniform input distribution, i.e., $I^{\text{u}}(X;Y)=\log_2|\mathcal{F}| - H(N)$ bits/channel use. So, the random linear code defined in \eqref{eq:linear-codes-def-1} can be used to achieve the capacity of the channel \eqref{eq:ff-channel}.
\end{remark}

\section{Achievable Rate Region of Functional-Decode-Forward}\label{section:general-rate}

In this section, we extend the FDF scheme developed in~\cite{ongjohnsonkellett10cl} to MWRCs where the users are not constrained to transmitting at a common rate. Major differences are: (i) On the uplink, rate splitting is used, and (ii) On the downlink, joint source-channel decoding is used. Since rate splitting is used, we assume that the rates of all users, $R_i$, $\forall i \in \{1,2,\dotsc,L\}$, are rational numbers\footnote{Note that for the common-rate case, this is not required.}. The reason for this will become apparent later.

We consider $T$ message tuples.
Each user $i$, $i \in \{1,2,\dotsc,L\}$, sends $T$ messages of $nR_i$ bits each, meaning that each user can transmit at a different rate. Denote the $T$ messages of user $i$ by $(W_i[1], W_i[2], \dotsc, W_i[T])$ where $W_i[t] \in \{1,2,\dotsc,2^{nR_i}\}$ for all $t$.  Since we consider full data exchange, user $i$ needs to decode the messages sent by all the other users, i.e., $\Big\{W_{j}[t]: \forall j \in \{1,2,\dotsc,L\} \setminus \{i\}, \forall t \in \{1,2,\dotsc,T\} \Big\}$.

The message exchange among the users (via the relay) will be carried out in a total of $(T+1)$ \emph{blocks} of transmission. In the $t$-th block, for each $t \in \{1,2,\dotsc,T\}$, each user $i$ transmits (on the uplink) a codeword as a function of its $t$-th message $W_i[t]$. At the end of the $t$-th block, the relay decodes functions of its received signals in the $t$-th block. It then re-encodes these functions and transmits them (on the downlink) in the next block, i.e., the $(t+1)$-th block. At the end of the $(t+1)$-th block, each user $i$ then decodes the relay's transmission to obtain the $t$-th message of all other users, i.e., $\Big\{W_j[t]:j \in \{1,2,\dotsc,L\} \setminus \{i\} \Big\}$. So, for each pair of the $t$-th block on the uplink and the $(t+1)$-th block on the downlink, if each user can reliably decode the $t$-th message of all other users, then repeating the same coding scheme for all $t \in \{1,2,\dotsc,T\}$, at the end of $(T+1)$ blocks, all users will have reliably decoded the messages sent by all users transmitted in the first $T$ blocks. 

Let each block consist of $n$ channel uses, i.e., the entire transmission utilizes a total of $(T+1)n$ channel uses. Each user $i$ transmits a total of $TnR_i$ bits in this transmission period. If each user can reliably decode the messages of all other users, then the rate tuple $\left(\frac{TnR_1}{(T+1)n},\frac{TnR_2}{(T+1)n},\dotsc,\frac{TnR_L}{(T+1)n}\right)$ is achievable. For any $R_1$, $R_2$, $\dotsc$, $R_L$, and $n$, we can choose a sufficiently large $T$ such that the achievable rate tuple is arbitrarily close to $(R_1,R_2,\dotsc,R_L)$. In this section, we derive constraints on $R_1$, $R_2$, $\dotsc$, $R_L$ such that the rate tuple is achievable.

Since the encoding and decoding functions for all nodes are repeated in every block (different blocks for different message tuples), we focus on the first message tuple in Secs. \ref{uplink}, \ref{downlink}, and \ref{sec:decode}. The relevant channel uses are the first block on the uplink and the second block on the downlink. For simplicity, we denote $W_i[1]$ by $W_i$ in the these sections.

\subsection{On the Uplink}\label{uplink}

\noindent \underline{Message Splitting and Mapping:}

Recall that $R_i^{\text{c}} = \left(\sum_{j=1}^LR_j \right) - R_i$, $R_\text{min} = \min_{j \in \{1,2,\dotsc,L\}} R_j$ and $R_\text{min}^{\text{c}} = \left( \sum_{j=1}^L R_j \right) - R_\text{min}$.
For the uplink of the MWRC, we use the idea of FDF in~\cite{ongjohnsonkellett10cl} combined with rate splitting. For each user $i$, $i \in \{1,2,\dotsc,L\}$, we split its rate into
\begin{equation}
R_i = R_\text{min} + R_i',
\end{equation}
where $R_i' \geq 0$. So, each message $W_i$ can be split into
\begin{equation}
W_i = (A_i,B_i),
\end{equation}
where $A_i \in \{1,2,\dotsc,2^{nR_\text{min}}\}$ is a \emph{random} message of $nR_\text{min}$ bits in length and $B_i \in \{1,2,\dotsc,2^{nR_i'}\}$ is a \emph{random} message of $nR_i'$ bits in length\footnote{Since $R_\text{min}$ and $R_i'$, $\forall i$, are rational numbers, we can choose a sufficiently large $n$ such that $nR_\text{min}$ and $nR_i'$, $\forall i$, are integers.}. Let $D$, $0 \leq D  < L$, be the number of users whose message is strictly more than $nR_\text{min}$ bits. Let the set of these users be
\begin{equation}
\{d_1,d_2,\dotsc,d_D\} \triangleq \mathcal{D} \triangleq \{j: R_j' > 0\}.
\end{equation}
So, for all users $i \notin \mathcal{D}$, $W_i = A_i$, $B_i = \varnothing$, and $R_i'=0$.

On the downlink, we will invoke the result in Theorem~\ref{theorem:joint-source-channel-coding}, where the relay sends messages each consisting of $n_s$ i.i.d. random variables. To do this, we will further split each message into $n_s$ parts, i.e.,
\begin{align}
A_i &= (A_i^{(1)},A_i^{(2)}, \dotsc, A_i^{(n_s)}),\quad \forall i \in \{1,2,\dotsc, L\}\\
B_j &= (B_j^{(1)}, B_j^{(2)}, \dotsc, B_j^{(n_s)}), \quad \forall j \in \{d_1,d_2,\dotsc,d_D\},
\end{align}
where all $A_i^{(v)}$ are independently and uniformly distributed in $\{1,2,\dotsc,2^{nR_\text{min}/n_s}\}$, and all $B_j^{(v)}$ are independently and uniformly distributed in $\{1,2,\dotsc,2^{nR_j'/n_s}\}$. All these messages will be transmitted using linear codes in $\mathcal{F}$ defined in \eqref{eq:linear-codes-def-1}. To do this, we define an injective function that maps each $\alpha \in \{1,2,\dotsc,2^{nR_\text{min}/n_s}\}$ to a unique length-$k_\text{A}$ finite field vector $\boldsymbol{s}(\alpha) \in \mathcal{F}^{k_\text{A}}$. This means the vector length $k_\text{A}$ must be chosen such that
\begin{subequations}
\begin{align}
2^{nR_\text{min}/n_s} &\leq |\mathcal{F}|^{k_\text{A}}\\
\frac{k_\text{A}n_s\log_2|\mathcal{F}|}{n} &\geq R_\text{min}. \label{eq:mapping-1}
\end{align}
\end{subequations}
This guarantees that a user can always reverse the function to get the correct $A_i^{(v)}$ from $\boldsymbol{S}(A_i^{(v)})$. Similarly, for each $j \in \mathcal{D}$, we define an injective function that maps each $\beta_j \in \{1,2,\dotsc,2^{nR_j'/n_s}\}$ to a unique length-$k_{\text{B},j}$ finite field vector $\boldsymbol{s}(b_j) \in \mathcal{F}^{k_{\text{B},j}}$. So, $k_{\text{B},j}$ must be chosen such that
\begin{equation}\label{eq:mapping-2}
\frac{k_{\text{B},j} n_s \log_2|\mathcal{F}|}{n} \geq R_{j}'.
\end{equation}
The length of the vector $\boldsymbol{s}(\gamma)$ and the corresponding mapping is clear from its argument $\gamma \in \{\alpha, \beta_j\}$.

\begin{figure*}[t]
  \centering
  \includegraphics[width=13.5cm]{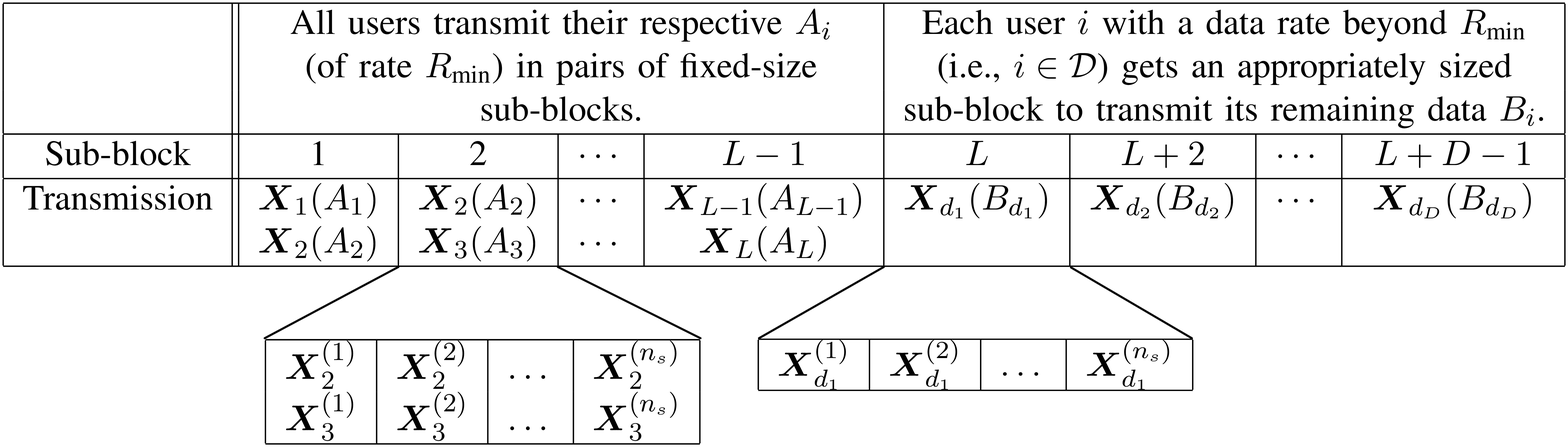}
  \caption{Uplink transmission}
\label{fig:general-rate-uplink}
  \end{figure*}

\noindent \underline{Transmission:}

The block of $n$ uplink channel uses are split into $(L+D-1)$ sub-blocks. Each of the $l$-th sub-blocks for $1 \leq l \leq L-1$ consists of $\frac{nR_\text{min}}{R_\text{min}^{\text{c}}}$ channel uses\footnotemark[4]. Each of the $l$-th sub-blocks for $L \leq l \leq L+D-1$ consists of $\frac{nR_{d_{l-L+1}}'}{R_\text{min}^{\text{c}}}$ channel uses\footnotemark[4]. Note that if we sum the number of channel uses in all sub-blocks, we get
\begin{equation}
(L-1)\frac{nR_\text{min}}{R_\text{min}^{\text{c}}} + \sum_{d \in \mathcal{D}}\frac{nR_d'}{R_\text{min}^{\text{c}}} = n\frac{\sum_{j=1}^L (R_\text{min} + R_j')- R_\text{min}}{R_\text{min}^{\text{c}}} = n.
\end{equation}

The first $(L-1)$ sub-blocks (of equal length) are used to send $\left\{A_i: i \in \{1,2,\dotsc,L\}\right\}$. 
The next $D$ sub-blocks (of possibly different length) are used to send $\{B_j: j \in \mathcal{D}\}$. 

In the $l$-th sub-block for $l \in \{1,2,\dotsc, L-1\}$, only two users (more specifically, users $l$ and $(l+1)$) transmit, and the rest of the users \emph{do not transmit} (which is defined as transmitting the additive identity $\mathfrak{0}$). Define the transmission of user $i$ in the sub-block as
\begin{equation}
\boldsymbol{X}_i = (\boldsymbol{X}_i^{(1)}, \boldsymbol{X}_i^{(2)}, \dotsc, \boldsymbol{X}_i^{(n_s)}). \label{eq:further-split}
\end{equation}
The two \emph{active} users transmit using linear codes in $\mathcal{F}$ of the form defined in \eqref{eq:linear-codes-def-1}, i.e.,
\begin{equation}
\boldsymbol{X}_i^{(v)} = \begin{cases}
(\boldsymbol{S}(A_i^{(v)}) \odot \mathbb{G}_\text{A}) \oplus \boldsymbol{q}_{\text{A},i}, &\text{if } i=l \text{ or } l+1\\
\boldsymbol{\mathfrak{0}}, &\text{otherwise},
\end{cases}
\end{equation}
for all $v \in \{1,2,\dotsc, n_s\}$,
where each $\boldsymbol{S}(A_i^{(v)})$ is a row vector of length $k_\text{A}$, $\mathbb{G}_\text{A}$ is a fixed $k_\text{A} \times \frac{nR_\text{min}}{n_sR_\text{min}^{\text{c}}}$ matrix\footnote{For any (possibly large) $n_s$, we choose a much larger $n$ such that $\frac{n}{n_s}$ is sufficiently large, so that $\frac{nR_\text{min}}{n_sR_\text{min}^{\text{c}}}$ and all $\frac{nR_{d_m}'}{n_sR_\text{min}^{\text{c}}}$ are integers.}, each $\boldsymbol{X}_i^{(v)}$ and $\boldsymbol{q}_{\text{A},i}$ is a row vector of length $\frac{nR_\text{min}}{n_sR_\text{min}^{\text{c}}}$, and $\boldsymbol{\mathfrak{0}}$ is the all-zero row vector. Each element in the vectors/matrix is over $\mathcal{F}$. 

For the next $D$ sub-blocks, only users in $\mathcal{D}$ (those with an ``extra'' message $B_i$) transmit. We use the same notation in \eqref{eq:further-split} for the transmitted symbols. More specifically, in the $(L-1+m)$-th sub-block for $m \in \{1,2,\dotsc, D\}$, only one user, $d_m \in \mathcal{D}$, transmits, and does so using a linear code of the form defined in \eqref{eq:linear-codes-def-1}, i.e., 
\begin{align}
\boldsymbol{X}_i^{(v)} = \begin{cases}
(\boldsymbol{S}(B_i^{(v)}) \odot \mathbb{G}_{\text{B},i}) \oplus \boldsymbol{q}_{\text{B},i}, &\text{if } i = d_m\\
\boldsymbol{\mathfrak{0}}, &\text{otherwise},
\end{cases}
\end{align}
for all $v \in \{1,2,\dotsc,n_s\}$,
where $\boldsymbol{S}(B_{d_m}^{(v)})$ is a row vector of length $k_{\text{B},d_m}$, $\mathbb{G}_{\text{B},d_m}$ is a fixed $k_{\text{B},d_m} \times \frac{nR_{d_m}'}{n_sR_\text{min}^{\text{c}}}$ matrix\footnotemark[5], and each $\boldsymbol{X}_{d_m}^{(v)}$ and  $\boldsymbol{q}_{\text{B},d_m}$ is a fixed row vector of length $\frac{nR_{d_m}'}{n_sR_\text{min}^{\text{c}}}$.

Each element in $\mathbb{G}_\text{A}$, $\mathbb{G}_{\text{B},d_m}$, $\boldsymbol{q}_{\text{A},i}$, and $\boldsymbol{q}_{\text{B},d_m}$  is independently and uniformly chosen over $\mathcal{F}$, is fixed for all transmissions, and is made known to the relay. The transmission scheme above is summarized in Fig.~\ref{fig:general-rate-uplink}.

\noindent \underline{Decoding:}

In the $l$-th sub-block for $l \in \{1,2,\dotsc, L-1\}$, the relay receives $\boldsymbol{Y}_0 = (\boldsymbol{Y}_0^{(1)}, \boldsymbol{Y}_0^{(2)},\dotsc, \boldsymbol{Y}_0^{(n_s)})$, where $\boldsymbol{Y}_0^{(v)} = \boldsymbol{X}_{l,l+1}^{(v)} \oplus \boldsymbol{N}_0^{(v)}$ and
\begin{align}
\boldsymbol{X}_{l,l+1}^{(v)} &= \Big([(h_{l,0} \odot \boldsymbol{S}(A_l^{(v)})) \oplus (h_{l+1,0} \odot \boldsymbol{S}(A_{l+1}^{(v)}))] \odot \mathbb{G}_\text{A} \Big) \nonumber\\
&\quad \oplus (\boldsymbol{q}_{\text{A},l} \oplus \boldsymbol{q}_{\text{A},l+1}),
\end{align}
which is also a linear codeword of the form \eqref{eq:linear-codes-def-1}, where the ``message'' is
\begin{equation}
\boldsymbol{S}(A_{l,l+1}^{(v)}) \triangleq (h_{l,0} \odot \boldsymbol{S}(A_l^{(v)})) \oplus (h_{l+1,0} \odot \boldsymbol{S}(A_{l+1}^{(v)})) \in \mathcal{F}^{k_\text{A}}.
\end{equation}
From Theorem~\ref{theorem:ff-linear-code}, if $\frac{nR_\text{min}}{n_sR_\text{min}^{\text{c}}}$ is sufficiently large and if
\begin{equation}
\frac{k_\text{A} \log_2|\mathcal{F}|}{\frac{nR_\text{min}}{n_sR_\text{min}^{\text{c}}}} < \log_2|\mathcal{F}| - H(N_0),\label{eq:uplink-1}
\end{equation}
then the relay can reliably decode $\boldsymbol{S}(A_{l,l+1}^{(v)})$, for all $v \in \{1,2,\dotsc,n_s\}$.

In the $(m+L-1)$-th sub-block for  $m \in \{1,2,\dotsc,D\}$, only one user $d_m$ transmits at any time. The relay  scales each of its received signals by $h_{d_m,0}^{-1}$ (the \emph{multiplicative inverse} of $h_{d_m,0}$) to get
\begin{equation}
\tilde{Y}_0 = h_{d_m,0}^{-1} \odot Y_i = X_{d_m} \oplus \tilde{N}_0,
\end{equation}
where $\tilde{N}_0 = h_{d_m,0}^{-1} \odot N_0$. Note that $H(\tilde{N}_0)=H(N_0)$ as, for any fixed $h_{d_m,0}^{-1} \neq \mathfrak{0}$, there is a bijective mapping between the two random variables $\left(h_{d_m,0}^{-1} \odot N_0\right)$ and $N_0$. Applying Theorem~\ref{theorem:ff-linear-code}, if $\frac{nR_{d_m}'}{n_sR_\text{min}^{\text{c}}}$ is sufficiently large and if
\begin{equation}
\frac{k_{\text{B},d_m} \log_2|\mathcal{F}|}{\frac{nR_{d_m}'}{n_sR_\text{min}^{\text{c}}}} < \log_2|\mathcal{F}| - H(\tilde{N}_0) = \log_2|\mathcal{F}| - H(N_0), \label{eq:uplink-2}
\end{equation}
then the relay can reliably decode $\boldsymbol{S}(B_{d_m}^{(v)})$ from $\tilde{\boldsymbol{Y}}_0^{(v)} = \boldsymbol{X}_{d_m}^{(v)} + \tilde{\boldsymbol{N}}_0^{(v)}$, for all $v \in \{1,2,\dotsc,n_s\}$.

Define
\begin{align}
\boldsymbol{U}^{(v)} &\triangleq \Big(\boldsymbol{S}(A_{1,2}^{(v)}), \boldsymbol{S}(A_{2,3}^{(v)}), \dotsc, \boldsymbol{S}(A_{L-1,L}^{(v)}), \nonumber\\
&\quad\quad \boldsymbol{S}(B_{d_1}^{(v)}),\boldsymbol{S}(B_{d_2}^{(v)}),\dotsc,  \boldsymbol{S}(B_{d_D}^{(v)})\Big),
\end{align}
and
\begin{equation}
\mathbb{U} \triangleq (\boldsymbol{U}^{(1)},\boldsymbol{U}^{(2)},\dotsc,\boldsymbol{U}^{(n_s)}).
\end{equation}
On the uplink, if
\begin{equation}
R_\text{min}^{\text{c}} < \log_2|\mathcal{F}| - H(N_0), \label{eq:uplink-3}
\end{equation}
we can always find sufficiently large $\frac{n}{n_s}$, $k_\text{A}$, and $\{k_{\text{B},d_m}\}_{d_m \in \mathcal{D}}$, such that
\begin{align}
R_\text{min}^{\text{c}} &\leq R_\text{min}^{\text{c}}\frac{k_\text{A} n_s \log_2|\mathcal{F}|}{nR_\text{min}} < \log_2|\mathcal{F}| - H(N_0) \label{eq:arbitrari-close} \\
R_\text{min}^{\text{c}} &\leq R_\text{min}^{\text{c}}\frac{k_{\text{B},d_m} n_s \log_2|\mathcal{F}|}{nR_{d_m}'}  \nonumber\\
&< \log_2|\mathcal{F}| - H(N_0),\quad \forall d_m \in \mathcal{D},
\end{align}
meaning that \eqref{eq:mapping-1}, \eqref{eq:uplink-1} and \eqref{eq:mapping-2}, \eqref{eq:uplink-2} can be satisfied in their respective sub-blocks. So, if \eqref{eq:uplink-3} is satisfied and if $\frac{n}{n_s}$ is sufficiently large, the relay can reliably decode $\mathbb{U}$.

Eqns.~\eqref{eq:uplink-3} and \eqref{eq:arbitrari-close} also mean that $\frac{k_\text{A} n_s\log_2|\mathcal{F}|}{n}$ can be chosen arbitrarily close to $R_\text{min}$, i.e.,
\begin{equation}
\frac{k_\text{A} n_s\log_2|\mathcal{F}|}{n} = R_\text{min} + \eta, \label{eq:eta}
\end{equation}
where $\eta > 0$ can be chosen arbitrarily small.

\subsection{On the Downlink}\label{downlink}

Now, assume that the relay decodes $\mathbb{U}$ in the first block of $n$ uplink uses, it broadcasts this information in the second block of $n$ downlink uses. For decoding on the downlink, each user $i$, $i \in \{1,2,\dotsc,L\}$, scales each of its received signals by $h_{0,i}^{-1}$ to get
\begin{equation}
\tilde{Y}_i = h_{0,i}^{-1} \odot Y_i = X_0 \oplus \tilde{N}_i,
\end{equation}
where $\tilde{N}_i = h_{0,i}^{-1} \odot N_i$, and $H(\tilde{N}_i)=H(N_i)$.

Note that each $\boldsymbol{U}^{(v)}$ is i.i.d., for all $v \in \{1,2,\dotsc,n_s\}$, so are $\boldsymbol{S}(A_{i,i+1}^{(v)})$ for all $v$, and $\boldsymbol{S}(B_i^{(v)})$ for all $v$. We use $\boldsymbol{U}$, $\boldsymbol{S}_{i,i+1}$, and $\boldsymbol{S}_i$ to denote the respective generic random variables. Thus, we have $\boldsymbol{U} = ( \boldsymbol{S}_{1,2}, \boldsymbol{S}_{2,3}, \dotsc, \boldsymbol{S}_{L,L-1}, \boldsymbol{S}_{d_1}, \boldsymbol{S}_{d_2}, \dotsc, \boldsymbol{S}_{d_D} )$.

With this, we can re-cast the downlink as a broadcast channel in which the relay broadcasts a message $\mathbb{U}=[\boldsymbol{U}^{(v)}]_{\forall v}$ to all the users, where each user $i \in \mathcal{D}$ knows $[\boldsymbol{S}(B_i^{(v)})]_{\forall v}$ (which is correlated with the message $\mathbb{U}$) \emph{a priori}. So, each user $i \in \mathcal{D}$ can use its \emph{side information} $[\boldsymbol{S}(B_i^{(v)})]_{\forall v}$ to decode $\mathbb{U}$ from its scaled received signals $\tilde{\boldsymbol{Y}}_i$ during channel decoding (hence joint source-channel decoding). Note that all users do not need to use their respective $A_i$ as side information for decoding $\mathbb{U}$ (see Remark~\ref{remark:U-and-A}).  From Theorem~\ref{theorem:joint-source-channel-coding}, all users can reliably decode $\mathbb{U}$ if $n_s$ and $n$ are sufficiently large and if
\begin{align}
n_sH(\boldsymbol{U}|\boldsymbol{S}_i) < nI(X_0;\tilde{Y}_i),\quad &\forall i \in \mathcal{D} \label{eq:down-mwrc-1}\\
n_sH(\boldsymbol{U}) < nI(X_0;\tilde{Y}_i),\quad &\forall i \notin \mathcal{D}, \label{eq:down-mwrc-2}
\end{align}
for some $p(x_0)$. Note that $\boldsymbol{S}(B_i^{(v)}) = \varnothing$ if $i \notin \mathcal{D}$. Choosing the uniform distribution for $X_0$, $I(X_0;\tilde{Y}_i)  = \log_2|\mathcal{F}| - H(\tilde{N}_i) = \log_2|\mathcal{F}| - H(N_i)$, for all $i \in \{1,2,\dotsc,L\}$.

Since the mapping from $B_i^{(v)}$ (which is uniformly distributed in $\{1,2,\dotsc, 2^{nR_i'/n_s}\}$) to $\boldsymbol{S}(B_i^{(v)})$ is injective, we have, for all $i \in \mathcal{D}$,
\begin{equation}
H(\boldsymbol{S}_i) = \frac{nR_i'}{n_s}.
\end{equation}
Since $\boldsymbol{S}_{i,i+1} \in \mathcal{F}^{k_\text{A}}$, we have
\begin{equation}
H(\boldsymbol{S}_{i,i+1}) \leq k_\text{A}\log_2|\mathcal{F}|, \label{eq:U-ka}
\end{equation}
with equality if and only if $\boldsymbol{S}_{i,i+1}$ is uniformly distributed in $\mathcal{F}^{k_\text{A}}$. Note that each $A_i^{(v)}$, $\forall i$, being uniformly distributed does not imply that $\boldsymbol{S}(A_{i,i+1}^{(v)})$ is uniformly distributed.

This gives
\begin{subequations}
\begin{align}
&H(\boldsymbol{U}) \nonumber\\
 &= \sum_{i=1}^{L-1} H\Big(\boldsymbol{S}_{i,i+1} \Big| \big\{\boldsymbol{S}_{j,j+1}: \text{for all } j < i \text{ and } j \geq 1\big\}\Big) \nonumber\\
&\quad + \sum_{k=1}^D H \Big(\boldsymbol{S}_{d_k} \Big| \big\{\boldsymbol{S}_{d_\ell}: \text{for all } \ell < k \text{ and } \ell \leq 1 \big\}, \nonumber\\
&\quad\quad\quad\quad\quad\quad\quad\;\, \big\{\boldsymbol{S}_{m,m+1}: 1 \leq m \leq L-1 \big\} \Big) \label{eq:chain-rule-1}\\
&\leq \left(\sum_{i=1}^{L-1} H(\boldsymbol{S}_{i,i+1}) + \sum_{d \in \mathcal{D}}H(\boldsymbol{S}_d) \right) \label{eq:conditioning-1}\\
&\leq (L-1)k_\text{A}\log_2|\mathcal{F}| + \sum_{d \in \mathcal{D}} \frac{nR_d'}{n_s}\\
&= (L-1)\frac{n}{n_s}(R_\text{min} + \eta) + \frac{n}{n_s}\sum_{d \in \mathcal{D}} R_d' \label{eq:eta-greater-than-zero}\\
& = \frac{n}{n_s} \left((L-1)R_\text{min} + \sum_{d \in \mathcal{D}} R_d' + (L-1)\eta \right)\\
& = \frac{n}{n_s} \left(R_\text{min}^{\text{c}} + \zeta \right), \label{eq:zeta-1}
\end{align}
\end{subequations}
where $\eta$ is defined in \eqref{eq:eta}, and $\zeta = (L-1)\eta>0$ can be chosen arbitrarily small.
Here, \eqref{eq:chain-rule-1} follows from the chain rule, and \eqref{eq:conditioning-1} is because conditioning can only reduce entropy. 

It follows that for all $i \in \mathcal{D}$,
\begin{subequations}
\begin{align}
H(\boldsymbol{U}|\boldsymbol{S}_i) &= H(\boldsymbol{U}) + H(\boldsymbol{S}_i|\boldsymbol{U}) - H(\boldsymbol{S}_i)\\
&=  H(\boldsymbol{U}) - H(\boldsymbol{S}_i) \label{eq:u-contain-s}\\
& \leq \frac{n}{n_s} \left(R_\text{min}^{\text{c}} + \zeta - R_i'\right)\\
& = \frac{n}{n_s} \left(\left(\sum_{j=1}^LR_j \right) - R_\text{min} - R_i' + \zeta\right)\\
& =\frac{n}{n_s} \left(R_i^{\text{c}} + \zeta\right), \label{eq:zeta-2}
\end{align}
\end{subequations}
where $\zeta > 0$ can be chosen arbitrarily small. Here, \eqref{eq:u-contain-s} is because $ H(\boldsymbol{S}_i|\boldsymbol{U})=0$.

Note that for all $i \notin \mathcal{D}$, $R_i'=0$, meaning $R_i = R_\text{min}$, and hence $R_i^{\text{c}} = R_\text{min}^{\text{c}}$.
Now, for all $i \in \{1,2,\dotsc,L\}$, if
\vspace{-0.7ex}
\begin{equation}
R_i^{\text{c}} < \log_2|\mathcal{F}| - H(N_i), \label{eq:capacity-mwrc-2}
\end{equation}
which is equivalent to 
\begin{equation}
R_i^{\text{c}} + \psi = \log_2|\mathcal{F}| - H(N_i), \quad \text{for some } \psi > 0,
\end{equation}
we can then choose $\zeta = \frac{\psi}{2}$ for \eqref{eq:zeta-1} and \eqref{eq:zeta-2} so that \eqref{eq:down-mwrc-1} and \eqref{eq:down-mwrc-2} can both be satisfied, i.e., all users can reliably decode $\mathbb{U}$ with sufficiently large $n_s$ and $n$.

Note that on the downlink, linear codes are not required.

\begin{remark}\label{remark:U-and-A}
Consider the two-user case (i.e., $L=2$) where $R_1 = R_2 = R_\text{min}$. So, the two messages are $W_1=A_1$ and $W_2=A_2$. Ideally, we choose $k_\text{A}$ such that $nR_\text{min}/n_s  \overset{\text{c.f. \eqref{eq:eta}}}{\approx}  k_\text{A} \log_2|\mathcal{F}|  \overset{\text{c.f. \eqref{eq:U-ka}}}{\approx}  H(\boldsymbol{S}(A_{1,2}^{(v)}))$. Since, $\boldsymbol{U}^{(v)} = \boldsymbol{S}(A_{1,2}^{(v)})$, we have $H(\boldsymbol{U}^{(v)}) = H(\boldsymbol{S}(A_{1,2}^{(v)})) \approx k_\text{A} \log_2|\mathcal{F}|$. Since $A_1^{(v)}$ and $A_2^{(v)}$ are uniformly distributed in $\{1,2,\dotsc,2^{nR_\text{min}/n_s}\}$, we have $H(A_1^{(v)}) = H(A_2^{(v)}) = nR_\text{min}/n_s$. Given $A_1^{(v)}$, the only uncertainty left in $\boldsymbol{U}^{(v)}$ is that of $A_2^{(v)}$. This means $H(\boldsymbol{U}^{(v)}|A_1^{(v)}) = H(A_2^{(v)}) = nR_\text{min}/n_s \approx k_\text{A} \log_2|\mathcal{F}| \approx H(\boldsymbol{U}^{(v)})$. Similarly, we can show that $H(\boldsymbol{U}^{(v)}|A_2^{(v)}) \approx H(\boldsymbol{U}^{(v)})$. So, each message, $A_1^{(v)}$ or $A_2^{(v)}$, individually conveys very little information about $\boldsymbol{U}^{(v)}$. This explains why we do not lose optimality by not using $A_i$ as side information when each user decodes $\mathbb{U}$ on the downlink.
\end{remark}

\subsection{Decoding of Other Users' Messages}\label{sec:decode}

Assume that every user $i$, for all $i \in \{1,2,\dotsc,L\}$, correctly decodes $\mathbb{U}$, i.e., $\boldsymbol{U}^{(v)} \triangleq \Big(\boldsymbol{S}(A_{1,2}^{(v)}), \boldsymbol{S}(A_{2,3}^{(v)}),$ $\dotsc,
\boldsymbol{S}(A_{L-1,L}^{(v)}),\boldsymbol{S}(B_{d_1}^{(v)}),\boldsymbol{S}(B_{d_2}^{(v)}),\dotsc,  \boldsymbol{S}(B_{d_D}^{(v)})\Big)$ for all $v \in \{1,2,\dotsc,n_s\}$, sent by the relay. Since \eqref{eq:mapping-2} is true, user $i$ can correctly decode $B_{j}^{(v)}$ from $\boldsymbol{S}(B_{j}^{(v)})$, for all $j \in \mathcal{D}$. Recall that $B_k^{(v)} = \varnothing$, for all $k \notin \mathcal{D}$.

Then user $i$ performs the following:
\begin{subequations}
\begin{align}
\boldsymbol{S}(A_{i+1}^{(v)}) &= ( h_{i+1,0}^{-1} \odot \boldsymbol{S}(A_{i,i+1}^{(v)}))\nonumber\\
&\quad \oplus -(h_{i+1,0}^{-1} \odot h_{i,0} \odot \boldsymbol{S}(A_i^{(v)})) \label{eq:chain-decoding-start} \\
\boldsymbol{S}(A_{i+2}^{(v)}) &= ( h_{i+2,0}^{-1} \odot \boldsymbol{S}(A_{i+1,i+2}^{(v)})) \nonumber\\
&\quad  \oplus -(h_{i+2,0}^{-1}  \odot h_{i+1,0} \odot \boldsymbol{S}(A_{i+1}^{(v)}) )\\
& \vdots \nonumber\\
\boldsymbol{S}(A_{L}^{(v)}) &= ( h_{L,0}^{-1} \odot \boldsymbol{S}(A_{L-1,L}^{(v)})) \nonumber\\
&\quad  \oplus -(h_{L,0}^{-1} \odot h_{L-1,0} \odot \boldsymbol{S}(A_{L-1}^{(v)}) )\\
\boldsymbol{S}(A_{i-1}^{(v)}) &= ( h_{i-1,0}^{-1} \odot \boldsymbol{S}(A_{i-1,i}^{(v)}) \nonumber\\
&\quad  \oplus -(h_{i-1,0}^{-1} \odot h_{i,0} \odot \boldsymbol{S}(A_{i}^{(v)} ) )\\
\boldsymbol{S}(A_{i-2}^{(v)}) &= ( h_{i-2,0}^{-1} \odot \boldsymbol{S}(A_{i-2,i-1}^{(v)}) \nonumber\\
&\quad  \oplus -(h_{i-2,0}^{-1} \odot h_{i-1,0} \odot \boldsymbol{S}(A_{i-1}^{(v)}) )\\
& \vdots \nonumber\\
\boldsymbol{S}(A_{1}^{(v)}) &= ( h_{1,0}^{-1} \odot \boldsymbol{S}(A_{1,2}^{(v)}) \oplus -(h_{1,0}^{-1} \odot h_{2,0} \odot \boldsymbol{S}(A_{2}^{(v)} ) ), \label{eq:chain-decoding-stop}
\end{align}
\end{subequations}
to get $(\boldsymbol{S}(A_1^{(v)}), \boldsymbol{S}(A_{2}^{(v)}), \dotsc, \boldsymbol{S}(A_{i-1}^{(v)}), \boldsymbol{S}(A_{i+1}^{(v)}),\dotsc,$ $\boldsymbol{S}(A_{L}^{(v)}))$.
Since \eqref{eq:mapping-1} is true, user $i$ can correctly decode $A_j^{(v)}$ from $\boldsymbol{S}(A_j^{(v)})$, for all $j \in \{1,2,\dotsc,L\} \setminus \{i\}$. Repeating that for all $v \in \{1,2,\dotsc,n_s\}$, user $i$ then obtains all other users' messages, i.e., $\Big\{W_j=(A_j,B_j): j \in \{1,2,\dotsc,L\} \setminus \{i\} \Big\}$.

\subsection{Probability of Error} \label{sec:pe}

In the above analyses, we focused on the first message tuple. Now, we consider all $T$ message tuples. On the uplink, let the decoding error at the relay in the $v$-th fraction of the $l$-th sub-block of the $t$-th message tuple be $P_\text{e}(0,t,l,v)$, for $t \in \{1,2,\dotsc,T\}$, $l \in \{1,2,\dotsc,L+D-1\}$, and $v \in \{1,2,\dotsc,n_s\}$. On the downlink, let the decoding error at user $i$ (of the message $\mathbb{U}$ sent by the relay) of the $t$-th message tuple be $P_\text{e}(i,t)$, for $i \in \{1,\dotsc,L\}$ and $t \in \{1,2,\dotsc,T\}$. 

For the $t$-th message tuple, from Section~\ref{uplink}, if $\frac{n}{n_s}$ is sufficiently large and if \eqref{eq:uplink-3} is satisfied, then $P_\text{e}(0,t,l,v) < \epsilon_1$ for any $\epsilon_1 > 0$, for all $l$ and $v$, meaning that the relay can reliably decode $\mathbb{U}$. If the relay correctly decodes $\mathbb{U}$ (of the $t$-th message tuple) and transmits it on the downlink, from Section~\ref{downlink}, with $n_s$ and $n$ sufficiently large and \eqref{eq:capacity-mwrc-2} satisfied, all users can reliably decode $\mathbb{U}$, i.e., $P_\text{e}(i,t) < \epsilon_2$ for any $\epsilon_2 >0$, for all $i \in \{1,2,\dotsc,L\}$.

Note that $P_\text{e}(i,t)$ for the users, i.e., $i \neq 0$, are found conditioned on the event that the relay has correctly decoded $\mathbb{U}$ (of the $t$-th message tuple in the previous block of transmission). When we calculate the \emph{end-to-end} error probability, $P_\text{e}$, in the remaining of the section, we will show that the event that the relay wrongly decodes (or correctly decodes parts of) $\mathbb{U}$ can be made arbitrarily small  (i.e., we do not assume that the relay correctly decodes $\mathbb{U}$). Combining this with the fact that the probability that some users wrongly decode (or correctly decode parts of) $\mathbb{U}$ given the relay has correctly decoded $\mathbb{U}$ can also be made arbitrarily small, we can make $P_\text{e}$ as small as desired. If the relay makes a decoding error, the error propagates onto the downlink to the users. But we can make the probability of this event arbitrarily small.

Now, if \eqref{eq:uplink-3} is satisfied, we have
\begin{subequations}
\begin{align}
&\Pr\{ \text{Relay makes some decoding error(s)} \}\nonumber\\
&\leq \sum_{t=1}^T \sum_{l=1}^{L+D-1} \sum_{v=1}^{n_s} \Pr \Big\{ \text{Relay wrongly decodes } \boldsymbol{S}(A_{l,l+1}^{(v)}) \text{ or } \nonumber\\
&\quad\quad\quad\quad\quad\quad\quad\quad\quad\,\,\,  \boldsymbol{S}(B_{d_{l-L+1}}^{(v)}) \text{in the $l$-th sub-block for }\nonumber\\
&\quad\quad\quad\quad\quad\quad\quad\quad\quad\,\,\, \text{the $t$-th message tuple} \Big\}\\
& = \sum_{t=1}^T \sum_{l=1}^{L+D-1} \sum_{v=1}^{n_s} P_\text{e}(0,t,l,v)\\
& \leq (L+D-1)Tn_s\epsilon_1,
\end{align}
\end{subequations}
and so
\begin{equation}
\Pr\{ \text{Relay makes no error} \} \geq 1 - (L+D-1)Tn_s\epsilon_1.
\end{equation}

Conditioned on the event that the relay makes no decoding error,  if \eqref{eq:capacity-mwrc-2} is satisfied, we have
\begin{subequations}
\begin{align}
&\Pr\Big\{\text{Some user(s) makes some decoding error(s) }\nonumber\\
&\quad\quad \Big| \text{ Relay makes no error} \Big\}\nonumber\\
&\leq \sum_{i=1}^L \Pr\Big\{\text{User $i$ makes some decoding error(s) } \nonumber\\
&\quad\quad\quad\quad\;\;\;  \Big| \text{ Relay makes no error} \Big\}\\
&\leq \sum_{i=1}^L \sum_{t=1}^T P_\text{e}(i,t)\\
&\leq LT\epsilon_2,
\end{align}
\end{subequations}
and so
\begin{multline}
\Pr\big\{\text{No user makes any decoding error }\\ \big| \text{ Relay makes no error} \big\} \geq 1- LT\epsilon_2.
\end{multline}

This gives
\begin{multline}
\Pr\{ \text{No user makes any decoding error}\}\\ > [1 - (L+D-1)Tn_s\epsilon_1][1- LT\epsilon_2],
\end{multline}
and
\begin{subequations}
\begin{align}
P_\text{e} &\triangleq \Pr\{ \text{Some user(s) makes some error(s)} \}\\
&< 1 - [1 - (L+D-1)Tn_s\epsilon_1][1- LT\epsilon_2] \label{eq:error-probability}\\
&< (L+D-1)Tn_s\epsilon_1 + LT\epsilon_2 - (L+D-1)LT^2n_s\epsilon_1\epsilon_2, \label{eq:small}
\end{align}
\end{subequations}
where $\epsilon_1 \rightarrow 0$ as $\frac{n}{n_s} \rightarrow \infty$, and $\epsilon_2 \rightarrow 0$ as $n_s,n \rightarrow \infty$. The RHS of \eqref{eq:small} can be made arbitrarily small for any $L$, $T$, $D$ (note that $D < L$),  by choosing a sufficiently large $n_s$ and much larger $n$, such that $\frac{n}{n_s}$ is also sufficiently large, making $P_\text{e}$ arbitrarily small.

\subsection{The Capacity Region of the MWRC over a Finite Field}

The preceding analysis means that all rate tuples $(R_1,R_2,\dotsc,R_L)$ satisfying \eqref{eq:uplink-3} and \eqref{eq:capacity-mwrc-2} are achievable. Comparing this achievable region with the capacity upper bound in Theorem~\ref{theorem:general-upper-bound}, we have the following capacity theorem.

\begin{theorem}\label{theorem:mwrc-capacity}
Consider the $L$-user MWRC over a finite field $\mathcal{F}$. The capacity region is the set of all non-negative rate tuples $(R_1,R_2,\dotsc,R_L)$ satisfying
\begin{align}
R_\text{min}^{\text{c}} &\leq \log_2|\mathcal{F}| - H(N_0) \label{eq:mwrc-capacity-1}\\
R_i^{\text{c}} &\leq \log_2|\mathcal{F}| - H(N_i), \quad \forall i \in \{1,2,\dotsc,L\}. \label{eq:mwrc-capacity-2}
\end{align}
\end{theorem}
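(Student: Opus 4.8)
The plan is to read off Theorem~\ref{theorem:mwrc-capacity} as the point where the converse of Theorem~\ref{theorem:general-upper-bound} meets the achievability constructed in Sections~\ref{uplink}--\ref{sec:pe}. Write $\mathcal{R}$ for the set of non-negative tuples $(R_1,\dots,R_L)$ obeying \eqref{eq:mwrc-capacity-1}--\eqref{eq:mwrc-capacity-2}. The converse inclusion $\mathcal{C}\subseteq\mathcal{R}$ is immediate: Theorem~\ref{theorem:general-upper-bound} already states that every achievable tuple satisfies \eqref{eq:mwrc-ub-1}--\eqref{eq:mwrc-ub-2}, which are literally \eqref{eq:mwrc-capacity-1}--\eqref{eq:mwrc-capacity-2}, and since $\mathcal{R}$ is closed (an intersection of half-spaces with the non-negative orthant) while $\mathcal{C}$ is by Definition~\ref{definition:capacity} the closure of the achievable tuples, we get $\mathcal{C}\subseteq\mathcal{R}$ with no further work.

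For the reverse inclusion $\mathcal{R}\subseteq\mathcal{C}$ I would invoke the FDF scheme of Section~\ref{section:general-rate}. Sections~\ref{uplink}, \ref{downlink}, \ref{sec:decode}, and \ref{sec:pe} show that any non-negative \emph{rational} tuple satisfying the \emph{strict} inequalities \eqref{eq:uplink-3} and \eqref{eq:capacity-mwrc-2} is achievable: running linear-code functional decoding with rate splitting on the uplink and Tuncel's joint source--channel scheme on the downlink over $T+1$ blocks delivers $\bigl(\tfrac{T}{T+1}R_1,\dots,\tfrac{T}{T+1}R_L\bigr)$ with $P_\text{e}\to 0$, and letting $T\to\infty$ together with Definition~\ref{definition:capacity} places $(R_1,\dots,R_L)$ in $\mathcal{C}$. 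It then remains to show that the closure of this rational, strictly-feasible set is all of $\mathcal{R}$. When none of the noise variables is uniform --- i.e.\ every right-hand side in \eqref{eq:mwrc-capacity-1}--\eqref{eq:mwrc-capacity-2} is strictly positive --- this is routine: given $(R_1,\dots,R_L)\in\mathcal{R}$, scale it by $1-\tfrac1m$, which makes every non-orthant constraint strict since $R_\text{min}^{\text{c}}$ and each $R_i^{\text{c}}$ are coordinatewise non-decreasing and the bounds are positive, then under-approximate each coordinate by a rational; the resulting tuple lies in $\mathcal{C}$ and the sequence converges to $(R_1,\dots,R_L)$, so $(R_1,\dots,R_L)\in\mathcal{C}$.

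The one place that needs separate care --- and the step I expect to be the real, if modest, obstacle --- is the degenerate stratum in which some noise variable is uniform, so that a cut-set constraint collapses to an equality and the strictly-feasible region used above becomes empty. If $H(N_0)=\log_2|\mathcal{F}|$ then \eqref{eq:mwrc-capacity-1} forces $R_\text{min}^{\text{c}}=0$, hence $R_j=0$ for every $j$ (the $R_j$ being non-negative and $R_\text{min}^{\text{c}}$ their sum minus the smallest), so $\mathcal{R}$ is just the origin, which is trivially achievable. If instead $H(N_i)=\log_2|\mathcal{F}|$ for some $i$, then \eqref{eq:mwrc-capacity-2} forces $R_j=0$ for all $j\neq i$; $\mathcal{R}$ reduces to a segment along the $i$-th axis, and each interior point is achievable by letting only user $i$ transmit at a rate strictly below every relevant point-to-point capacity (capacity-achieving coding to the relay on the uplink, then a plain rebroadcast of $W_i$ on the downlink), with the endpoint recovered by closure. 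Collecting the non-degenerate case and these boundary strata gives $\mathcal{R}\subseteq\mathcal{C}$, and together with the converse this proves $\mathcal{C}=\mathcal{R}$.
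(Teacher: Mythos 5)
Your proposal is correct and takes essentially the same route as the paper: the converse is exactly the cut-set bound of Theorem~\ref{theorem:general-upper-bound}, and achievability is the FDF scheme of Sections~\ref{uplink}--\ref{sec:pe} under the strict inequalities \eqref{eq:uplink-3} and \eqref{eq:capacity-mwrc-2}, with a closure argument bridging the strict, rational-rate achievable set to the closed region $\mathcal{R}$. The only difference is that you make explicit two steps the paper's one-line proof leaves implicit --- the rational-density/scaling argument and the degenerate strata where $H(N_0)=\log_2|\mathcal{F}|$ or $H(N_i)=\log_2|\mathcal{F}|$ for some user $i$ (where the strictly feasible set is empty and a separate single-user argument is needed) --- and both of these are handled correctly.
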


\begin{remark}
Note that in the FDF coding strategy proposed above, each user's transmitted signals only depend on its message and do not depend on its received signals, i.e., $X_i[t] = f_{i,t}(W_i)$, $\forall i,t$. Since this is sufficient to achieve the capacity region, the capacity region remains the same even if we consider the \emph{restricted} MWRC where the users' transmitted signals can only depend on their respective messages and cannot depend on their received signals. This means utilizing feedback does not increase the capacity region of MWRCs over finite fields.
\end{remark}

\begin{remark}\label{remark:capacity}
The capacity region in Theorem~\ref{theorem:mwrc-capacity} is equivalent to the set of all rate tuples $(R_1,R_2,\dotsc,R_L)$ satisfying
\begin{equation}
R_i^{\text{c}} \leq \log_2|\mathcal{F}| - \max \{H(N_0), H(N_i)\},\label{eq:cap}
\end{equation}
for all $i \in \{1,2,\dotsc,L\}$.
\end{remark}

Now, we show that the capacity region in Remark~\ref{remark:capacity}, denoted by $\mathcal{R}$, is convex and hence the convex hull operation is not required. Let two rate tuples be $(R_1^{(1)}, R_2^{(1)}, \dotsc, R_L^{(1)}), (R_1^{(2)}, R_2^{(2)}, \dotsc, R_L^{(2)}) \in \mathcal{R}$. For any $\alpha \in [0,1]$, define $(R_1^{(3)}, R_2^{(3)}, \dotsc, R_L^{(3)})$ such that $R_i^{(3)} = \alpha R_i^{(1)} + (1-\alpha) R_i^{(2)}$, $\forall i$. For this rate tuple, and for all $i \in \{1,2,\dotsc,L\}$, we have
\begin{subequations}
\begin{align}
R_i^{(3)c} &\triangleq \sum_{j=1}^L R_j^{(3)} - R_i^{(3)}\\
&= \sum_{j=1}^L (\alpha R_j^{(1)} + (1-\alpha) R_j^{(2)}) - ( \alpha R_i^{(1)} + (1-\alpha) R_i^{(2)}) \\
&\triangleq \alpha  R_i^{(1)c} + (1-\alpha) R_i^{(2)c} \label{eq:convexity-2} \\
&\leq \log_2|\mathcal{F}| - H(N_0), \label{eq:convexity}
\end{align}
\end{subequations}
where \eqref{eq:convexity} follows from \eqref{eq:cap}.

From \eqref{eq:convexity-2} and \eqref{eq:cap}, we get
\begin{equation}
R_i^{(3)c} \leq \log_2|\mathcal{F}| - H(N_i).
\end{equation}
So, the rate tuple $(R_1^{(3)}, R_2^{(3)}, \dotsc, R_L^{(3)}) \in \mathcal{R}$, meaning that $\mathcal{R}$ is convex.

\subsection{The Common-Rate Capacity of the MWRC over a Finite Field}

Consider the common-rate case where all users transmit at the same rate, i.e., $R_i = R_\text{min}$, for all $i \in \{1,2,\dotsc,L\}$. We have $W_i = A_i$ and $B_i = \varnothing$, for all $i$, i.e., rate splitting is not required. So, using FDF, on the uplink, only the first $(L-1)$ sub-blocks are required for each message tuple for the users to transmit their respective $W_i$ in pairs. On the downlink, since $B_i = \varnothing$ for all $i$, the users do not need to use their own message in decoding $\mathbb{U}$ (c.f. \eqref{eq:down-mwrc-1}--\eqref{eq:down-mwrc-2}), i.e., joint decoding is not required. The users only utilize their respective messages in steps \eqref{eq:chain-decoding-start}--\eqref{eq:chain-decoding-stop} after they have decoded $\mathbb{U}$. FDF without rate splitting and separate source-channel decoding achieves the common-rate capacity, stated in the following corollary.
\begin{corollary}\label{corollary:capacity}
Consider the $L$-user MWRC over a finite field $\mathcal{F}$. The common-rate capacity is
\begin{equation}
C =  \frac{1}{L-1}\left(\log_2 |\mathcal{F}| - \max\limits_{i \in \{0,1,\dotsc,L\}} H(N_i)\right). \label{eq:capacity}
\end{equation}
\end{corollary}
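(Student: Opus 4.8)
The plan is to obtain the common-rate capacity by sandwiching it between the cut-set upper bound of Corollary~\ref{theorem:upper-bound} and the achievable region of Theorem~\ref{theorem:mwrc-capacity}, evaluated at the symmetric point $R_1 = R_2 = \dots = R_L = R$.

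First I would invoke Corollary~\ref{theorem:upper-bound}, which already states $C \le \frac{1}{L-1}\left(\log_2|\mathcal{F}| - \max_{i \in \{0,1,\dots,L\}} H(N_i)\right)$, so the converse is done. For achievability, set $R_i = R$ for every $i$; then $R_\text{min} = R$, every $R_i' = 0$, so $\mathcal{D} = \varnothing$ and $D = 0$, and $R_\text{min}^{\text{c}} = R_i^{\text{c}} = (L-1)R$ for all $i$. Substituting into the achievability constraints \eqref{eq:mwrc-capacity-1}--\eqref{eq:mwrc-capacity-2} of Theorem~\ref{theorem:mwrc-capacity}, the only binding condition becomes $(L-1)R \le \log_2|\mathcal{F}| - H(N_i)$ for all $i \in \{0,1,\dots,L\}$, i.e., every $R$ with $R < \frac{1}{L-1}\left(\log_2|\mathcal{F}| - \max_i H(N_i)\right)$ is achievable. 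Taking the supremum (the capacity region is a closure by Definition~\ref{definition:capacity}) yields the claimed equality, since the same $\max_i H(N_i)$ appears in both the converse and the achievability specialization.

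It is worth spelling out why, at this symmetric point, the general FDF scheme of Sec.~\ref{section:general-rate} collapses to the simpler scheme without rate splitting and with separate source--channel decoding, as asserted just before the corollary. Since all $R_i' = 0$, each message is $W_i = A_i$ with $B_i = \varnothing$, so on the uplink only the first $L-1$ equal-length sub-blocks are used --- there are no rate-splitting sub-blocks --- and the users transmit in the consecutive pairs $(l,l+1)$ exactly as in \cite{ongjohnsonkellett10cl}, the relay decoding $\boldsymbol{S}(A_{l,l+1}^{(v)})$ via Theorem~\ref{theorem:ff-linear-code}. On the downlink the side information is empty for every user, so the two conditions \eqref{eq:down-mwrc-1}--\eqref{eq:down-mwrc-2} of Theorem~\ref{theorem:joint-source-channel-coding} both reduce to $n_s H(\boldsymbol{U}) < n\, I(X_0;\tilde{Y}_i)$, i.e., ordinary (separate) channel coding of $\mathbb{U}$; the users invoke their own messages only afterwards, in the chain-decoding steps \eqref{eq:chain-decoding-start}--\eqref{eq:chain-decoding-stop}. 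The error-probability bookkeeping of Sec.~\ref{sec:pe} then goes through verbatim with $D = 0$.

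There is no substantial obstacle here: the corollary is essentially a direct specialization of the general-rate result. The only point needing a little care is the passage from ``all rates strictly below the bound are achievable'' to the stated equality, which relies on the definition of the common-rate capacity as a supremum together with the observation that the achievability bound and the converse bound coincide exactly at $R = \frac{1}{L-1}\left(\log_2|\mathcal{F}| - \max_i H(N_i)\right)$.
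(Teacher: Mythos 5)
Your proposal is correct and follows essentially the same route as the paper: specialize the capacity region of Theorem~\ref{theorem:mwrc-capacity} to the symmetric point $R_i=R$ (so $R_i^{\text{c}}=R_\text{min}^{\text{c}}=(L-1)R$) for achievability, and invoke the cut-set bound of Corollary~\ref{theorem:upper-bound} for the converse. The extra discussion of how the scheme collapses (no rate splitting, separate source--channel decoding when all $B_i=\varnothing$) matches the paper's remarks preceding the corollary and is not needed for the proof itself.
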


\begin{proof}[Proof of Corollary~\ref{corollary:capacity}]
For the common-rate case, $R_i \triangleq R$, $\forall i\in \{1,2,\dotsc,L\}$ and we have $R_\text{min}^{\text{c}} = R_i^{\text{c}} = (L-1)R$, $\forall i$. From Theorem~\ref{theorem:mwrc-capacity}, all non-negative rate tuples $(R,R,\dotsc,R)$ satisfying
\begin{equation}
(L-1)R \leq \log_2|\mathcal{F}| - H(N_i), \quad \forall i \in \{0,1,\dotsc,L\},
\end{equation}
are achievable. So, common rates up to $\left(\log_2 |\mathcal{F}| - \max_{i \in \{0,1,\dotsc,L\}} H(N_i)\right)/(L-1)$ are achievable. From Corollary~\ref{theorem:upper-bound}, we know that this is a capacity upper bound.
\end{proof}

\section{A Case Study: The Binary Two-Way Relay Channel}\label{section:case-study}

In this section, we study the special case of the binary TWRC to illustrate the role of rate-splitting and joint source-channel decoding in achieving the capacity region. In the notation of this paper, we study the case where $L=2$, $\mathcal{F} = \{0,1\} \triangleq \mathcal{F}_2$, $\oplus$ and $\odot$ are addition and multiplication in modulo-two respectively. By definition, $h_{1,0} = h_{2,0} = h_{0,1} = h_{0,2} = 1$, since they cannot be zero. For the binary TWRC, the noise variables $N_0$, $N_1$, and  $N_2$ are each binary, and we can define $\rho_i \in [0,1]$ such that $\rho_i = \Pr\{N_i = 1\}$ and $H(\rho_i) = H(N_i) = -\rho_i\log_2\rho_i - (1-\rho_i)\log_2(1-\rho_i)$. Without loss of generality, we consider $\rho_i \in [0,\frac{1}{2}]$ for all $i \in \{0,1,2\}$. Although the capacity region of the binary TWRC has been reported in \cite{knopp07,namchung08}, we use this example to highlight the components of our scheme and to compare FDF with the complete-decode-forward (CDF) strategy.

\subsection{Functional-Decode-Forward with Rate Splitting and Joint Source-Channel Decoding}\label{section:fdf-rs-joint}
From Theorem~\ref{theorem:mwrc-capacity}, FDF with rate splitting and joint source-channel decoding achieves all non-negative rate pairs $(R_1,R_2)$ satisfying
\begin{align}
R_1,R_2 &< 1 - H(\rho_0)\label{eq:capacity-binary-1}\\
R_1 &< 1 - H(\rho_2)\label{eq:capacity-binary-2}\\
R_2 &< 1 - H(\rho_1)\label{eq:capacity-binary-3},
\end{align}
whose closure gives the capacity region. 

\subsection{Functional-Decode-Forward with Rate Splitting and Separate Source-Channel Decoding}

Now, we find the achievable rate region using FDF with rate splitting but with \emph{separate} source-channel decoding.

The coding on the uplink is the same as that in  Sec.~\ref{uplink}, i.e., using linear codes, functional decoding and rate splitting. First, we assume that $R_2 \geq R_1$, and hence $W_1 = A_1$ and $W_2 = (A_2,B_2)$. So, on the uplink, from \eqref{eq:uplink-3}, if $R_2 \leq 1 - H(\rho_0)$, then the relay can reliably decode $([\boldsymbol{S}(A_{1,2}^{(v)})]_{\forall v},[\boldsymbol{S}(B_2^{(v)})]_{\forall v})$.

Now, instead of using the joint source-channel decoding for the downlink described in Sec.~\ref{downlink}, we will use separate source-channel decoding in the sense that the users do not use their own messages in channel decoding. We re-cast the downlink as a \emph{broadcast channel with degraded message sets} \cite{kornermarton77}, where a source broadcasts a common message to two destinations and a private message to one of the destinations, and where both the destinations do not know the messages \emph{a priori}. Applying this to the downlink of the binary TWRC, we have the relay sending $[\boldsymbol{S}(A_{1,2}^{(v)})]_{\forall v}$ to both users, and $[\boldsymbol{S}(B_2^{(v)})]_{\forall v}$ to user 1, and the users do not use their own messages in the channel decoding of $[\boldsymbol{S}(A_{1,2}^{(v)})]_{\forall v}$ and $[\boldsymbol{S}(B_2^{(v)})]_{\forall v}$. 

Recall that $[\boldsymbol{S}(A_{1,2}^{(v)})]_{\forall v}$ is an $nR_1$-bit message and $[\boldsymbol{S}(B_2^{(v)})]_{\forall v}$ an $nR_2'$-bit message.
From \cite{kornermarton77}, if $R_1 < 1 - H\big(\beta(1-\rho_2) + (1-\beta)\rho_2\big)$, $R_2' < H\big(\beta(1-\rho_1) + (1-\beta)\rho_1\big) - H(\rho_1)$, and $R_1 + R_2' < 1 - H(\rho_1)$, for some $0 \leq \beta \leq \frac{1}{2}$, then both the users can reliably decode $[\boldsymbol{S}
(A_{1,2}^{(v)})]_{\forall v}$ and user 1 can reliably decode $[\boldsymbol{S}(B_2^{(v)})]_{\forall v}$ purely from their respective received signals $\boldsymbol{Y}_i$. Of course, after decoding $[\boldsymbol{S}(A_{1,2}^{(v)})]_{\forall v}$ and  $[\boldsymbol{S}(B_2^{(v)})]_{\forall v}$ (for user 1), the users must follow the steps in \eqref{eq:chain-decoding-start}--\eqref{eq:chain-decoding-stop} to obtain the other user's message. But as far as channel decoding on the downlink is concerned, the users' own messages are not used (as side information).

Combining the rate constraints on the uplink and on the downlink, we have the following achievable rate region:
\begin{theorem}\label{theorem:fdf-rs-separate}
Consider the two-user MWRC over $\mathcal{F}_2$. FDF with rate splitting and separate source-channel decoding achieves the convex hull of $\mathcal{R}_1$ and $\mathcal{R}_2$, where
\begin{itemize}
\item $\mathcal{R}_1$ is the set of all non-negative rate pairs $(R_1,R_1 + R_2')$ satisfying
\begin{align}
R_1 &< 1 - H\big(\beta(1-\rho_2) + (1-\beta)\rho_2\big) \label{eq:beta-r1}\\
R_2' &< H\big(\beta(1-\rho_1) + (1-\beta)\rho_1\big) - H(\rho_1)\\
R_1 + R_2' &< 1 - \max\{H(\rho_0), H(\rho_1)\},
\end{align}
for some $0 \leq \beta \leq \frac{1}{2}$.
\item $\mathcal{R}_2$ is the set of all non-negative rate pairs $(R_2 + R_1',R_2)$ satisfying
\begin{align}
R_2 &< - H\big(\alpha(1-\rho_1) + (1-\alpha)\rho_1\big) \label{eq:beta-r2}\\
R_1' &< H\big(\alpha(1-\rho_2) + (1-\alpha)\rho_2\big) - H(\rho_2)\\
R_2 + R_1' &< 1 - \max\{H(\rho_0), H(\rho_2)\},
\end{align}
for some $0 \leq \alpha \leq \frac{1}{2}$.
\end{itemize}
\end{theorem}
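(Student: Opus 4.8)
The plan is to prove $\mathcal{R}_1$ and $\mathcal{R}_2$ achievable separately and then close the region under time-sharing to obtain their convex hull. By the symmetry obtained from relabelling users $1$ and $2$, it suffices to establish $\mathcal{R}_1$; this is the region relevant when $R_2 \geq R_1$, so that $W_1 = A_1$, $W_2 = (A_2, B_2)$, and $R_2' = R_2 - R_1 \geq 0$. Throughout I would reuse the two-block (uplink/downlink) FDF architecture of Section~\ref{section:general-rate}, changing only the downlink decoder.

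\emph{Uplink.} I would invoke the uplink construction of Section~\ref{uplink} specialised to $L=2$ and $\mathcal{F}=\mathcal{F}_2$: the two active users transmit linear codes, and over $\mathcal{F}_2$ the ``messages'' $\boldsymbol{S}(A_{1,2}^{(v)}) = \boldsymbol{S}(A_1^{(v)}) \oplus \boldsymbol{S}(A_2^{(v)})$ and $\boldsymbol{S}(B_2^{(v)})$ are themselves linear codewords. Since $R_\text{min}^{\text{c}} = R_2$ in the two-user case with $R_2 \geq R_1$, Theorem~\ref{theorem:ff-linear-code} together with the argument around \eqref{eq:uplink-3} shows the relay reliably recovers $\mathbb{U} = \big([\boldsymbol{S}(A_{1,2}^{(v)})]_{\forall v}, [\boldsymbol{S}(B_2^{(v)})]_{\forall v}\big)$ provided $R_2 < 1 - H(\rho_0)$, where $[\boldsymbol{S}(A_{1,2}^{(v)})]$ carries $nR_1$ bits and $[\boldsymbol{S}(B_2^{(v)})]$ carries $nR_2'$ bits.

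\emph{Downlink.} This is where the scheme departs from Section~\ref{downlink}: rather than joint source-channel decoding, recast the downlink as a binary symmetric broadcast channel (two BSCs with crossovers $\rho_1,\rho_2$ driven by the common input $X_0$, hence stochastically degraded) with degraded message sets \cite{kornermarton77}, sending the common message $[\boldsymbol{S}(A_{1,2}^{(v)})]$ (rate $R_1$) to both users and the private message $[\boldsymbol{S}(B_2^{(v)})]$ (rate $R_2'$) to user $1$ only, with neither user using its own message as side information. Evaluating the Körner--Marton achievable region with $U \sim \mathrm{Bern}(1/2)$ and $X_0 = U \oplus V$, $V \sim \mathrm{Bern}(\beta)$, gives $I(U;Y_2) = 1 - H(\beta(1-\rho_2)+(1-\beta)\rho_2)$, $I(X_0;Y_1|U) = H(\beta(1-\rho_1)+(1-\beta)\rho_1) - H(\rho_1)$, and $I(X_0;Y_1) = 1 - H(\rho_1)$, yielding \eqref{eq:beta-r1} and its two companions; intersecting the downlink sum constraint $R_1+R_2' < 1-H(\rho_1)$ with the uplink constraint $R_2 = R_1+R_2' < 1-H(\rho_0)$ produces $R_1+R_2' < 1 - \max\{H(\rho_0),H(\rho_1)\}$. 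Once $\mathbb{U}$ is decoded, each user extracts the other's message through the chain-decoding steps \eqref{eq:chain-decoding-start}--\eqref{eq:chain-decoding-stop}, exactly as in Section~\ref{sec:decode}. This establishes $\mathcal{R}_1$, and $\mathcal{R}_2$ follows by relabelling. Finally, since $\mathcal{R}_1$ (a union over $\beta$) and $\mathcal{R}_2$ (a union over $\alpha$) need not themselves be convex, a standard time-sharing argument over a long sequence of independent blocks achieves every convex combination of a point of $\mathcal{R}_1$ and a point of $\mathcal{R}_2$, giving the claimed convex hull.

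I do not expect any single step to be hard; the main care needed is in the Körner--Marton bookkeeping for the binary symmetric broadcast channel — in particular, verifying that user $1$, who decodes both layers jointly, requires no separate constraint of the form $R_1 < I(U;Y_1)$ (decoding $X_0$ already delivers $U$), and tracking that the rates of the linear-code ``messages'' $[\boldsymbol{S}(A_{1,2}^{(v)})]$ and $[\boldsymbol{S}(B_2^{(v)})]$ translate back without loss to the original-message rates $R_1$ and $R_2' = R_2 - R_1$, which rests on the entropy accounting already developed in Remark~\ref{remark:U-and-A}.
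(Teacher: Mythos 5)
Your proposal is correct and follows essentially the same route as the paper: the uplink analysis of Sec.~\ref{uplink} giving $R_1+R_2'=R_2<1-H(\rho_0)$, the downlink recast as a broadcast channel with degraded message sets via K\"orner--Marton (common message $[\boldsymbol{S}(A_{1,2}^{(v)})]_{\forall v}$, private message $[\boldsymbol{S}(B_2^{(v)})]_{\forall v}$ to user 1), chain decoding as in \eqref{eq:chain-decoding-start}--\eqref{eq:chain-decoding-stop}, relabelling of users for $\mathcal{R}_2$, and time sharing for the convex hull. The only difference is that you explicitly evaluate the K\"orner--Marton region with the superposition choice $X_0=U\oplus V$, $V\sim\mathrm{Bern}(\beta)$, whereas the paper simply cites \cite{kornermarton77} for the resulting constraints.
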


\begin{proof}[Proof of Theorem~\ref{theorem:fdf-rs-separate}]
$\mathcal{R}_1$ follows directly from the above-mentioned rate constraints. $\mathcal{R}_2$ is obtained by reversing the role of users 1 and 2 for the case $R_1 \geq R_2$. Using time sharing, the convex hull of $\mathcal{R}_1$ and $\mathcal{R}_2$ is achievable.
\end{proof}

\begin{remark}
We can show that when $\rho_1 \leq \rho_2$, $\mathcal{R}_2 \subseteq \mathcal{R}_1$; and vice versa. Hence, for any channel setting, it is sufficient to consider only one region in Theorem~\ref{theorem:fdf-rs-separate}.
\end{remark}

Now, we show that FDF with rate splitting and separate source-channel decoding achieves the capacity region of the binary TWRC under certain conditions.

\begin{lemma} \label{lemma:fdf-separate-optimal}
Consider the two-user MWRC over $\mathcal{F}_2$. If
\begin{enumerate}
\item $\rho_0 \geq \max \{ \rho_1, \rho_2 \}$, or
\item $\rho_1 = \rho_2$,
\end{enumerate}
then FDF with rate splitting and separate source-channel decoding achieves the capacity region.
\end{lemma}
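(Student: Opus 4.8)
The plan is to observe that under either hypothesis the capacity region collapses to a square, and that the single ``symmetric corner'' of that square already lies in $\mathcal{R}_1$ after an arbitrarily small back-off; everything else is soft.

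\textbf{Reduction.} Specializing Theorem~\ref{theorem:mwrc-capacity} to $L=2$ over $\mathcal{F}_2$ (equivalently \eqref{eq:capacity-binary-1}--\eqref{eq:capacity-binary-3}), the capacity region is the rectangle $\{(R_1,R_2)\ge 0:\ R_1\le C_1,\ R_2\le C_2\}$ with $C_1 = 1-\max\{H(\rho_0),H(\rho_2)\}$ and $C_2 = 1-\max\{H(\rho_0),H(\rho_1)\}$ (for $L=2$ there is no sum-rate constraint). Hypothesis~(1) makes both maxima equal to $H(\rho_0)$, and hypothesis~(2) makes them equal since $\rho_1=\rho_2$; so in both cases $C_1=C_2=:c$ and the capacity region is the square $[0,c]^2$. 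The case $c=0$ is vacuous, so assume $c>0$, which forces $\rho_0,\rho_1,\rho_2<\tfrac12$. By the converse (Theorem~\ref{theorem:general-upper-bound}) the region achieved by any scheme is contained in $[0,c]^2$, so it suffices to prove the reverse inclusion for the region of Theorem~\ref{theorem:fdf-rs-separate}. Since a set of achievable rate tuples is closed under decreasing any coordinate (restrict a user to a sub-codebook), it is enough to show that the point $(c-\epsilon,c-\epsilon)$ is achievable for every $\epsilon\in(0,c)$; then $[0,c-\epsilon]^2$ is achievable, and $\epsilon\downarrow 0$ yields $[0,c]^2$.

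\textbf{The symmetric corner.} I claim $(c-\epsilon,c-\epsilon)\in\mathcal{R}_1$. In the definition of $\mathcal{R}_1$ take $R_1=c-\epsilon$, $R_2'=0$, and a small parameter $\beta>0$; write $\beta\star\rho\triangleq\beta(1-\rho)+(1-\beta)\rho=\rho+\beta(1-2\rho)$, which for $\rho,\beta\le\tfrac12$ lies in $[\rho,\tfrac12]$ and decreases to $\rho$ as $\beta\downarrow0$. The three inequalities defining $\mathcal{R}_1$ reduce to (i)~$c-\epsilon<1-H(\beta\star\rho_2)$, (ii)~$0<H(\beta\star\rho_1)-H(\rho_1)$, and (iii)~$c-\epsilon<1-\max\{H(\rho_0),H(\rho_1)\}=c$. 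Inequality (iii) holds because $\epsilon>0$; (ii) holds for every $\beta>0$ since $\rho_1<\tfrac12$ makes $\beta\star\rho_1>\rho_1$ and $H$ is strictly increasing on $[\rho_1,\tfrac12]$. For (i), the hypothesis is used: $1-c=\max\{H(\rho_0),H(\rho_2)\}\ge H(\rho_2)$, so $1-c+\epsilon>H(\rho_2)=\lim_{\beta\downarrow0}H(\beta\star\rho_2)$, and by continuity $H(\beta\star\rho_2)<1-c+\epsilon$ for all sufficiently small $\beta>0$. Fixing such a $\beta$ puts $(c-\epsilon,c-\epsilon)$ in $\mathcal{R}_1$, so by Theorem~\ref{theorem:fdf-rs-separate} it is achievable, and the reduction above finishes the proof. (If one prefers to avoid invoking the comprehensiveness of rate regions, one checks instead that $(c-\epsilon,0)\in\mathcal{R}_2$ and $(0,c-\epsilon)\in\mathcal{R}_1$ by the same type of computation, taking the binary-convolution parameter near $\tfrac12$, and then uses convexity of the achievable region.)

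\textbf{Where the difficulty sits.} The whole weight of the proof is on inequality (i), and on the observation that hypotheses (1) and (2) are exactly the situations in which $\max\{H(\rho_0),H(\rho_1)\}=\max\{H(\rho_0),H(\rho_2)\}$: this is what lets $R_2'$ be taken to be $0$ at the symmetric corner, freeing the parameter $\beta$ so that letting $\beta\downarrow0$ relaxes the downlink common-message constraint $R_1<1-H(\beta\star\rho_2)$ up to $R_1<1-H(\rho_2)$, which is no tighter than the cut-set bound $R_1\le c$ since $H(\rho_2)\le1-c$. When neither hypothesis holds, $C_1\ne C_2$, the relevant corner has $R_2'>0$, and the same constraint then demands $\beta$ both small (to protect the $\rho_2$ link carrying $A_{1,2}$) and large (to open enough rate for $B_2$ on the $\rho_1$ link); this conflict is precisely why separate source--channel decoding is in general strictly suboptimal, as the remainder of Sec.~\ref{section:case-study} will show.
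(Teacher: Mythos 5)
Your proposal is correct and follows essentially the same route as the paper: under either hypothesis the binding constraints collapse so that the capacity region is a square, and it is covered by $\mathcal{R}_1$ (or $\mathcal{R}_2$) of Theorem~\ref{theorem:fdf-rs-separate} with no rate splitting ($R_2'=0$) and the time-sharing parameter taken to (the paper: equal to) zero. Your version is simply a more careful rendering of the same idea — using $\beta>0$ small with a continuity/$\epsilon$ argument to avoid the degenerate strict inequality at $\beta=0$, and invoking downward-closure/convexity to pass from the corner point to the full square, which the paper asserts directly.
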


\begin{proof}[Proof of Lemma~\ref{lemma:fdf-separate-optimal}]
First, consider the case $\rho_1 \leq \rho_2$, i.e., $H(\rho_2) \geq H(\rho_1)$. If
\begin{equation}
\rho_0 \geq \rho_2 \Leftrightarrow H(\rho_0) \geq H(\rho_2),
\end{equation}
we have
\begin{equation}
1 - H(\rho_0) \leq 1 - H(\rho_2) \leq 1 - H(\rho_1). \label{eq:noisy-up}
\end{equation}
Then by setting $\beta=0$, i.e., $R_2'=0$, $\mathcal{R}_1$ in Theorem~\ref{theorem:fdf-rs-separate} becomes
\begin{equation}
\{ (R_1,R_2): 0 \leq R_1, R_2 < 1 - H(\rho_0) \}. \label{eq:fdf-capacity}
\end{equation}
The closure of the above region coincides with the capacity region since \eqref{eq:capacity-binary-1} implies \eqref{eq:capacity-binary-2} and \eqref{eq:capacity-binary-3} when \eqref{eq:noisy-up} is true.

Similarly, for the case of $\rho_2 \leq \rho_1$, if $\rho_0 \geq \rho_1$, then the closure of $\mathcal{R}_2$ (with $\alpha=0$) in Theorem~\ref{theorem:fdf-rs-separate}  coincides with the capacity region.

Next, consider the case $\rho_1 = \rho_2$, i.e., $H(\rho_1)=H(\rho_2)$. By setting $\beta=0$, i.e., $R_2'=0$, $\mathcal{R}_1$ in Theorem~\ref{theorem:fdf-rs-separate} becomes
\begin{multline}
\{ (R_1,R_2): 0 \leq R_i < 1 - H(\rho_1), 0 \leq R_i < 1- H(\rho_0),\\ \text{ for } i = 1,2 \},
\end{multline}
whose closure also coincides with the capacity region.
\end{proof}

\subsection{Complete-Decode-Forward}
Using CDF, the relay fully decodes both $W_1$ (of $nR_1$ bits) and $W_2$ (of $nR_2$ bits) on the uplink, which is a multiple-access channel. So, if
\begin{align}
R_1 &< 1 - H(\rho_0)\label{eq:mac-1}\\
R_2 &< 1 - H(\rho_0)\label{eq:mac-2}\\
R_1 + R_2 &< 1 - H(\rho_0),\label{eq:mac-3}
\end{align}
then the relay can reliably decode $W_1$ and $W_2$~\cite{ahlswede71,liao72}. Note that \eqref{eq:mac-3} implies \eqref{eq:mac-1} and \eqref{eq:mac-2}.

Assuming that the relay has successfully decoded $W_1$ and $W_2$, it broadcasts $(W_1,W_2)$ on the downlink. Using joint source-channel decoding, each user $i$, $i\in \{1,2\}$, can reliably decode the other user's message from their respective received signals $\boldsymbol{Y}_i$ and their own messages $W_i$ if~\cite{kramershamai07,oechteringschnurr08}
\begin{align}
R_1 &< 1 - H(\rho_2)\\
R_2 &< 1 - H(\rho_1).
\end{align}

Combining the uplink and the downlink constraints, the achievable rate region using CDF is given by the following theorem:

\begin{theorem}\label{theorem:cdf}
Consider the two-user MWRC over $\mathcal{F}_2$. CDF achieves all non-negative rate pairs $(R_1,R_2)$ satisfying
\begin{align}
R_1 &< 1 - H(\rho_2)\label{eq:fdf-1} \\
R_2 &< 1 - H(\rho_1)\label{eq:fdf-2} \\
R_1 + R_2 &< 1 - H(\rho_0). \label{eq:fdf-3}
\end{align}
\end{theorem}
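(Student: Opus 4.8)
The plan is to reuse the $(T+1)$-block transmission architecture of Sec.~\ref{section:general-rate}, changing only what the relay decodes: under CDF the relay decodes the pair $(W_1,W_2)$ in full on the uplink of each block and retransmits it on the downlink of the next block. Within a single block it then suffices to (i) determine the rates at which the relay can reliably recover $(W_1,W_2)$ from $\boldsymbol{Y}_0$, treating the uplink as a multiple-access channel, and (ii) determine the rates at which, given that the relay knows $(W_1,W_2)$, each user can recover the other user's message from $\boldsymbol{Y}_i$ together with its own message, treating the downlink as a broadcast channel with receiver side information. The asserted region then emerges as the intersection of the two sets of constraints.

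For the uplink the relay observes $Y_0 = X_1 \oplus X_2 \oplus N_0$ (here $h_{1,0}=h_{2,0}=1$). I would have users $1$ and $2$ use independent codebooks with codeletters i.i.d.\ uniform on $\mathcal{F}_2$ and have the relay perform joint-typicality decoding. By the classical multiple-access result~\cite{ahlswede71,liao72}, $(W_1,W_2)$ is reliably decodable whenever $R_1 < I(X_1;Y_0|X_2)$, $R_2 < I(X_2;Y_0|X_1)$, and $R_1+R_2 < I(X_1,X_2;Y_0)$ for the chosen input distribution. With uniform binary inputs one has $H(Y_0|X_1,X_2)=H(N_0)=H(\rho_0)$ while $X_1\oplus N_0$, $X_2\oplus N_0$, and $Y_0$ are each uniform on $\mathcal{F}_2$, so all three mutual informations equal $1-H(\rho_0)$; this yields exactly \eqref{eq:mac-1}--\eqref{eq:mac-3}, with the sum constraint \eqref{eq:mac-3} subsuming the other two.

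For the downlink the relay now holds the $n(R_1+R_2)$-bit message $(W_1,W_2)$ and must convey it to both users, where user $i$ already knows $W_i$ \emph{a priori}. Splitting $W_1$ and $W_2$ into $n_s$ i.i.d.\ chunks exactly as in Sec.~\ref{downlink} so that Theorem~\ref{theorem:joint-source-channel-coding} applies (with $\boldsymbol{U}$ the combined message, $\boldsymbol{S}_1=W_1$, $\boldsymbol{S}_2=W_2$), reliable recovery of $(W_1,W_2)$ holds provided $n_sH(\boldsymbol{U}|\boldsymbol{S}_1) < nI(X_0;Y_1)$ and $n_sH(\boldsymbol{U}|\boldsymbol{S}_2) < nI(X_0;Y_2)$. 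Since the messages are independent, $H(\boldsymbol{U}|\boldsymbol{S}_1)$ corresponds to rate $R_2$ and $H(\boldsymbol{U}|\boldsymbol{S}_2)$ to rate $R_1$, and with a uniform $X_0$ one has $I(X_0;Y_i)=1-H(\rho_i)$; this gives $R_2 < 1-H(\rho_1)$ and $R_1 < 1-H(\rho_2)$, i.e.\ \eqref{eq:fdf-1}--\eqref{eq:fdf-2}. (Equivalently one may simply quote the two-way-relay downlink results~\cite{kramershamai07,oechteringschnurr08}.) Once a user has decoded the full pair $(W_1,W_2)$ it trivially reads off the other user's message.

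Finally I would stitch the blocks together as in Sec.~\ref{sec:pe}: the relay decodes the $t$-th message tuple in block $t$ and retransmits it in block $t+1$; a union bound over the $T$ blocks and the two decoding stages makes the end-to-end error probability vanish as $n$ (and $n_s$) grow, and taking $T$ large makes the achieved rate pair $\big(\tfrac{TnR_1}{(T+1)n},\tfrac{TnR_2}{(T+1)n}\big)$ converge to $(R_1,R_2)$. The intersection of the uplink and downlink constraints is precisely \eqref{eq:fdf-1}--\eqref{eq:fdf-3}, whose closure is the claimed region. I do not expect a genuine obstacle, as every ingredient is a cited standard result; the only points needing care are the usual block-Markov error-propagation bookkeeping (conditioning on the relay decoding correctly, then separately bounding the probability that it does not, exactly as in Sec.~\ref{sec:pe}) and the verification that the single uniform input distribution simultaneously attains all three uplink mutual informations and both downlink ones.
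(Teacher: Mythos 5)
Your proposal is correct and follows essentially the same route as the paper: treat the uplink as a multiple-access channel (giving the sum-rate bound $1-H(\rho_0)$, which subsumes the individual bounds) and the downlink as a broadcast channel where each user exploits its own message as side information (giving $R_1 < 1-H(\rho_2)$ and $R_2 < 1-H(\rho_1)$), then combine the constraints over the block-structured relaying. The only cosmetic difference is that you re-derive the downlink bounds via Theorem~\ref{theorem:joint-source-channel-coding}, whereas the paper simply cites \cite{kramershamai07,oechteringschnurr08}, which you yourself note is equivalent.
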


CDF achieves the capacity region under the following conditions.
\begin{lemma}\label{lemma:cdf-optimal}
Consider the two-user MWRC over $\mathcal{F}_2$. If
\begin{equation}
H(\rho_0) \leq H(\rho_1) + H(\rho_2) - 1,
\end{equation}
then CDF achieves the capacity region.
\end{lemma}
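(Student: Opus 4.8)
The plan is to compare the CDF achievable region of Theorem~\ref{theorem:cdf} with the capacity region of the two-user binary MWRC, which by Theorem~\ref{theorem:mwrc-capacity} (specialized in \eqref{eq:capacity-binary-1}--\eqref{eq:capacity-binary-3}) is the closure of the set of non-negative $(R_1,R_2)$ with $R_1 < 1-H(\rho_2)$, $R_2 < 1-H(\rho_1)$, and $R_1,R_2 < 1-H(\rho_0)$. Since every rate pair that CDF achieves is, by Definition~\ref{definition:rate}, achievable, the CDF region is always contained in the capacity region; hence it suffices to establish the reverse inclusion under the stated hypothesis.

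First I would rewrite the hypothesis $H(\rho_0) \le H(\rho_1)+H(\rho_2)-1$ equivalently as
\[
1 - H(\rho_0) \ \ge\ \bigl(1-H(\rho_1)\bigr) + \bigl(1-H(\rho_2)\bigr),
\]
and note that, because $\rho_i \in [0,\tfrac12]$ forces $0 \le H(\rho_i) \le 1$, both terms on the right-hand side are non-negative. Now take any $(R_1,R_2)$ in the interior of the capacity region, so in particular $R_1 < 1-H(\rho_2)$ and $R_2 < 1-H(\rho_1)$. Adding these two strict inequalities and invoking the displayed bound gives
\[
R_1 + R_2 \ <\ \bigl(1-H(\rho_2)\bigr) + \bigl(1-H(\rho_1)\bigr) \ \le\ 1 - H(\rho_0),
\]
so $(R_1,R_2)$ satisfies \eqref{eq:fdf-1}, \eqref{eq:fdf-2}, and \eqref{eq:fdf-3} simultaneously and therefore lies in the CDF region of Theorem~\ref{theorem:cdf}. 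Thus the interior of the capacity region is contained in the CDF region; taking closures, the capacity region is contained in the closure of the CDF region. Combined with the trivial reverse inclusion, CDF achieves the capacity region.

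The ``hard part'' here is essentially absent: the only substantive observation is that the hypothesis is precisely the statement that the multiple-access sum-rate constraint \eqref{eq:fdf-3} on the uplink is rendered redundant by the two single-user downlink constraints \eqref{eq:fdf-1}--\eqref{eq:fdf-2}, after which the argument is a two-line inequality chase. The one point I would state with care is the use of $H(\rho_i)\le 1$ (valid since $\rho_i\le\tfrac12$) to discard a non-negative term, which is what lets one also see that the capacity-region constraints $R_i < 1-H(\rho_0)$ are implied by the hypothesis together with $R_1<1-H(\rho_2)$, $R_2<1-H(\rho_1)$; so both regions in fact collapse to $\{\,R_1<1-H(\rho_2),\ R_2<1-H(\rho_1)\,\}$.
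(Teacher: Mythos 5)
Your proposal is correct and matches the paper's own argument: both proofs observe that the hypothesis, rewritten as $1-H(\rho_0)\geq\bigl(1-H(\rho_1)\bigr)+\bigl(1-H(\rho_2)\bigr)$, makes the CDF sum-rate constraint redundant given the two individual constraints, and (using non-negativity of $1-H(\rho_i)$) makes the capacity-region uplink constraints $R_i<1-H(\rho_0)$ redundant as well, so both regions collapse to $\{R_1<1-H(\rho_2),\ R_2<1-H(\rho_1)\}$. Your extra care about the trivial inclusion and closures is fine but does not change the substance.
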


\begin{proof}[Proof of Lemma~\ref{lemma:cdf-optimal}]
\begin{align}
&H(\rho_0) \leq H(\rho_1) + H(\rho_2) - 1\\
\Leftrightarrow \quad &1 - H(\rho_0)  \geq 1 - H(\rho_1) + 1 - H(\rho_2) \label{eq:cdf-condition}\\
\Rightarrow \quad &H(\rho_1) \geq H(\rho_0) \text{ and } H(\rho_2) \geq H(\rho_0). \label{eq:cdf-condition-2}
\end{align}
From \eqref{eq:cdf-condition}, we know that conditions \eqref{eq:fdf-1} and \eqref{eq:fdf-2} imply \eqref{eq:fdf-3}. In this case, CDF achieves the following rate region
\begin{equation}
\{ (R_1,R_2): 0 \leq R_1 < 1 - H(\rho_2), 0 \leq R_2 < 1 - H(\rho_1) \},
\end{equation}
whose closure is the capacity region since \eqref{eq:cdf-condition-2}, \eqref{eq:capacity-binary-2} and \eqref{eq:capacity-binary-3} imply \eqref{eq:capacity-binary-1}.
\end{proof}

\subsection{Numerical Calculations and Discussion}

We denote FDF with rate splitting and joint source-channel decoding by FDF-RS (joint), and FDF with rate splitting and separate source-channel decoding by FDF-RS (separate) for the discussion in this section.

\begin{figure}[t]
\centering
\resizebox{8.8cm}{!}{
\begin{picture}(0,0)%
\includegraphics{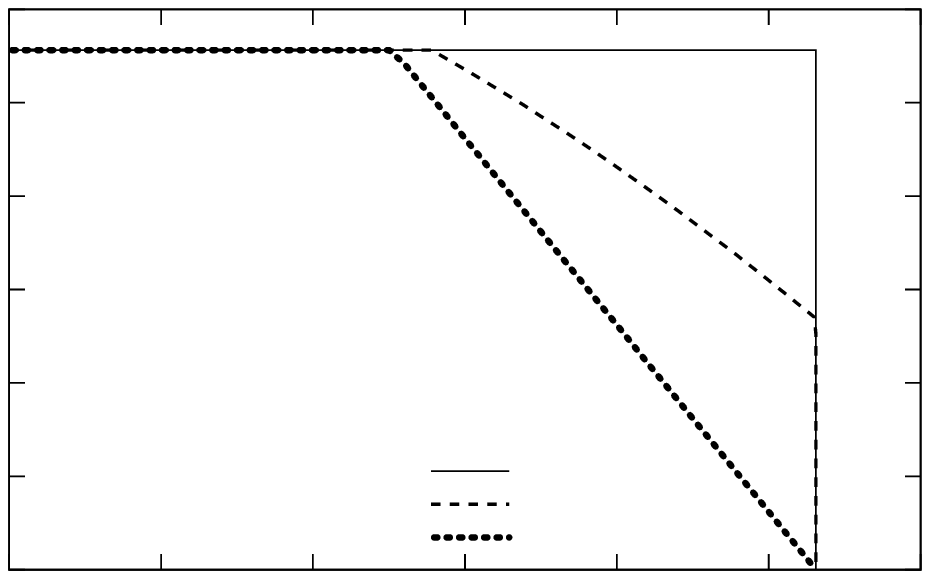}%
\end{picture}%
\setlength{\unitlength}{3947sp}%
\begingroup\makeatletter\ifx\SetFigFont\undefined%
\gdef\SetFigFont#1#2#3#4#5{%
  \reset@font\fontsize{#1}{#2pt}%
  \fontfamily{#3}\fontseries{#4}\fontshape{#5}%
  \selectfont}%
\fi\endgroup%
\begin{picture}(5220,3182)(1346,-3995)
\put(1963,-3623){\makebox(0,0)[rb]{\smash{{\SetFigFont{10}{12.0}{\familydefault}{\mddefault}{\updefault} 0}}}}
\put(1963,-3175){\makebox(0,0)[rb]{\smash{{\SetFigFont{10}{12.0}{\familydefault}{\mddefault}{\updefault} 0.05}}}}
\put(1963,-2726){\makebox(0,0)[rb]{\smash{{\SetFigFont{10}{12.0}{\familydefault}{\mddefault}{\updefault} 0.1}}}}
\put(1963,-2278){\makebox(0,0)[rb]{\smash{{\SetFigFont{10}{12.0}{\familydefault}{\mddefault}{\updefault} 0.15}}}}
\put(1963,-1830){\makebox(0,0)[rb]{\smash{{\SetFigFont{10}{12.0}{\familydefault}{\mddefault}{\updefault} 0.2}}}}
\put(1963,-1381){\makebox(0,0)[rb]{\smash{{\SetFigFont{10}{12.0}{\familydefault}{\mddefault}{\updefault} 0.25}}}}
\put(1963,-933){\makebox(0,0)[rb]{\smash{{\SetFigFont{10}{12.0}{\familydefault}{\mddefault}{\updefault} 0.3}}}}
\put(2038,-3748){\makebox(0,0)[b]{\smash{{\SetFigFont{10}{12.0}{\familydefault}{\mddefault}{\updefault} 0}}}}
\put(2767,-3748){\makebox(0,0)[b]{\smash{{\SetFigFont{10}{12.0}{\familydefault}{\mddefault}{\updefault} 0.1}}}}
\put(3496,-3748){\makebox(0,0)[b]{\smash{{\SetFigFont{10}{12.0}{\familydefault}{\mddefault}{\updefault} 0.2}}}}
\put(4226,-3748){\makebox(0,0)[b]{\smash{{\SetFigFont{10}{12.0}{\familydefault}{\mddefault}{\updefault} 0.3}}}}
\put(4955,-3748){\makebox(0,0)[b]{\smash{{\SetFigFont{10}{12.0}{\familydefault}{\mddefault}{\updefault} 0.4}}}}
\put(5684,-3748){\makebox(0,0)[b]{\smash{{\SetFigFont{10}{12.0}{\familydefault}{\mddefault}{\updefault} 0.5}}}}
\put(6413,-3748){\makebox(0,0)[b]{\smash{{\SetFigFont{10}{12.0}{\familydefault}{\mddefault}{\updefault} 0.6}}}}
\put(1481,-2216){\rotatebox{90.0}{\makebox(0,0)[b]{\smash{{\SetFigFont{10}{12.0}{\familydefault}{\mddefault}{\updefault}$R_1$ [bits/channel use]}}}}}
\put(4225,-3935){\makebox(0,0)[b]{\smash{{\SetFigFont{10}{12.0}{\familydefault}{\mddefault}{\updefault}$R_2$ [bits/channel use]}}}}
\put(3988,-3150){\makebox(0,0)[rb]{\smash{{\SetFigFont{10}{12.0}{\familydefault}{\mddefault}{\updefault}capacity / FDF-RS (joint)}}}}
\put(3988,-3309){\makebox(0,0)[rb]{\smash{{\SetFigFont{10}{12.0}{\familydefault}{\mddefault}{\updefault}FDF-RS (separate)}}}}
\put(3988,-3468){\makebox(0,0)[rb]{\smash{{\SetFigFont{10}{12.0}{\familydefault}{\mddefault}{\updefault}CDF}}}}
\end{picture}%
}
\caption{Rate region comparison for $1-H(\rho_0)=0.531$, $1-H(\rho_1)=0.714$, and $1-H(\rho_2)=0.278$}
\label{fig:comparison}
\end{figure}

In Fig.~\ref{fig:comparison}, we compare FDF-RS (joint), FDF-RS (separate), and CDF for the following channel parameters: $\rho_0=0.1$, $\rho_1=0.05$, and $\rho_2=0.2$. In this example, the FDF-RS (separate) achieves a rate region strictly larger than that of CDF, but both regions are strictly smaller than the capacity region which is achievable by FDF-RS (joint).

In Fig.~\ref{fig:capacity}, we fix $\rho_0=0.25$ and plot the range of $\rho_1$ and $\rho_2$ for which the capacity region is achieved by FDF-RS (separate) or CDF. The top-right corner corresponds to a noisier downlink ($\rho_1,\rho_2 > \rho_0$) , while the bottom-left corner to a noisier uplink ($\rho_0 > \rho_1, \rho_2$).

For the capacity region in Sec.~\ref{section:fdf-rs-joint}, we refer to the constraints \eqref{eq:capacity-binary-1} as the uplink constraints on the capacity region, and \eqref{eq:capacity-binary-2}-\eqref{eq:capacity-binary-3} the downlink constraints on the capacity region.

Using CDF, the relay needs to fully decode the users' messages on the uplink, and this restricts the sum rate to be constrained by the uplink, c.f. \eqref{eq:fdf-3}. When the uplink is noisy and is the channel bottleneck, the capacity region is effectively constrained by the uplink constraint \eqref{eq:capacity-binary-1}, which is strictly more relaxed than \eqref{eq:fdf-3}. So, CDF is not \emph{uplink optimized}.

However, when the downlink is noisy such that $H(\rho_0) \leq H(\rho_1) + H(\rho_2) - 1$, the capacity region is effectively constrained by the downlink constraints \eqref{eq:capacity-binary-2}-\eqref{eq:capacity-binary-3}, which is achievable by CDF, as shown in Lemma~\ref{lemma:cdf-optimal} and plotted in Fig.~\ref{fig:capacity}. We say that CDF is \emph{downlink optimized}.

\begin{figure}[t]
\centering
\resizebox{8.5cm}{!}{
\begin{picture}(0,0)%
\includegraphics{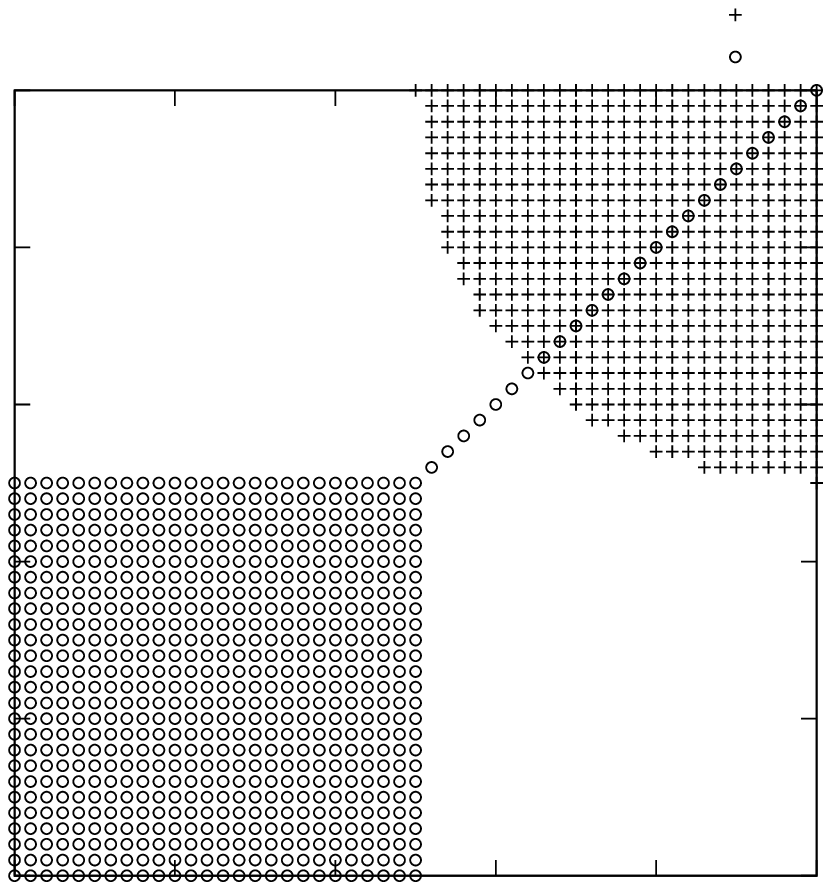}%
\end{picture}%
\setlength{\unitlength}{3947sp}%
\begingroup\makeatletter\ifx\SetFigFont\undefined%
\gdef\SetFigFont#1#2#3#4#5{%
  \reset@font\fontsize{#1}{#2pt}%
  \fontfamily{#3}\fontseries{#4}\fontshape{#5}%
  \selectfont}%
\fi\endgroup%
\begin{picture}(4620,4627)(1346,-3995)
\put(5168,509){\makebox(0,0)[rb]{\smash{{\SetFigFont{10}{12.0}{\familydefault}{\mddefault}{\updefault}capacity achieved by CDF}}}}
\put(1888,-3623){\makebox(0,0)[rb]{\smash{{\SetFigFont{10}{12.0}{\familydefault}{\mddefault}{\updefault} 0}}}}
\put(1888,-2869){\makebox(0,0)[rb]{\smash{{\SetFigFont{10}{12.0}{\familydefault}{\mddefault}{\updefault} 0.1}}}}
\put(1888,-2115){\makebox(0,0)[rb]{\smash{{\SetFigFont{10}{12.0}{\familydefault}{\mddefault}{\updefault} 0.2}}}}
\put(1888,-1361){\makebox(0,0)[rb]{\smash{{\SetFigFont{10}{12.0}{\familydefault}{\mddefault}{\updefault} 0.3}}}}
\put(1888,-607){\makebox(0,0)[rb]{\smash{{\SetFigFont{10}{12.0}{\familydefault}{\mddefault}{\updefault} 0.4}}}}
\put(1963,-3748){\makebox(0,0)[b]{\smash{{\SetFigFont{10}{12.0}{\familydefault}{\mddefault}{\updefault} 0}}}}
\put(2733,-3748){\makebox(0,0)[b]{\smash{{\SetFigFont{10}{12.0}{\familydefault}{\mddefault}{\updefault} 0.1}}}}
\put(3503,-3748){\makebox(0,0)[b]{\smash{{\SetFigFont{10}{12.0}{\familydefault}{\mddefault}{\updefault} 0.2}}}}
\put(4273,-3748){\makebox(0,0)[b]{\smash{{\SetFigFont{10}{12.0}{\familydefault}{\mddefault}{\updefault} 0.3}}}}
\put(5043,-3748){\makebox(0,0)[b]{\smash{{\SetFigFont{10}{12.0}{\familydefault}{\mddefault}{\updefault} 0.4}}}}
\put(5813,-3748){\makebox(0,0)[b]{\smash{{\SetFigFont{10}{12.0}{\familydefault}{\mddefault}{\updefault} 0.5}}}}
\put(1481,-1676){\rotatebox{90.0}{\makebox(0,0)[b]{\smash{{\SetFigFont{10}{12.0}{\familydefault}{\mddefault}{\updefault}$\rho_1$}}}}}
\put(3888,-3935){\makebox(0,0)[b]{\smash{{\SetFigFont{10}{12.0}{\familydefault}{\mddefault}{\updefault}$\rho_2$}}}}
\put(1770,137){\makebox(0,0)[b]{\smash{{\SetFigFont{10}{12.0}{\familydefault}{\mddefault}{\updefault} 0.5}}}}
\put(5168,307){\makebox(0,0)[rb]{\smash{{\SetFigFont{10}{12.0}{\familydefault}{\mddefault}{\updefault}capacity achieved by FDF-RS (separate)}}}}
\end{picture}%
}
\caption{This figure shows the regions of channel parameters $(\rho_1,\rho_2)$ for which the capacity region for $\rho_0=0.25$ is achieved by CDF and FDF-RS (separate). The capacity region for all $(\rho_1,\rho_2)$ can be achieved by FDF-RS (joint).}
\label{fig:capacity}
\end{figure}

Using FDF-RS (separate), the users' \emph{a priori} knowledge about their own messages is not utilized during the channel decoding on the downlink -- their own messages are used only \emph{after} channel decoding. So, FDF with separate source-channel decoding is not downlink optimized. This is why when the downlink is noisy ($\rho_1 > \rho_0$ or $\rho_2 > \rho_0$), FDF-RS (separate) fails to achieve the capacity region. An exception is when $\rho_1=\rho_2$, i.e., the downlink is \emph{symmetrical}, in this case, the equal rate point (common rate) marks a vertex of the capacity region and from Corollary~\ref{corollary:capacity}, we know that FDF with separate source-channel decoding achieves the common-rate capacity.

On the uplink, FDF-RS (separate) performs functional decoding at the relay and is able to achieve the uplink constraint on the capacity region. As shown in Lemma~\ref{lemma:fdf-separate-optimal} and plotted in Fig.~\ref{fig:capacity}, when the uplink is the channel bottleneck, FDF-RS (separate) achieves the capacity region.

From Fig.~\ref{fig:capacity}, we see that using both CDF and FDF-RS (separate) does not cover the capacity region for all channel settings. On the other hand, FDF-RS (joint) is both uplink and downlink optimized, and it achieves the capacity region for all channel settings.

\section{Conclusion}\label{section:conclusion}

We have proposed a functional-decode-forward (FDF) coding strategy with rate splitting and \emph{joint} source-channel decoding that achieves the capacity region of the multi-way relay channel (MWRC) over finite fields. For the special case where all users transmit at the same rate, our proposed FDF achieves the common-rate capacity of MWRCs over finite fields without requiring rate splitting or joint source-channel decoding.

Using the two-user binary MWRC as an example, we showed that both FDF with rate splitting and \emph{separate} source-channel decoding (denoted by FDF-RS (separate) in Figs.~\ref{fig:comparison} and~\ref{fig:capacity}), and complete-decode-forward (CDF) fail to achieve the capacity region of the MWRC as (i) for the former, users' messages are not utilized for channel decoding on the downlink and (ii) for the latter, the relay is constrained to decoding all users' messages. We noted that the shortcoming of CDF corresponds to the strength of FDF with rate splitting and separate source-channel decoding, and vice versa. However, as seen from Fig.~\ref{fig:capacity}, even considering both strategies does not cover the capacity region for all noise distributions.

Our proposed FDF with rate splitting and joint source-channel decoding overcomes these shortcomings by having the relay decode only functions of the source messages on the uplink, and having the users utilize their own messages in channel decoding on the downlink. This strategy indeed achieves the capacity regions of MWRCs over finite fields for all noise distributions. Our proposed coding strategy can be applied to the general multi-source multi-destination multi-relay network, where the relays facilitate data exchange among different source-destination pairs, but are themselves not required to decode the source messages.

\end{document}